\documentclass[11pt, letterpaper]{article}



\usepackage{comment}
\usepackage{fullpage}
\usepackage{microtype}
\usepackage{titling}
\setlength\thanksmarkwidth{.5em}
\setlength\thanksmargin{-\thanksmarkwidth}

\usepackage[utf8]{inputenc}
\usepackage{amsmath,amsthm,algorithm,appendix,calc,enumitem,mathtools}
\usepackage{mathrsfs}
\usepackage{graphicx}
\usepackage{comment}
\usepackage{titling}
\usepackage{physics}
\usepackage{relsize}

\usepackage{thm-restate}

\usepackage[pdftex, pagebackref, colorlinks = true, linktocpage=true]{hyperref}
\hypersetup{colorlinks,
    linkcolor=blue, 
    citecolor=ForestGreen,      
    urlcolor=red,    
}
\usepackage[nameinlink]{cleveref}

\usepackage[dvipsnames]{xcolor}
\usepackage{enumitem}
\usepackage{todonotes}

\usepackage{prettyref}

\usepackage[compact]{titlesec}

\usepackage[T1]{fontenc}
\usepackage[american]{babel}

\usepackage{float}
\usepackage{mathpazo}

\setlength{\textwidth}{6.5in} 
\setlength{\textheight}{9in}
\setlength{\evensidemargin}{-.1in}
\setlength{\oddsidemargin}{-.1in}


\usepackage[varg]{pxfonts} %
\usepackage[full]{textcomp}
\usepackage[scr=rsfso]{mathalfa}%
\usepackage{bm} %
\usepackage[T1]{fontenc}

\DeclareMathOperator*{\E}{\mathlarger{\mathbb{E}}}

\newcommand{\Z}{\mathbb{Z}}

\newcommand{\R}{\mathbb{R}}
\newcommand{\N}{\mathbb{N}}

\newcommand{\cH}{\mathop{\mathcal{H}}}
\renewcommand{\bar}{\overline}
\renewcommand{\epsilon}{\varepsilon}
\newcommand{\eps}{\varepsilon}

\usepackage{nicefrac}

\hbadness=10000
\vbadness=10000

\newcommand{\handout}[5]{
   \renewcommand{\thepage}{#1-\arabic{page}}
   \noindent
   \begin{center}
   \framebox{
      \vbox{
    \hbox to 5.78in { {\bf #1}
     	 \hfill #2 }
       \vspace{4mm}
       \hbox to 5.78in { {\Large \hfill #5  \hfill} }
       \vspace{2mm}
       \hbox to 5.78in { {\it #3 \hfill #4} }
      }
   }
   \end{center}
   \vspace*{4mm}
}

\newtheorem{ctr}{Counter}[section]

\newtheorem{theorem}[ctr]{Theorem}
\newtheorem{corollary}[ctr]{Corollary}
\newtheorem{lemma}[ctr]{Lemma}

\newtheorem{proposition}[ctr]{Proposition}
\newtheorem{definition}[ctr]{Definition}

\newtheorem{fact}[ctr]{Fact}

\newtheorem{conjecture}[ctr]{Conjecture}

\newtheorem{remark}[ctr]{Remark}


\newcommand{\iprod}[1]{\left\langle{#1}\right\rangle}


\newenvironment{proof-sketch}{\noindent{\bf Sketch of Proof:}\hspace*{1em}}{\qed\bigskip}
\newenvironment{proof-idea}{\noindent{\bf Proof Idea}\hspace*{1em}}{\qed\bigskip}
\newenvironment{proof-of-lemma}[1]{\noindent{\bf Proof of Lemma #1}\hspace*{1em}}{\qed\bigskip}
\newenvironment{proof-attempt}{\noindent{\bf Proof Attempt}\hspace*{1em}}{\qed\bigskip}




\makeatletter
\def\fnum@figure{{\bf Figure \thefigure}}
\def\fnum@table{{\bf Table \thetable}}
\long\def\@mycaption#1[#2]#3{\addcontentsline{\csname
  ext@#1\endcsname}{#1}{\protect\numberline{\csname
  the#1\endcsname}{\ignorespaces #2}}\par
  \begingroup
    \@parboxrestore
    \small
    \@makecaption{\csname fnum@#1\endcsname}{\ignorespaces #3}\par
  \endgroup}
\def\mycaption{\refstepcounter\@captype \@dblarg{\@mycaption\@captype}}
\makeatother

\newcommand{\mathify}[1]{\ifmmode{#1}\else\mbox{$#1$}\fi}
\newcommand{\bigO}O

\newcommand{\remove}[1]{}
\newcommand{\ignore}[1]{}




\def\Z{\mathbb{Z}}

\def\R{\mathbb{R}}



\def\cA{{\cal A}}

\def\cE{{\cal E}}

\def\cG{{\cal G}}
\def\cH{{\cal H}}
\def\cI{{\cal I}}

\def\cK{{\cal K}}
\def\cL{{\cal L}}

\def\cN{{\cal N}}

\def\cP{{\cal P}}
\def\cQ{{\cal Q}}
\def\cR{{\cal R}}

\def\cU{{\cal U}}

\def\cX{{\cal X}}






























\providecommand{\norm}[1]{\lVert #1 \rVert}

\renewcommand{\bar}{\overline}
\renewcommand{\epsilon}{\varepsilon}
\newcommand{\gam}{\gamma}
\newcommand{\sig}{\sigma}

\newcommand{\lam}{\lambda}
\newcommand{\Lam}{\Lambda}
\newcommand{\el}{\ell}
\newcommand{\al}{\alpha}

\newcommand{\unif}{\in_{\text{R}}}
\DeclareMathOperator{\Pois}{Pois}
\DeclareMathOperator{\Bin}{Bin}
\DeclareMathOperator{\Ber}{Ber}
\DeclareMathOperator{\sgn}{sgn}
\newcommand{\sg}{\ensuremath{\mathsf{SG}}}
\newcommand{\sgxi}{\ensuremath{\mathsf{SG}_\xi}}
\newcommand{\csp}{\ensuremath{\mathsf{CSP}_\Lam(\alpha)}}

\newcommand{\cspdens}[1]{\ensuremath{\mathsf{CSP}_\Lam({#1})}}

\DeclareMathOperator{\supp}{supp}

\newcommand{\toas}{\overset{\textnormal{a.s.}}{\to}}

\newcommand{\eqas}{\overset{\textnormal{a.s.}}{=}}

\DeclareMathSizes{10.95}{10.95}{8}{7.5}

\usepackage{parskip}

\DeclareMathOperator{\Stab}{Stab}

\linespread{1.08}

\definecolor{niceish}{HTML}{1cb5bd} 

\author{
   Chris Jones\thanks{
     Department of Computer Science, University of Chicago, Chicago, Illinois, USA. Supported by the National Science Foundation under Grant No. CCF-2008920.
     Email: \textcolor{red}{csj@uchicago.edu}.
   }\and 
   Kunal Marwaha\thanks{
     Department of Computer Science, University of Chicago, Chicago, Illinois, USA. 
     Supported by the National Science Foundation Graduate Research Fellowship Program under Grant No. DGE-1746045. 
     Email: \textcolor{red}{kmarw@uchicago.edu}.
   }\and 
   Juspreet Singh Sandhu\thanks{
     School of Engineering \& Applied Sciences, Harvard University, Cambridge, Massachusetts, USA.
    Supported by a Simons Investigator Fellowship, NSF grant DMS-2134157, NSF STAQ award PHY-1818914, DARPA grant W911NF2010021, DARPA ONISQ program award HR001120C0068 and DOE grant DE-SC0022199.
     Email: \textcolor{red}{jus065@g.harvard.edu}.
   }\and 
   Jonathan Shi\thanks{
    Department of Computer Science, Bocconi University, Milan, Italy.
    Supported by European Research Council (ERC) award No. 834861 (SO-ReCoDi).
     Email: \textcolor{red}{jonathan.shi@unibocconi.it}.
   }
 }
\title{Random Max-CSPs Inherit Algorithmic Hardness from Spin Glasses}
\date{\today}
\usepackage[normalem]{ulem} 

\begin{document}
\maketitle
\begin{abstract}
We study random constraint satisfaction problems (CSPs) at large clause density.
We relate the structure of near-optimal solutions for \emph{any} Boolean Max-CSP to that for an associated spin glass on the hypercube, using the Guerra-Toninelli interpolation from statistical physics.
The \emph{noise stability} polynomial of the CSP's predicate is, up to a constant, the mixture polynomial of the associated spin glass.
We show two main consequences:
\begin{enumerate}
    \item We prove that the maximum fraction of constraints that can be satisfied in a random Max-CSP at large clause density is determined by the ground state energy density of the corresponding spin glass.
    Since the latter value can be computed with the Parisi formula~\cite{parisi1980sequence, talagrand2006parisi,auffinger2017parisi}, 
    we provide numerical values for some popular CSPs.
    \item We prove that a Max-CSP at large clause density
    possesses generalized versions of the \emph{overlap gap property}
    if and only if the same holds for the corresponding spin glass. 
    We transfer results from \cite{huang2021tight} to obstruct 
    algorithms with \emph{overlap concentration} on a large class of Max-CSPs. This immediately includes local classical and local quantum algorithms \cite{chou2022limitations}.
\end{enumerate}
\end{abstract} 
\pagenumbering{roman}

\thispagestyle{empty}
\newpage

\thispagestyle{empty}
\setcounter{tocdepth}{2}
{
    \hypersetup{linkcolor=blue}
    \tableofcontents
    \thispagestyle{empty}
}
\thispagestyle{empty}
\newpage

\pagenumbering{arabic}

\section{Introduction}

In this work, we formalize a general and deep connection between two intensely studied classes of optimization problems: constraint satisfaction problems (CSPs), studied in computer science, and spin glass models, studied in statistical physics.
We demonstrate that as the clause density of the random CSP increases, the geometric properties of the set of nearly-optimal solutions converge to those of a corresponding spin glass model.
In these spin glass models, the very same geometric properties imply bounds on the average-case approximability achieved by broad classes of algorithms~\cite{alaoui2020optimization, gamarnik2020low, gamarnik2021overlap}; these bounds are conjectured to be the best possible among all polynomial-time algorithms~\cite{gamarnik2021topological, huang2021tight}.
The correspondence we establish here implies that the same lower bounds apply to average-case CSPs.

CSPs are paradigmatic computational tasks.
Their study has led to foundational results in computational hardness, approximability, and optimization~\cite{paschos2013paradigms}.
In recent years, we have learned more about CSPs through methods inspired by statistical physics, especially when the clauses
of the CSP are chosen randomly~\cite{ding2014satisfiability, ding2016maximum, dembo2017extremal, sen2018optimization, DingSlySun2022}. By identifying the solution quality of a variable assignment with the \emph{energy} of a configuration of particles, we can investigate ``physical''
properties of the CSP, such as phase transitions or solution clustering at different temperatures.
Surprisingly, these physical properties can have computational consequences.

We study random CSP instances with Boolean variables, random literal signs, and number
of constraints which is a large constant times the number of variables, such that each constraint acts on a constant number of variables.
If $n$ is the number of variables, then $m = \alpha n$ is the number of constraints for some constant $\alpha$. 
For large enough $\alpha$, the CSP is unsatisfiable with probability $1 - o_n(1)$.
Therefore the goal
is to find a variable assignment that maximizes the number of satisfied constraints,
and we think of these as \emph{Max-CSPs}.

Given a random Max-CSP, how many constraints can be satisfied? How are the best assignments distributed around the hypercube? Can we find these assignments with efficient algorithms?
Statistical physicists use questions like these to investigate the \emph{solution geometry} of a problem. 
Our main result connects the solution geometry of a Max-CSP (with large enough $\alpha$) to that of a \emph{spin glass}.
As a consequence, much of our mathematical and algorithmic understanding of spin glasses
transfers to CSPs at large clause density.

A \emph{spin glass} (more properly a \emph{mixed mean-field spin glass}) is a random system of $n$ particles (variables) specified by a \emph{mixture polynomial} $\xi(s) = \sum_{p \ge 1} c_p^2 s^p$.
In this model, the interaction strength between every $p$-tuple of particles is an independent Gaussian with variance $c_p^2 n^{1-p}$; this can be thought of as a randomly-weighted CSP on the complete $p$-uniform hypergraph.
We show that as the clause density of any random CSP increases ($\al \to \infty$),
the solution space starts to resemble that of a spin glass.
For example, Max-Cut on random graphs with large constant average degree qualitatively looks like the \emph{Sherrington-Kirkpatrick model}, where $c_2 = 1, \{c_i\}_{i \neq 2} = 0$~\cite{sherrington1975solvable, dembo2017extremal}. (Note that our proof applies to Max-2XOR instead of Max-Cut, so we do not recover this result exactly.)

\subsection{Main results}

Formally, we relate the \emph{free energy density} of a random Max-CSP instance to that of a particular spin glass.
The associated spin glass is determined only by the Fourier weights of the CSP. In fact, the mixture polynomial of the spin glass is, up to a constant, the \emph{noise stability polynomial} of the CSP (see \Cref{sec:notation} for definitions):
\begin{theorem}[Free energy density]
\label{thm:main_informal}
Generate a random CSP instance $\cI$ (with cost function $H_\cI$) consisting of $\alpha \cdot n$ independent and uniform constraints
of a predicate $f : \{\pm 1\}^k \to \{0,1\}$ with randomly signed literals.
Define the polynomial
\begin{align}
\label{eq:main-eqn}
\xi(s) 
=  \Stab_s(f) - \widehat{f}(\varnothing)^2 
= \sum_{j = 1}^k  \|f^{=j}\|^2 s^j\,,
\end{align}
where $\Stab_s(f)$ is the noise stability polynomial of $f$, and $\|f^{=j}\|^2$ is the Fourier weight of $f$ at degree $j$.
Generate a random spin glass instance $H^{\xi}$ with mixture polynomial $\xi$. 

Let $\beta > 0$, and define $Z_{\cI}(\beta)=\sum_{\sigma\in\{\pm 1\}^n} H_{\cI}(\sigma)$ and $Z_{\sgxi{}}(\beta)=\sum_{\sigma\in\{\pm 1\}^n} H^{\xi}(\sigma)$ as the respective partition functions. Then
\begin{align}
\frac{1}{\beta n}\log Z_{\mathcal{I}}(\beta)  = \widehat{f}(\varnothing) +
    \frac{1}{\beta n}\frac{\log Z_{\sgxi{}}(\beta)}{\sqrt{\al}}
    + O\left(\tfrac{\beta^2}{\alpha^{2}}\right) +
    o_n\left(1\right)\,.
\end{align}
where the second-to-last term (which may depend on $n$) satisfies $\abs{O\left(\tfrac{\beta^2}{\alpha^{2}}\right)} \leq C \cdot \tfrac{\beta^2}{\al^2}$ whenever $\tfrac{\beta}{\al} \leq \eps_0$ for absolute constants $C, \eps_0 >0$,
and the last term is random and is 
$o_n\left(1\right)$ w.h.p.
\end{theorem}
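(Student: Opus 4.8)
The plan is to run the Guerra--Toninelli interpolation between the CSP cost function and the spin glass, using a Fourier expansion of the predicate to force the first two moments of the two models to coincide. First, Fourier-expand a single random constraint: on a uniformly chosen (ordered, distinct) $k$-tuple of variables with i.i.d.\ uniform literal signs, its value equals $\widehat f(\varnothing)+\sum_{S\neq\varnothing}\widehat f(S)\,\chi_S(\,\cdot\,)$. Hence $H_{\cI}$ decomposes into a deterministic piece carrying the constant $\widehat f(\varnothing)$ and a mean-zero fluctuating field $\widetilde H_{\cI}$ --- mean-zero because averaging over the literal signs annihilates every non-constant character. The key computation is the covariance of $\widetilde H_{\cI}$: the sign average leaves only the diagonal terms $\widehat f(S)^2\,\chi_S(\sigma)\chi_S(\tau)$, and averaging over the variable indices replaces $\prod_{j\in S}\sigma_{i_j}\tau_{i_j}$ by $R(\sigma,\tau)^{|S|}$ up to an $O(1/n)$ error from the distinctness of the $k$ indices, where $R(\sigma,\tau)=\tfrac1n\langle\sigma,\tau\rangle$ is the overlap. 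Summing over the $\alpha n$ constraints, the covariance kernel of $\widetilde H_{\cI}$ is a model-dependent scalar multiple of $\sum_{j\ge1}\|f^{=j}\|^2 R^j+O(1/n)=\xi(R)+O(1/n)$ --- exactly a rescaling of the mixture polynomial of $H^{\xi}$, which is the content of \eqref{eq:main-eqn}. The factor $\tfrac1{\sqrt\alpha}$ in the statement is the ratio of the two Hamiltonians' scales: $\widetilde H_{\cI}$ is an aggregate of $\alpha n$ independent bounded interactions, while $H^{\xi}$ is the corresponding Gaussian field on $n$ spins.

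With this in hand I would interpolate on $\tfrac1{\beta n}\E\log Z$. Introduce $t\in[0,1]$ and a family $H_t$ that morphs the $\alpha n$ discrete CSP interactions into the matching Gaussian field with covariance kernel $n\xi(R)$ --- the Franz--Leone variant of the Guerra--Toninelli path, in which one removes discrete interactions one at a time and compensates with Gaussian increments. Differentiating in $t$, the contributions of the matched first moments and the matched covariances cancel (this cancellation is precisely what the Fourier identity buys), so the remainder is governed by third and higher moments of the individual discrete interactions. Each interaction enters the exponent with coefficient of size $O(\beta/\alpha)$ --- in the theorem's normalization each constraint carries weight $\tfrac1\alpha$ so that the mean energy density is $\widehat f(\varnothing)$, and $e^{\beta g_a/\alpha}-1=\tfrac\beta\alpha g_a+O((\beta/\alpha)^2)$ --- so the per-interaction non-Gaussian correction is $O((\beta/\alpha)^3)$; there are $\alpha n$ of them, and after the $\tfrac1{\beta n}$ normalization one checks that these integrate to an error of the stated order $O(\beta^2/\alpha^2)$, valid once $\beta/\alpha\le\epsilon_0$ (this hypothesis keeps both the Taylor remainder of $e^{\beta g_a/\alpha}$ and the effective spin-glass temperature in a range where the interpolation estimate is uniform).

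Two bookkeeping steps then close the argument. The $O(1/n)$ covariance mismatch (distinctness of the $k$ indices, plus the $O(1/n)$ bias of the annealed term) enters the interpolation derivative as an $O(1/n)$ term and is absorbed into the $o_n(1)$. To upgrade the bound on $\tfrac1{\beta n}\E\log Z_{\cI}(\beta)$ to the high-probability statement I would use bounded-differences/Azuma concentration: resampling one constraint moves $H_{\cI}$ by at most $\tfrac1\alpha$ and hence $\log Z_{\cI}$ by at most $\beta/\alpha$, so over $\alpha n$ constraints $\tfrac1{\beta n}\log Z_{\cI}$ concentrates within $O(1/\sqrt{\alpha n})=o_n(1)$; the spin-glass free energy concentrates similarly by Gaussian concentration.

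I expect the interpolation estimate itself --- not the Fourier or concentration steps --- to be the main obstacle. Because one endpoint of the path is a genuinely non-Gaussian model of bounded dilute interactions, Gaussian integration by parts is unavailable, and one must control the per-interaction ``replacement'' error while the interpolating Gibbs measure $\langle\,\cdot\,\rangle_t$ is an arbitrary disorder-dependent measure on the cube. The delicate points are (i) verifying that the first two cumulants of the two models agree \emph{exactly} --- this is what the random literal signs provide, and any mismatch would dominate the claimed error --- so that only third-and-higher cumulants contribute; and (ii) organizing the resulting sum over interactions and overlap monomials so that it collapses to a clean $O(\beta^2/\alpha^2)$ uniformly in $n$ and in the admissible range of $(\beta,\alpha)$, rather than a bound that degrades with the arity $k$ or the number of constraints. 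A related subtlety is that the dense limit of the diluted CSP is most naturally the spin glass at a rescaled inverse temperature, so writing the answer in the form $\tfrac1{\sqrt\alpha}\cdot\tfrac1{\beta n}\log Z_{\sgxi}(\beta)$ must be justified within the $\beta/\alpha\le\epsilon_0$ window, where the relevant free-energy functions are controlled by their leading expansions.
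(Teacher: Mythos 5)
Your proposal follows essentially the same route as the paper's actual proof: a Guerra--Toninelli interpolation in which discrete CSP constraints are traded for Gaussian increments (the paper implements the ``remove one interaction at a time'' step via the Poisson-rate derivative $\partial_\lambda \E_{X\sim\Pois(\lambda)}[g(X)]=\E[g(X+1)-g(X)]$, which is the continuous form of your Franz--Leone step), with the random literal signs annihilating the first moment, the noise-stability identity matching the covariance to $n\xi(R)/\alpha$, and the third-and-higher-cumulant remainder yielding exactly the $O(\beta^2/\alpha^2)$ budget you compute; the high-probability upgrade is likewise handled by bounded-differences for the Poissonized CSP and Gaussian concentration for the spin glass. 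The only cosmetic difference is that the paper samples the $k$ variable indices i.i.d.\ with replacement, so the $O(1/n)$ distinctness correction you flag never arises (the $o_n(1)$ term instead comes from passing between the Poisson and fixed-$m$ models).
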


Prior work relates the free energy density of specific CSPs such as Max-$k$XOR \cite{dembo2017extremal, sen2018optimization} and Max-$k$SAT \cite{panchenko2018ksat} to that of a spin glass. \Cref{thm:main_informal} generalizes this connection to any random Max-CSP with randomly signed literals.

The asymptotic equivalence of the free energy density implies the equivalence of several properties of
the solution geometry for large enough $\alpha$.
We show two specific implications of \Cref{thm:main_informal}. The first is that the optimal value of a random Max-CSP in the large clause density limit can be found with a spin glass calculation. 

\begin{restatable}{corollary}{optimumValue}
\label{cor:maximum_value}
\textnormal{(Optimal value equivalence).}
Generate a random CSP instance $\cI$ consisting of $\alpha \cdot n$ independent and uniform constraints
of a predicate $f : \{\pm 1\}^k \to \{0,1\}$ with randomly signed literals.
Let $v_{\cI}$ be the maximum fraction of constraints of $\cI$ that can be satisfied.
Let $\xi$ be defined as in \Cref{eq:main-eqn}, and $GSED(\sgxi{})$ as the (non-random) ground state energy density of the associated spin glass. Then
\begin{align}\label{eq:sat-fraction}
v_{\cI} = \widehat{f}(\varnothing) +
\frac{GSED(\sgxi{})}{\sqrt{\alpha}}
+ o\left(\tfrac{1}{\sqrt{\alpha}}\right)\,.
\end{align}
where the last term is at most $\kappa(n, \al)$  w.h.p for a function $\kappa(n, \al)$ satisfying $\kappa(n,\al)\cdot \sqrt{\al} \to 0$ as $n \to \infty$ then $\al \to \infty$.
\end{restatable}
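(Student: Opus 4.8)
The plan is to derive \Cref{cor:maximum_value} from the free energy statement \Cref{thm:main_informal} by sending the inverse temperature $\beta$ to infinity at a rate tied to $\alpha$, since the ground state energy density is exactly the zero-temperature limit of the free energy density. Concretely, for any cost function $H$ on $\{\pm 1\}^n$ the elementary inequalities $\max_\sigma e^{\beta H(\sigma)} \le \sum_\sigma e^{\beta H(\sigma)} \le 2^n \max_\sigma e^{\beta H(\sigma)}$ give, for every $\beta > 0$,
\[
  \tfrac1n \max_\sigma H(\sigma) \;\le\; \tfrac{1}{\beta n}\log Z_H(\beta) \;\le\; \tfrac1n \max_\sigma H(\sigma) + \tfrac{\log 2}{\beta}\,.
\]
Applying this to $H = H_\cI$ (where, by the normalization of the cost function fixed in \Cref{sec:notation}, $\tfrac1n\max_\sigma H_\cI(\sigma) = v_\cI$) and separately to $H = H^{\xi}$, we get $\tfrac{1}{\beta n}\log Z_{\cI}(\beta) = v_\cI + O(1/\beta)$ and $\tfrac{1}{\beta n}\log Z_{\sgxi}(\beta) = \tfrac1n\max_\sigma H^{\xi}(\sigma) + O(1/\beta)$, with both error terms nonnegative and at most $\tfrac{\log 2}{\beta}$.

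Next I would record the concentration of the spin glass ground state: $\max_\sigma H^{\xi}(\sigma)$ is a $1$-Lipschitz function of the (scaled) Gaussian disorder, so by Gaussian concentration it is close to its mean, and by the Parisi formula for the ground state its normalized mean converges to $GSED(\sgxi)$ (cf.\ \cite{auffinger2017parisi}); hence $\tfrac1n\max_\sigma H^{\xi}(\sigma) = GSED(\sgxi) + o_n(1)$ with high probability. Substituting this and the two displays above into \Cref{thm:main_informal} (valid whenever $\beta/\alpha \le \eps_0$) gives, with high probability,
\[
  v_\cI \;=\; \widehat f(\varnothing) + \frac{GSED(\sgxi)}{\sqrt\alpha} + O\!\left(\tfrac1\beta\right) + O\!\left(\tfrac{1}{\beta\sqrt\alpha}\right) + O\!\left(\tfrac{\beta^2}{\alpha^2}\right) + o_n(1)\,.
\]
It remains to choose $\beta$. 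Taking $\beta = \beta(\alpha) := \alpha^{2/3}$ (any exponent strictly between $\tfrac12$ and $\tfrac34$ works, and this satisfies $\beta/\alpha \le \eps_0$ once $\alpha$ is large) makes all three $\beta$-dependent errors $O(\alpha^{-2/3}) = o(\alpha^{-1/2})$. Setting $\kappa(n,\alpha)$ equal to the sum of these errors and the $o_n(1)$ terms, we get $v_\cI = \widehat f(\varnothing) + GSED(\sgxi)/\sqrt\alpha + \kappa(n,\alpha)$ w.h.p.; and since for fixed $\alpha$ the quantity $\beta(\alpha)$ is a constant, the $o_n(1)$ contributions vanish as $n \to \infty$, so $\kappa(n,\alpha)\sqrt\alpha \to 0$ in the iterated limit $n \to \infty$ then $\alpha \to \infty$, as the statement requires.

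The only delicate point — and the one I would be most careful about — is the interaction of the three limits. \Cref{thm:main_informal} is only informative for $\beta \ll \alpha$ and moreover carries an error $O(\beta^2/\alpha^2)$ that \emph{grows} in $\beta$, while the free-energy-to-ground-state approximation needs $\beta$ \emph{large}; one has to verify that the window $\sqrt\alpha \ll \beta \ll \alpha^{3/4}$ is nonempty and that both error sources are $o(1/\sqrt\alpha)$ there. In addition, the $o_n(1)$ error in \Cref{thm:main_informal} may a priori depend on $\beta$ (hence, after the substitution $\beta = \beta(\alpha)$, on $\alpha$); this is harmless precisely because the statement takes $n \to \infty$ first with $\alpha$ fixed, so $\beta$ is a genuine constant while that limit is taken. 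Everything else is routine.
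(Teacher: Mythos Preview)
Your proposal is correct and follows essentially the same route as the paper: the paper proves a more general statement (\Cref{prop:max-value}) for arbitrary $(A,b)$-coupled models and overlap sets $S$ by combining \Cref{thm:overlap-geometry} with \Cref{fact:maxvalue} and the choice $\beta = \alpha^{2/3}$, then specializes to $\ell = \ell' = 1$, $S = \cR^{(1)}$ to obtain \Cref{cor:maximum_value}. Your treatment of the limit interaction and the use of Gaussian concentration plus the Parisi formula to replace $\tfrac1n\max_\sigma H^\xi(\sigma)$ by $GSED(\sgxi)$ is exactly what the paper leaves implicit in the phrase ``we conclude \Cref{cor:maximum_value}.''
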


Computing the minimum value, or ground state energy, of a spin glass can famously
be done using the \emph{Parisi formula}.
In \Cref{sec:optimalvalue}, we use the Parisi formula to compute $GSED(\sgxi{})$ for several common CSPs (our code is available online).

The second implication relates to algorithmic hardness.\footnote{For this implication, we need to boost \Cref{thm:main_informal} to interpolate the free energy density restricted to any given overlap and correlation structure.} 
Intuitively, when global minima are located in clusters, some algorithms cannot efficiently find them.
A recent body of work (starting with \cite{gamarnik2014limits}) has pinpointed hardness on the presence of an \emph{overlap gap property (OGP)}, showing how this property obstructs noise-stable algorithms from $(1-\eps)$-approximating
average-case instances.
We show that a flexible version of the OGP, called the \emph{branching OGP}~\cite{huang2021tight}, exists on a spin glass exactly when it exists on the associated Max-CSP at large enough clause density. 
As a result, the techniques that obstruct algorithms on certain spin glasses also work on the corresponding Max-CSPs.

\begin{corollary}[OGP equivalence, informal]
\label{cor:ogp}
Take any Max-CSP. Consider the associated spin glass \sgxi{}, where $\xi$ is defined as in \Cref{eq:main-eqn}. Then \sgxi{} exhibits an OGP at value $v$ if and only if the Max-CSP exhibits an OGP at value $\widehat{f}(\varnothing) + \frac{v}{\sqrt{\alpha}}$ for all sufficiently large $\alpha$.
\end{corollary}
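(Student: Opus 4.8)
The branching OGP of~\cite{huang2021tight} is at heart a statement about a \emph{constrained} optimization: it asserts the non-existence of a family of configurations indexed by the leaves of a fixed rooted tree such that every configuration has energy at least the target value, while the pairwise overlaps of the configurations are pinned to values prescribed by the tree (a deeper common ancestor forcing a larger overlap). The plan is to reformulate ``$\sgxi{}$ has the branching OGP at value $v$'' (resp.\ ``the Max-CSP has it at value $\widehat f(\varnothing)+v/\sqrt\alpha$'') as a \emph{strict} sign condition on an associated \emph{constrained free energy functional} --- the exponential growth rate, as $\beta\to\infty$, of the partition function summed only over leaf-tuples obeying the overlap (and correlation) constraints of the tree --- and then to show that this functional is the same for the two models, up to the same centering and $1/\sqrt\alpha$ rescaling as in \Cref{thm:main_informal}.

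The technical core is to boost the Guerra--Toninelli interpolation underlying \Cref{thm:main_informal} to this multi-replica, overlap-constrained setting. Concretely, one runs the same interpolating Hamiltonian between $H_\cI$ and $H^\xi$ simultaneously on all the replicas carried by the tree, while leaving the overlap/correlation constraints on the domain of summation untouched. Since the derivative of the interpolated free energy in the interpolation parameter produces only pairwise two-replica terms --- the same terms already controlled in the proof of \Cref{thm:main_informal}, with the non-Gaussianity of the CSP clauses handled by the same Lindeberg-type replacement --- the constraints affect the \emph{domain} of the sum but not the smoothing computation, and one obtains
\[
\Phi_{\cI}\!\left(\widehat f(\varnothing)+\tfrac{v}{\sqrt\alpha}\right)
= \widehat f(\varnothing) + \frac{\Phi_{\sgxi{}}(v)}{\sqrt\alpha} + O\!\left(\tfrac{\beta^2}{\alpha^2}\right) + o_n(1)
\]
for the constrained free energy functionals $\Phi$, with the error uniform over the finitely parametrized family of admissible trees and overlap profiles. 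Taking $n\to\infty$ and then $\beta\to\infty$ converts $\Phi$ into the constrained ground-state value that detects the OGP.

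Finally, because the branching OGP corresponds to a strict inequality --- the constrained ground-state value falls strictly below the target by a fixed gap that does not shrink with $\alpha$ --- and because the CSP functional equals $\widehat f(\varnothing)$ plus $1/\sqrt\alpha$ times the spin-glass functional plus errors that are lower order in $1/\alpha$, the sign of the relevant inequality agrees for the two models once $\alpha$ is large enough; this yields both directions of the ``if and only if.'' Feeding the resulting CSP branching OGP into the framework of~\cite{huang2021tight}, transported through the overlap-concentration machinery, then produces the algorithmic obstruction advertised in the abstract.

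The main obstacle will be the uniformity in the boosted interpolation: one must verify that the $O(\beta^2/\alpha^2)$ and $o_n(1)$ error terms of \Cref{thm:main_informal} survive the restriction to an arbitrary tree structure and overlap profile, that nothing degenerates as these profiles approach the boundary of the admissible region, and that the $n\to\infty$ and $\beta\to\infty$ limits can be taken in the correct order without the errors swamping the $\Theta(1/\sqrt\alpha)$ OGP gap. A secondary subtlety is matching the precise overlap-and-correlation bookkeeping on the two sides: because the CSP has randomly signed literals, the relevant statistics are the pair overlaps $\tfrac1n\langle\sigma^a,\sigma^b\rangle$ rather than anything sign-sensitive, exactly as for the hypercube spin glass, but this correspondence must be made explicit in the tree data.
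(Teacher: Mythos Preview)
Your proposal is correct and follows essentially the same route as the paper: the paper's \Cref{thm:overlap-geometry} is precisely the ``boosted'' Guerra--Toninelli interpolation you describe, comparing the free energy density of an $(A,b)$-coupled ensemble of $\ell$ replicas restricted to an arbitrary open overlap set $S \subseteq \cR^{(\ell)}$, and \Cref{prop:max-value} then sends $\beta \to \infty$ (choosing $\beta = \alpha^{2/3}$) to obtain the constrained ground-state equality from which the OGP equivalence (\Cref{prop:inherit-overlap-gaps}, \Cref{prop:S-OGP}) follows by the strict-inequality argument you give. The only packaging difference is that the paper abstracts the tree data into a single linear-coupling matrix $(A,b)$ and a single constraint set $S$, so uniformity over trees and overlap profiles is automatic once $\ell$ is fixed.
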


The formal class of algorithms we obstruct is as follows~\cite{huang2021tight}.
Consider two correlated instances with correlation parameter $t \in [0,1]$.
A deterministic\footnote{A randomized algorithm is overlap-concentrated if it is overlap-concentrated for every fixing of the randomness.} algorithm is \emph{overlap-concentrated} if for every $t$, the overlap (a.k.a the Hamming distance) between the two output assignments for the two correlated instances is concentrated inside a narrow interval.
This is a notion of noise robustness that
many commonly-used algorithms have, including  approximately Lipschitz algorithms on spin glasses~\cite{huang2021tight} and local classical and local quantum algorithms on random Max-CSPs~\cite{chou2022limitations}.
Additionally, survey propagation with a constant number of message-passing rounds likely has this property~\cite{braunstein2005survey, gamarnik2021topological, bresler2022algorithmic}.

\begin{theorem}[OGP obstructs algorithms with overlap concentration, informal]
\label{cor:ogp_and_overlap_concentration_obstructed}
Consider an algorithm $\cA$ with \emph{overlap concentration}. Then $\cA$ cannot output arbitrarily good approximate solutions on instances of Max-CSPs which exhibit an OGP.
\end{theorem}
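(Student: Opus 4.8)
The plan is to argue the contrapositive, mirroring the interpolation argument of \cite{huang2021tight} on the spin-glass side and pulling it back to CSPs via \Cref{thm:main_informal} and \Cref{cor:ogp}: assuming the branching OGP holds for the Max-CSP, show that no overlap-concentrated algorithm $\cA$ can, with non-negligible probability, output a $(1-\eps)$-optimal assignment. The first ingredient is a correlated ensemble of CSP instances. For a correlation parameter $t \in [0,1]$ one couples two random instances $\cI(0), \cI(t)$ so that each marginal is the original random CSP but they share structure in a tunable way (e.g.\ resample each clause independently with probability $1-t$, or pass to the spin-glass disorder and use a Gaussian interpolation in $t$); iterating this along the internal nodes of a depth-$D$ branching tree produces an ensemble $\{\cI_w\}$ indexed by the leaves $w$, with the correlation between $\cI_u$ and $\cI_w$ a prescribed function of the tree distance between $u$ and $w$. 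This is exactly the object the branching OGP speaks about.

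The second ingredient, flagged by the paper's footnote, is to boost \Cref{thm:main_informal} to the \emph{constrained} free energy: for a fixed constellation of overlaps among configurations placed on the tree leaves and a fixed correlation profile among the $\cI_w$, the constrained free energy of the coupled CSP ensemble equals $\widehat{f}(\varnothing)\cdot(\#\text{leaves}) + \tfrac{1}{\sqrt{\alpha}}\cdot$(constrained free energy of the coupled spin glass $H^\xi$) $+ O(\beta^2/\alpha^2) + o_n(1)$. This follows by running the Guerra--Toninelli interpolation jointly over all coupled copies: the interpolation is linear in the number of replicas, and the only cross terms are pairwise overlaps and instance-correlations, precisely the quantities being held fixed. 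Combined with a first-moment/large-deviations estimate on the number of near-optimal tuples realizing each overlap constellation, this transfers the branching OGP for $H^\xi$ (known for the relevant mixtures $\xi$ by \cite{huang2021tight}) to the branching OGP for $\cI$ at value $\widehat{f}(\varnothing) + v/\sqrt{\alpha}$ for all large $\alpha$ — that is, \Cref{cor:ogp}.

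The contradiction is then standard. Run $\cA$ on every $\cI_w$ in the ensemble. Each $\cI_w$ is marginally distributed as the original random CSP, so $\cA$ succeeds on a single $\cI_w$ with non-negligible probability; since the tree has a constant number of leaves, a union bound (using robustness of near-optimality, which follows from concentration of the optimum value in \Cref{cor:maximum_value} or directly from bounded-differences / Gaussian concentration of $H_{\cI}$) gives that with non-negligible probability $\cA$ simultaneously outputs a $(1-\eps)$-optimal assignment for all of them. Overlap concentration of $\cA$ — applied at each correlation level along the tree — then forces the pairwise overlaps of the outputs $\{\cA(\cI_w)\}$ to concentrate on values dictated by the tree correlation structure, i.e.\ the outputs realize exactly the "rugged" overlap constellation that the branching OGP forbids among near-optimal tuples. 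Contradiction. (Overlap concentration for local classical and local quantum algorithms is supplied by \cite{chou2022limitations}, making the obstruction concrete for those classes.)

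The step I expect to be the main obstacle is the first two together: building a correlated CSP ensemble whose marginals are genuinely the original distribution, whose correlation knob behaves like the spin-glass interpolation parameter, and for which the joint Guerra--Toninelli interpolation of \Cref{thm:main_informal} still goes through with the overlaps and correlations pinned and with the error terms $O(\beta^2/\alpha^2) + o_n(1)$ controlled uniformly over the finitely many constrained profiles. Once this constrained free-energy equivalence is in hand, the passage from branching OGP to algorithmic obstruction, and the overlap-concentration-to-overlap-constellation implication, are essentially verbatim from \cite{huang2021tight}.
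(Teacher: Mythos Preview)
Your proposal is essentially the paper's approach: build a tree-coupled CSP ensemble, transfer the branching OGP from the spin glass via a generalized (overlap- and correlation-constrained) Guerra--Toninelli interpolation, and then run the \cite{huang2021tight} contradiction. The ``main obstacle'' you flag is exactly what the paper handles by defining an $(A,b)$-coupled model (\Cref{defn:coupled-model}) and proving the interpolation uniformly for it (\Cref{thm:overlap-geometry}); the error terms depend only on $\ell$, so the finitely many constrained profiles are no issue.

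Two places where your sketch is looser than the paper. First, your ``union bound'' for simultaneous success only yields the weak conclusion that $\cA$ does not succeed \emph{with high probability}; the paper's formal statement (\Cref{thm:branchingogp_stops_csp}) gets the stronger $p = o_n(1)$ via the bound $\Pr[\cE] \ge p^{\ell}$, which is not a union bound but a positive-correlation argument on the tree ensemble taken from \cite[Proposition~3.6(a)]{huang2021tight}. Second, and more substantively, you assert that overlap concentration forces the outputs to realize ``exactly the rugged overlap constellation that the branching OGP forbids,'' but this is not automatic: overlap concentration only pins the pairwise overlaps to $\chi(\rho_{u,v})$ for the algorithm's correlation function $\chi$, and one must \emph{choose} the tree correlation parameters $\vec{p}$ so that $(\chi(\rho_{u,v}))_{u,v}$ lands in the forbidden set. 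This is where the paper needs $\chi$ to be a genuine correlation function in the sense of \Cref{defn:corr-function} (continuous, increasing, $\chi(0)=0$, $\chi(1)=1$, $\chi(t)\le t$), and $\chi(0)=0$ fails unless $\E[\cA(\cI)]=0$. The paper handles this with a CSP-specific debiasing step (\Cref{app:debiasing}, \Cref{prop:debias-csp}) that burns the first $n/\log n$ clauses to extract random sign flips --- an adaptation that is \emph{not} verbatim from \cite{huang2021tight} and relies on the random literal signs in the model. You should not fold this into ``essentially verbatim.''
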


Our proof of \Cref{cor:ogp_and_overlap_concentration_obstructed} follows that of Huang and Sellke~\cite{huang2021tight}, who prove the same result on spin glasses. Only a few changes are needed to transfer their result to Max-CSPs.
Note that this result also applies to CSPs with small $\al$ if they exhibit an OGP.

It is known that an OGP (specifically, a branching OGP) exists with high probability on spin glasses with even mixture polynomials without quadratic terms~\cite{chen2019suboptimality, huang2021tight}. Combining this with the results that we have stated, we conclude the following:
\begin{corollary}[informal]
\label{cor:obstruct-even-csps}
Consider a random Max-CSP with a predicate $f$ such that the only nonzero Fourier coefficients of $f$ have even degree $j \ge 4$. 
For almost all instances of the Max-CSP, no algorithm with overlap concentration can output $(1-\epsilon)$-approximately optimal solutions for all $\epsilon > 0$.
\end{corollary}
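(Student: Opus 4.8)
The plan is to assemble \Cref{cor:obstruct-even-csps} from three facts already in hand: the OGP-transfer of \Cref{cor:ogp}, the known existence of a branching OGP on even spin glasses with no quadratic term~\cite{chen2019suboptimality, huang2021tight}, and the algorithmic obstruction of \Cref{cor:ogp_and_overlap_concentration_obstructed}. First I would read off the associated mixture polynomial from \Cref{eq:main-eqn}: if $f$ has nonzero Fourier mass only at even degrees $j \ge 4$, then
\[
\xi(s) = \sum_{\substack{4 \le j \le k \\ j \text{ even}}} \|f^{=j}\|^2 s^j
\]
is a nonzero even polynomial with vanishing quadratic coefficient. These are exactly the mixed even $p$-spin models for which \cite{chen2019suboptimality, huang2021tight} establish, with high probability over the disorder, a branching OGP at some energy value $v^\star$ strictly below the ground state energy density $GSED(\sgxi{})$ (in the energy normalization of \Cref{cor:maximum_value}).

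Next I would transfer this to the CSP. Fixing the clause density $\alpha$ to be a sufficiently large constant and invoking \Cref{cor:ogp} — whose formalization uses the overlap-and-correlation-structure refinement of \Cref{thm:main_informal} flagged in the paper's footnote — the branching OGP on $\sgxi{}$ at value $v^\star$ yields a branching OGP on the random Max-CSP $\cI$ at value $\widehat{f}(\varnothing) + v^\star/\sqrt{\alpha}$, with high probability over $\cI$. By \Cref{cor:maximum_value} the optimum is $v_{\cI} = \widehat{f}(\varnothing) + GSED(\sgxi{})/\sqrt{\alpha} + o(1/\sqrt{\alpha})$; since $v^\star < GSED(\sgxi{})$ is a fixed constant and the error term is $\kappa(n,\alpha)$ with $\kappa \sqrt{\alpha} \to 0$, for $\alpha$ large enough the OGP value $\widehat{f}(\varnothing) + v^\star/\sqrt{\alpha}$ sits a constant multiplicative factor below $v_{\cI}$. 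Thus the Max-CSP exhibits an OGP at a near-optimal value, and applying \Cref{cor:ogp_and_overlap_concentration_obstructed} directly gives that no algorithm with overlap concentration can output arbitrarily good approximate solutions on a $1-o(1)$ fraction of instances — i.e., no such algorithm is $(1-\epsilon)$-approximate for every $\epsilon > 0$. Collecting quantifiers yields the claimed statement for almost all instances.

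The chaining itself is routine bookkeeping, and everything about the spin-glass side (that an even, quadratic-free mixture has a branching OGP below its ground state) and about the optimality constant is imported verbatim from \cite{chen2019suboptimality, huang2021tight} and \Cref{cor:maximum_value}. The one point that needs genuine care — and the step I expect to be the main obstacle — is making sure the version of \Cref{cor:ogp} we invoke really transfers the branching OGP, with its tree of correlated instances and prescribed pairwise overlaps, at the relevant value, rather than merely matching the $n\to\infty$ limit of the free energy density; this is what the overlap-constrained interpolation promised in the footnote to \Cref{cor:ogp} is for, and it is there that the high-probability events over the Gaussian disorder and over the CSP instance must be coupled through the Guerra--Toninelli interpolation.
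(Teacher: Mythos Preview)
Your proposal is correct and follows essentially the same route as the paper: identify that the hypothesis on $f$ forces $\xi$ to be even with no quadratic term, import the branching OGP from \cite{chen2019suboptimality,huang2021tight} at a value strictly below the ground state (the paper names this value $\mathsf{ALG}$ via \Cref{thm:alg-opt}), transfer it to the CSP via the overlap-constrained interpolation (formalized as \Cref{prop:S-OGP} and \Cref{remark:branching_ogp}), compare with the optimum via \Cref{cor:maximum_value}, and then invoke the obstruction \Cref{cor:ogp_and_overlap_concentration_obstructed}. The technical point you flag --- that the transfer must handle the full tree-coupled ensemble and its prescribed pairwise overlap constraints, not merely a single free energy --- is exactly what the paper addresses by casting the branching OGP as an $S$-OGP on an $(A,b)$-coupled model.
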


For example, this unconditionally obstructs overlap concentrated algorithms from approximating random 4XOR instances.

\Cref{cor:ogp} makes progress on~\cite[Problem 9.2 (arXiv version)]{chou2022limitations},
asking which CSPs exhibit an OGP at large clause density.
A full characterization of spin glasses with an OGP, and thereby CSPs in the large $\al$ limit, is not complete (although \emph{spherical spin glasses} have been classified~\cite[Proposition 1]{subag18}).
For example, the Sherrington-Kirkpatrick model is strongly suspected to have no OGP,
but this is not fully proven~\cite{auffinger2020sk}.

We cannot help but mention that these topological properties of the solution space (i.e. the solution
geometry) may \emph{precisely characterize} the algorithmic approximability of spin glasses and random CSPs~\cite{achlioptas2008algorithmic}.
For spin glasses, the following is known:
\begin{theorem}
\label{thm:alg_precise_characterize}
For all spin glasses \sgxi{}, there is a value $ALG$ (given by an extended Parisi formula) such that:
\begin{enumerate}
    \item \cite{montanari2021optimization, alaoui2020optimization, sel21} 
    Assume that the minimizer of the extended Parisi functional for \sgxi{} exists. Then for all $\eps > 0$, there is an efficient algorithm that outputs a solution with value $ALG - \eps$ on almost all instances.
    \item \cite{huang2021tight} Assume the mixture polynomial $\xi$ is \emph{even}. For all $\eps > 0$, the spin glass exhibits a branching OGP with value $ALG + \eps$, which therefore obstructs overlap-concentrated algorithms from achieving this value on almost all instances.
\end{enumerate}
\end{theorem}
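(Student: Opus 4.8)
The plan is to assemble \Cref{thm:alg_precise_characterize} from two essentially disjoint lines of work, so the argument is a synthesis rather than something new. First fix $ALG$ to be the optimal value of the extended Parisi variational problem associated to $\xi$ \cite{alaoui2020optimization, sel21}, equivalently the value of the Hamilton--Jacobi / stochastic optimal control problem that governs the analysis of message-passing algorithms. A soft argument gives $ALG \le GSED(\sgxi{})$ for every $\xi$ (no algorithm beats the optimum), and equality is conjectured to hold exactly when $\sgxi{}$ has no OGP; the content of the theorem is that \emph{both} sides of this potential gap are pinned to the \emph{same} functional.

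For part (1) I would run the incremental approximate message passing (IAMP) scheme of \cite{montanari2021optimization} and its extensions \cite{alaoui2020optimization, sel21}. Discretize $[0,1]$ as $0 = q_0 < \cdots < q_T = 1$ and build an AMP recursion producing iterates $m^0, \ldots, m^T \in \R^n$ whose nonlinearities are read off from a near-minimizer of the extended Parisi functional; the hypothesis that this minimizer exists is precisely what lets one define the drift cleanly. State-evolution concentration then shows that $\tfrac1n\|m^T\|^2$ concentrates near $1$ and that the energy of the rounded point $\sgn(m^T)$ concentrates around $ALG$ up to an $O(1/T)$ discretization loss; taking $T = T(\eps)$ large yields value $\ge ALG - \eps$ with high probability over the instance. (El Alaoui--Montanari--Sellke carry this out for even mixtures and Sellke removes the evenness assumption.)

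For part (2) I would first invoke the branching OGP of Huang--Sellke for even $\xi$ \cite{huang2021tight}, building on \cite{chen2019suboptimality}: for every $\eta > 0$ there is an ultrametric branching family of correlated copies of $\sgxi{}$ admitting no \emph{solution tree} at energy $ALG + \eta$, that is, one cannot attach an $(ALG+\eta)$-near-optimal configuration to each node of the tree so that the pairwise overlaps follow the prescribed ultrametric pattern. This is proved by a multi-replica first-moment (interpolation) bound whose constrained feasibility region is controlled by a Parisi-type functional that becomes infeasible exactly for target energies above $ALG$. Given this branching OGP, the obstruction is the spin-glass specialization of \Cref{cor:ogp_and_overlap_concentration_obstructed}: if $\cA$ is overlap-concentrated, its outputs along the branching family have pairwise overlaps that are concentrated and continuous in the correlation parameters, so an intermediate-value / topological argument over the tree forces the forbidden ultrametric pattern among $(ALG+\eps)$-near-optimal outputs, a contradiction. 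Hence no overlap-concentrated algorithm attains $ALG + \eps$ on almost every instance.

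I expect the main obstacle to be the tight \emph{matching} of the two thresholds --- that the single value $ALG$ is at once the IAMP-achievable value and the onset of the branching OGP. Each direction reduces to an equality between an ``algorithmic'' (unconstrained control) Parisi-type optimum and a ``constrained branching'' Parisi-type optimum, a duality established in \cite{huang2021tight}; on the algorithmic side one additionally needs enough regularity of the minimizer to drive IAMP all the way to $ALG$ and not merely to a strict lower bound. I also expect the evenness hypothesis in part (2) to be genuinely needed here, since sign symmetry is what keeps the branching moment computation tractable, so extending the hardness half to general $\xi$ remains open.
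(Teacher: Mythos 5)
This is not a theorem the paper proves; it is stated as background and supported entirely by citation to \cite{montanari2021optimization, alaoui2020optimization, sel21} for part~(1) and \cite{huang2021tight} for part~(2), with no proof given in the paper itself. Your proposal correctly recognizes that the statement is a synthesis of those two lines of work, and your summaries of the IAMP construction (discretize $[0,1]$, read nonlinearities off a near-minimizer of the extended Parisi functional, lose $O(1/T)$ in discretization) and of the Huang--Sellke branching-OGP argument (multi-replica ultrametric ensemble, constrained free-energy infeasibility above $ALG$, intermediate-value/topological argument against overlap-concentrated outputs) are accurate accounts of the cited proofs. You also correctly flag the substantive point --- that the same value $ALG$ appears on both sides --- as the nontrivial matching that \cite{huang2021tight} establishes, and you correctly note that evenness of $\xi$ is genuinely used in part~(2). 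So your approach is essentially the same as the paper's, which is to import the result from the literature; the only caveat is that a complete proof would require reproducing the full arguments of those references, which neither you nor the paper attempt.
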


It is likely the case that the same holds for CSPs with sufficiently large $\alpha$.
Specifically, there are explicitly computable constants $ALG \leq OPT$ for any CSP predicate $f$ such that 
the optimum value of the CSP is about $\widehat{f}(\emptyset) + \frac{OPT}{\sqrt{\al}}$~(\Cref{cor:maximum_value}),
but the threshold for efficient algorithms appears to be $\widehat{f}(\emptyset) + \frac{ALG}{\sqrt{\al}}$.
In this paper we prove that the lower bound transfers (Part 2).
The upper bound (Part 1) is known for Max-Cut by~\cite{alaoui2021local},
and can likely be generalized to an arbitrary CSP (in a similar way that~\cite{alaoui2020optimization} generalizes~\cite{montanari2021optimization}
for spin glasses).
Existence of an OGP may also characterize approximability of random CSPs for small $\al$,
but proving this is significantly more difficult.

The paper is organized as follows.
We provide formal definitions and additional motivation in \Cref{sec:notation}.
In \Cref{sec:mainproof}, we give the main interpolation between every Max-CSP and a related 
spin glass.
In \Cref{sec:optimalvalue}, we show how this implies equivalence of optimal value~(\Cref{cor:maximum_value}), and list numerical approximations to optimal values of several common Max-CSPs.
In \Cref{sec:ogp_from_freeenergy}, we prove that the main interpolation implies the equivalence of OGPs~(\Cref{cor:ogp}). 
In \Cref{sec:hardness}, we prove that an OGP obstructs overlap-concentrated algorithms on Max-CSPs (\Cref{cor:ogp_and_overlap_concentration_obstructed}) and conclude \Cref{cor:obstruct-even-csps}.
We close with a discussion in \Cref{sec:discussion}. 
The appendices contain some technical proofs. 

\subsection{Related work}

\paragraph*{Spin glasses.}
Spin glasses as a state of matter have been studied since the early 20th century.
They were first considered as metallic alloys with many ground states, in which the magnetic spins of the individual particles in the alloy
are frustrated (i.e. many nearby spins are mismatched).
In the language of CSPs,
spin glasses have
exponentially many near-optimal solutions, in which many constraints are unsatisfied.
In this work, we reserve the term ``spin glass'' for \emph{mean-field} spin glass models,
where ``mean-field'' means that all pairwise (or higher arity) interactions between particles are present.
The spins (i.e. the possible values for each variable) are Boolean (also called \emph{Ising}) for all spin glasses that we consider,
although in other physical and mathematical settings they may be $[q]$-valued or vector-valued.

The Sherrington-Kirkpatrick model
is an early mathematical model of a spin glass~\cite{sherrington1975solvable}.
It was solved (by deriving an explicit formula for the free energy density) by Parisi~\cite{parisi1980sequence}, but the solution relied on non-mathematically rigorous physical arguments. A later series of works \cite{guerra2003broken, talagrand2006parisi, panchenko2013sherrington} proved that the formula is correct for all mixed spin glass models.
Its numerical value was carefully approximated for the Sherrington-Kirkpatrick model \cite{Crisanti_2002} and more recently for other mixed spin glasses~\cite{alaouialgorithmic, marwaha_kxor}.

\paragraph*{Random CSPs through the lens of statistical physics.}
Statistical physicists have studied random instances of combinatorial optimization problems since at least the 1980s~\cite{Fu_1986, mezard1987spin}.
A partial ``dictionary'' converting between the language of computer science and physics
is provided in~\cite[Table 1]{chou2022limitations}.

Several modern works use the Guerra-Toninelli interpolation~\cite{guerra2004high} in a similar technical way as our work; the interpolation method is by now a standard tool in spin glass theory.
Dembo, Montanari, and Sen~\cite{dembo2017extremal} applied the interpolation to prove that the size of the Max-Cut and Max-Bisection in a random $d$-regular or Erd\H{o}s-R{\'e}nyi graph is related to the Sherrington-Kirkpatrick model in the same way as \Cref{cor:maximum_value}.
The interpolation was later used (with different spin glass models)
to determine the optimal value of random Max-$k$SAT~\cite{panchenko2018ksat}, Max-$k$XOR and Max $q$-cut~\cite{sen2018optimization} instances in the highly unsatisfiable regime.
Compared to these works, our \Cref{thm:main_informal} generalizes the CSP predicate to arbitrary mixtures of Boolean functions, but they must have random signs on the literals (e.g. Max-$2$XOR instead of Max-Cut; this is used in the debiasing argument (\Cref{app:debiasing}) and to simplify a technical part of the proof (\Cref{eqn:whyweneedrandomsigns}). We are not certain which of our results extend without random signs.
We also extend the Guerra-Toninelli interpolation (in \Cref{thm:overlap-geometry}) to transfer more properties of the solution geometry than just the optimal value; this is exactly what allows us to compare results on algorithmic hardness.

In this paper, we study the \emph{highly unsatisfiable} regime, where the number of clauses of the CSP is $\alpha n$ for some large constant $\alpha$. 
When the clause density $\alpha$ is smaller, for example near the satisfiability threshold of the CSP, the exact connection with spin glasses breaks down, and existing results are less unified.
Nonetheless, methods inspired by statistical physics continue to give powerful insight into the solution structure of these CSPs~\cite{achlioptas2002asymptotic,achlioptas2006solution,panchenko2004bounds,ding2014satisfiability, ding2016maximum, DingSlySun2022}.

\paragraph*{Overlap gaps.}
When near-optimal solutions are clustered, it becomes impossible for many algorithms to find them.
From a geometric perspective, we can't ``move'' from one cluster to others without passing through a lower-value assignment. This general phenomenon was named the \emph{overlap gap property (OGP)} and it was shown to obstruct local algorithms~\cite{gamarnik2014limits}.
Further generalizations of overlaps
\cite{chen2019suboptimality, huang2021tight} show stronger obstructions on wider classes of algorithms.
The OGP and its generalizations have been used to obstruct algorithms from finding near-optimal solutions of various quantities in mixed spin glasses~\cite{gamarnik2020low, gamarnik2021overlap, huang2021tight, sel21}, CSPs~\cite{chen2019suboptimality, bresler2022algorithmic, chou2022limitations}, sparse random graphs~\cite{gamarnik2014limits, rahman2017local}, and matrices~\cite{gamarnik2018finding, ogp_pca}. 
See \cite{gamarnik2021topological, huang2022computational} for a survey of overlaps and solution geometry.

\paragraph*{Optimizing spin glasses and random Max-CSPs.}
Recently, \cite{alaoui2020optimization, sel21} showed that a type of approximate message-passing algorithm finds the ground state energy of spin glasses without overlap gaps.
Under the same assumption, a version of this algorithm was also shown to be optimal on Max-Cut for sparse random graphs with constant (but sufficiently large) degree \cite{alaoui2021local}. 
Both the Sherrington-Kirkpatrick model and Max-Cut on sparse random graphs are strongly suspected to have no overlap gaps~\cite{auffinger2020sk}.

There has been some study of a near-term quantum algorithm (the QAOA \cite{farhi2014quantum}) optimizing spin glasses, with recently-proven rigorous performance bounds~\cite{Claes_2021, Farhi2022quantumapproximate}. In fact, for large enough clause density, the performance of the QAOA is \emph{identical} on a random instance of Max-$k$XOR and on its corresponding spin glass~\cite{boulebnane2021, basso2021quantum, zhou_qaoa_hyper}.

\section{Preliminaries}
\label{sec:notation}
\subsection{Random constraint satisfaction problems (CSPs)}

A CSP instance, denoted $\cI$,
consists of $n$ variables
and a set of constraints, denoted $E(\cI)$.
In a random CSP instance, the constraints $E(\cI)$
are drawn from a distribution $\Lambda$ on functions $f : \Sigma^k \to \R$.

\begin{definition}[Instance of a random CSP]
\label{defn:CSP}
Let $\Lambda$ be a distribution on functions $f : \Sigma^k \to [-1,1]$ with alphabet $\Sigma = \{\pm 1\}$. Fix a constant $\alpha > 0$. Then, a random CSP over $n$ variables $\{\sigma_i\}_{i=1}^n$
and $m = \alpha n$ clauses is generated by:
for each $i = 1, \dots, m$, draw $i_1, \dots, i_k$ uniformly i.i.d from $[n]$,
    draw $f \sim \Lambda$, draw $k$ random signs $\eps_i$ uniformly i.i.d from $\{\pm 1\}$,
    and add the constraint $e$ to $E(\cI)$, describing the clause $f_e(\sigma_e) := f(\eps_1 \sigma_{i_1}, \dots, \eps_k \sigma_{i_k})$.
\end{definition}

\begin{remark}\label{rmk:random-signs}
    The canonical case is to take $\Lam$ which is supported on a single predicate $f : \Sigma^k \to \{0,1\}$.
    For example, the OR predicate corresponds to $k$SAT.
    Our proofs apply to the more general setting of \Cref{defn:CSP}, but
    it does not apply to $|\Sigma| > 2$ or instances without random signs.
    Note that the constraints are scaled so that the output of $f$ is always in the interval $[-1,+1]$.
    Predicates of mixed arity may be simulated using $k$-ary predicates that ignore some input variables. 
\end{remark}
We denote this random model as $\csp$. We study the \emph{highly unsatisfiable} regime; for example, one can think of $\alpha \gg \exp(k)$.

In this work, we use the language of \emph{Hamiltonians}.
In other settings this quantity may be called the objective function, the value, or the score of the assignment.

\begin{definition}[CSP Hamiltonian]
\label{defn:csp_hamiltonian}
Consider a $\csp$ instance $\cI$ on $[n]$. For any input $\sigma \in \{\pm 1\}^n$ let
\begin{align}
 H^{\alpha}(\sigma) = \frac{1}{\alpha}\sum_{e \in E(\cI)}f_e(\sigma_e)\,.
\end{align}
\end{definition}

\begin{remark}
    We divide by $\alpha$ so that, regardless of the value of $\alpha$, the value of the Hamiltonian 
    is in the same interval $[-n, +n]$.
\end{remark}
\begin{definition}[Optimal value of a CSP instance]
Define the maximum or optimal value of a $\csp$ instance $\cI$ by:
\begin{align}
    v_{\cI} = \frac{1}{n}\max_{\sigma \in \{\pm 1\}^n} H^{\alpha}(\sigma)\,.
\end{align}
\end{definition}
When the predicates in the CSP are 0/1-valued, the maximum value of a CSP instance $\mathcal{I}$ is the maximum possible fraction of constraints that can be satisfied.

\begin{remark}
By a union bound, with high probability all assignments to a random CSP with predicate $f$
satisfy $\widehat{f}(\varnothing) + O(\tfrac{1}{\sqrt{\alpha}})$ fraction of constraints.
The purpose of our work is to precisely study the behavior of the $\frac{1}{\sqrt{\alpha}}$ term.
\end{remark}

\begin{remark}
    The constraints of an instance of $\csp{}$ are slightly dependent
    because of the fixed number of edges $m = \alpha n$.
    In the proof we will pass to the ``Poisson model'' in which the number of edges is $\Pois(\al n)$, or equivalently,
    the multiplicity of each constraint is an i.i.d $\Pois\left(\tfrac{\al}{2^k n^{k-1}}\right)$ random variable.
\end{remark}

\subsection{Mean-field spin glasses}

Let $n$ denote the number of particles or variables in the system. We introduce the mathematical objects describing a spin glass:
\begin{definition}[Mixture polynomial]
A \emph{mixture polynomial} is $\xi(s) = \sum_{p \ge 1}^k c_p^2 s^p$ for some nonnegative coefficients $c_p^2$.
\end{definition}

\begin{definition}[Gaussian disorder]
The \emph{disorder coefficients} $J^{(p)}$ are an order-$p$ tensor where each axis has length $n$. When $J^{(p)}$ is a Gaussian tensor, we say that there is \emph{Gaussian disorder}. Specifically, for each $(i_1, \dots, i_p) \in [n]^p$, we have $J^{(p)}[i_1, \dots, i_p] \sim \cN(0,1)$ i.i.d.\footnote{We follow the definition in~\cite[Chapter 2]{panchenko2013sherrington}. 
Some definitions of Gaussian disorder use a \emph{symmetric} Gaussian tensor, which is equivalent up to scaling of the $c_k$ and $o_n(1)$ change in the free energy density. This is because the models only differ on tensor entries $[i_1, \dots, i_p]$ such that $i_1, \dots, i_p$ are not all distinct, which only make up $o_n(1)$ fraction of all entries
of the tensor.}
\end{definition}

We now have enough to define our spin glasses.
The mixture polynomial specifies the random model, and the disorder coefficients determine the instance of the model. 
For example, $\xi(s) = s^2$ (or $\xi(s) = s^2/2$ in some works) specifies the well-studied \emph{Sherrington-Kirkpatrick spin glass model}.

\begin{definition}[Finite mixed spin glass]
Fix a mixture polynomial $\xi$.
A \emph{spin glass} is a random Hamiltonian $H^\xi :\{\pm 1\}^n \to \R$ given by
sampling Gaussian disorder $J := (J^{(p)} : p = 1,2,\dots)$, where the $J^{(p)}$ are independent,
then 
\begin{align}
H^{\xi}(\sigma) = \sum_{p = 1}^k \frac{c_p}{n^{(p-1)/2}} \iprod{J^{(p)}, \sigma^{\otimes p}} = \sum_{p = 1}^k \frac{c_p}{n^{(p-1)/2}} \sum_{(i_1,\dots,i_p) \in [n]^p} J^{(p)}[i_1,\dots,i_p] \sigma_{i_1} \dots \sigma_{i_p}\,.
\end{align}
We say that $H^\xi$ is an \emph{instance} of the random model \sgxi{}.
\end{definition}

$\sgxi{}$ is also known as the \emph{mixed $p$-spin model}\footnote{The $p = 1$ term is a Gaussian external field. Some spin glass models are instead defined with a \emph{fixed} external field.} (in contrast to the pure $p$-spin model, in which the mixture polynomial is $s^p$, i.e. all interactions have the same size $p$).

An early motivator of spin glass theory was the conjecture (and eventual proof) of the following representation for the ground state energy of the model:

\begin{theorem}[Parisi formula \cite{parisi1980sequence, guerra2003broken, talagrand2006parisi, panchenkomixed, auffinger2017parisi}]
\label{thm:parisi-formula}
Fix any spin glass model \sgxi{}. 
Sample a sequence of independent instances $\{H^{\xi}_{n=1}, H^\xi_{n=2},...\}$ from \sgxi{} for increasing values of $n$. Then the limit
\begin{align}
    GSED(\sgxi{}) := \lim_{n\to\infty} \max_{\sigma\in\{\pm 1\}^n} \frac{1}{n}H^{\xi}_{n}(\sigma)\, ,
\end{align}
almost surely exists. Furthermore,
\begin{align}
    GSED(\sgxi{}) &\eqas
    \min_{\zeta \in \cU}\cP^{\xi}_{\infty}(\zeta)\,,
\end{align}
where $\cP^{\xi}_{\infty}$ is the Parisi functional at zero temperature with mixture polynomial $\xi$.
The Parisi functional at zero temperature and the set $\cU$ are defined in \Cref{defn:parisivariational}.
\end{theorem}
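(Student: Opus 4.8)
The plan is to first establish the Parisi formula for the \emph{free energy density} at every inverse temperature $\beta > 0$, and then obtain the statement about $GSED(\sgxi{})$ by letting $\beta \to \infty$. Write $Z_n(\beta) = \sum_{\sigma \in \{\pm 1\}^n} e^{\beta H^{\xi}_n(\sigma)}$ and $F_n(\beta) = \frac{1}{\beta n}\log Z_n(\beta)$. The map $J \mapsto \frac{1}{n}\max_\sigma H^{\xi}_n(\sigma)$ is Lipschitz in the Gaussian disorder with constant $O(1/\sqrt{n})$ (bound $\|\sigma^{\otimes p}\|_2$ and sum the mixture coefficients), so Gaussian concentration gives deviation probability $\exp(-\Omega(n t^2))$ around the mean; the same bound holds for $F_n(\beta)$. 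Convergence of the means $\mathbb{E} F_n(\beta)$ follows from the Guerra–Toninelli superadditivity argument: interpolating between a system on $n_1 + n_2$ spins and the disjoint union of independent systems on $n_1$ and $n_2$ spins, the $t$-derivative of the interpolating free energy has a definite sign, so $n \mapsto n\, \mathbb{E} F_n(\beta)$ is superadditive and $\mathbb{E} F_n(\beta) \to f(\beta)$. Concentration then upgrades this to an almost sure limit, and since $\frac{1}{n}\max_\sigma H^{\xi}_n \le F_n(\beta) \le \frac{1}{n}\max_\sigma H^{\xi}_n + \frac{\log 2}{\beta}$, the almost sure existence of $GSED(\sgxi{}) = \lim_n \frac{1}{n}\max_\sigma H^{\xi}_n(\sigma)$ follows, with $GSED(\sgxi{}) = \lim_{\beta\to\infty} f(\beta)$.

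\textbf{Upper bound (Guerra replica-symmetry-breaking interpolation).}
Fix a functional order parameter $\zeta$, a probability distribution on $[0,1]$, and realize it by a Ruelle probability cascade. Interpolate between the original Hamiltonian and a cascade-indexed Gaussian field: at time $t$ use $\sqrt{t}\,H^{\xi}_n(\sigma) + \sqrt{1-t}\,(\text{cascade field})$, together with the standard compensating term on the cascade side. Differentiating the interpolating free energy in $t$ and integrating by parts in the Gaussians produces a remainder controlled by $\mathbb{E}\langle \xi(R_{12}) - \xi(q) - \xi'(q)(R_{12}-q)\rangle$-type expressions; convexity of $\xi$ (for general, possibly odd, mixtures one invokes Panchenko's refinement of the sign analysis, since the naive derivative need not be signed) shows this remainder is $\le 0$. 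Hence $f(\beta) \le \mathcal{P}^{\xi}_\beta(\zeta)$ for every $\zeta$, so $f(\beta) \le \inf_\zeta \mathcal{P}^{\xi}_\beta(\zeta)$.

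\textbf{Lower bound (cavity / Aizenman–Sims–Starr plus ultrametricity) and zero-temperature limit.}
For the matching bound, add a small mixed-$p$-spin perturbation to $H^{\xi}_n$ (negligible for the free energy) that forces the Gibbs measure to satisfy the Ghirlanda–Guerra identities; by Panchenko's ultrametricity theorem the limiting overlap structure is then ultrametric and therefore generated by a Ruelle probability cascade with some parameter $\zeta^\ast$. Writing $f(\beta) = \lim_n \frac{1}{\beta}\big(\mathbb{E}\log Z_{n+1}(\beta) - \mathbb{E}\log Z_n(\beta)\big)$ via the ASS scheme and evaluating both cavity pieces against this cascade structure gives $f(\beta) \ge \mathcal{P}^{\xi}_\beta(\zeta^\ast) \ge \inf_\zeta \mathcal{P}^{\xi}_\beta(\zeta)$ (Talagrand's original argument gives this directly for even $\xi$; Panchenko's for all $\xi$). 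Combined with the upper bound, $f(\beta) = \inf_\zeta \mathcal{P}^{\xi}_\beta(\zeta)$. Finally, reparametrizing $\zeta$ by $\beta\zeta$ and passing to the limit in the Parisi PDE shows $\inf_\zeta \mathcal{P}^{\xi}_\beta(\zeta) \to \min_{\zeta\in\cU}\mathcal{P}^{\xi}_\infty(\zeta)$ as $\beta\to\infty$ (the Auffinger–Chen zero-temperature limit); together with $f(\beta)\to GSED(\sgxi{})$ this yields $GSED(\sgxi{}) \eqas \min_{\zeta\in\cU}\mathcal{P}^{\xi}_\infty(\zeta)$.

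\textbf{Main obstacle.}
The upper bound is relatively soft once the Ruelle cascade machinery is set up; the hard part is the matching lower bound, and within it the proof that the asymptotic overlap distribution is genuinely ultrametric — i.e. establishing the Ghirlanda–Guerra identities under the perturbation and then invoking Panchenko's ultrametricity theorem — and carrying out the cavity computation so that it reproduces the Parisi functional exactly rather than merely bounding it. A secondary technical obstacle is handling mixtures $\xi$ with odd-degree terms, where Guerra's interpolation derivative is not manifestly signed and Panchenko's correction is needed; and, at the end, justifying the interchange of the $n\to\infty$ and $\beta\to\infty$ limits (done via the uniform sandwich $\frac1n\max_\sigma H \le F_n(\beta) \le \frac1n\max_\sigma H + \frac{\log 2}{\beta}$ and Lipschitz control of the Parisi functional in $\beta$).
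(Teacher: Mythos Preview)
The paper does not prove this theorem at all: \Cref{thm:parisi-formula} is stated as a known background result, with the citation block \texttt{[parisi1980sequence, guerra2003broken, talagrand2006parisi, panchenkomixed, auffinger2017parisi]} serving in lieu of a proof, and is used only as a black box (for the numerical computations in \Cref{sec:optimalvalue}). So there is no ``paper's own proof'' to compare against.

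That said, your outline is an accurate high-level summary of the proof as it appears in the cited references: Guerra--Toninelli superadditivity for existence of the limit; Guerra's RSB interpolation against Ruelle cascades for the upper bound; Panchenko's ultrametricity theorem plus the Aizenman--Sims--Starr cavity computation for the lower bound (with Talagrand covering the even case earlier); and the Auffinger--Chen zero-temperature limit to pass from $\cP^\xi_\beta$ to $\cP^\xi_\infty$. Your identification of the main obstacle (the lower bound via ultrametricity, and the sign issue for odd mixtures) is also the consensus view. The only quibble is that your sketch glosses over substantial technical work at each step, but as a roadmap of where the cited proofs live it is correct.
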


We state one other fact about spin glasses. 
One way to characterize a spin glass is as a Gaussian process on the state space $\{\pm 1\}^n$ with covariance structure given by the mixture polynomial $\xi$. The covariance of the Gaussians $H^\xi(\sigma_1), H^\xi(\sigma_2)$ is a function of the normalized inner product, or \emph{overlap}, of $\sigma_1$ and $\sigma_2$.
\begin{fact}[Gaussian characterization of a spin glass]\label{fact:cov-gaussian-process}
    Let $\{H^{\xi}(\sigma)\}_{\sigma \in \{\pm1\}^n}$ be a spin glass instance of the random model \sgxi{}. Then these variables are jointly Gaussian, with mean zero and covariance
    \begin{align}
         \E\left[H^{\xi}(\sigma_1)H^{\xi}(\sigma_2)\right] =   n\cdot\xi\left(\frac{\langle \sigma_1, \sigma_2\rangle}{n}\right)\, ,
    \end{align}
    where $\langle\cdot\,, \cdot\rangle$ is the standard inner product.
\end{fact}

\subsection{Solution geometry of optimization problems}
\label{sub:solution-geometry}

\emph{Solution geometry} refers to the distribution of the optimal and near-optimal solutions on the Boolean hypercube. 
Given a Hamiltonian $H : \{\pm 1\}^n \to \R$, what do we mean by ``near-optimal solutions''? We consider two notions.
The first, used when considering computational tasks, is the set of $\sigma$ such that $H(\sigma) \geq v$ for some value $v$.
The second, a ``smoother'' notion from statistical physics,
is that near-optimal solutions are samples from a low-temperature Gibbs distribution.

\begin{definition}[Gibbs distribution]
Given a Hamiltonian $H: \{\pm 1 \}^n \to \R$ and an \emph{inverse temperature} parameter $\beta \geq 0$, the \emph{Gibbs distribution} is the distribution
on $\{\pm 1\}^n$ with probability proportional to
$e^{\beta H(\cdot)}$.
\end{definition}
\begin{definition}[Partition function]
The 
\emph{partition function} of $H$ is the normalizing constant
of the Gibbs distribution, i.e. the exponentially-weighted sum
\begin{align}
    Z_H(\beta) = \sum_{\sigma \in \{\pm1\}^n} e^{\beta H(\sigma)}\,.
\end{align}
\end{definition}

The most important geometric notion is the \emph{overlap} of two assignments, which is exactly the normalized inner product and is linearly proportional to the Hamming distance between the assignments:
\begin{definition}[Overlap]
\label{defn:normalizedoverlap}
    The \emph{overlap} between two assignments (configurations) $\sigma_1, \sigma_2 \in \{\pm 1\}^n$ is the normalized inner product,
    \begin{align}
        R(\sigma_1,\sigma_2) = \frac{\langle \sigma_1, \sigma_2 \rangle}{n} \, .
    \end{align}
\end{definition}
Note that the overlap is always in $[-1,1]$.
We study the distribution of $R(\sigma_1, \sigma_2)$ when $\sigma_1, \sigma_2$ are independently sampled near-optimal solutions.
Roughly speaking, a gap in the support of this distributions (i.e. an \emph{overlap gap}) implies obstructions for some types of algorithms.

We also study pairwise overlaps between more than two samples. An example is the distribution of the 3-tuple of overlaps in a ``triangle'' of independent near-optimal solutions $\sigma_1, \sigma_2, \sigma_3$.
Topological gaps in this distribution will also obstruct algorithms.
The generalization of overlap used in this paper is the \emph{$I$-overlap}; given a set $\bm{\sigma}$ of $|\bm{\sigma}| = \ell$ assignments, each in $\{\pm 1\}^n$, we define an overlap value for every subset of $\bm{\sigma}$.

\begin{definition}[$I$-overlap]
\label{defn:multi-overlap}
Consider any $\bm{\sigma} := ({\sigma}_1, \dots, {\sigma}_{\ell}) \in (\{\pm 1\}^n)^{\ell}$ and any $I \subseteq [\ell]$. Define the \emph{$I$-overlap} of $\bm{\sigma}$ as
    \begin{align}
        R_I(\bm{\sigma}) = \frac{1}{n} \sum_{j=1}^n \prod_{i \in I} ({\sigma}_i)_j\,.
    \end{align}
\end{definition}
The $I$-overlap recovers the definition of overlap when $I$ has two elements. 

We define an \emph{overlap vector} to associate a set of $\ell$ assignments with all of its possible $I$-overlaps:
\begin{definition}[Overlap vector]
\label{defn:overlap-vector}
Consider any $\bm{\sigma} := ({\sigma}_1, \dots, {\sigma}_{\ell}) \in (\{\pm 1\}^n)^{\ell}$. Then its overlap vector $Q(\bm{\sigma}) \in [-1,1]^{2^{[\ell]}}$ lists all possible $I$-overlaps; that is, $Q(\bm{\sigma})_I = R_I(\bm{\sigma})$ for every $I \in 2^{[\ell]}$. 
\end{definition}

\begin{definition}[Overlap polytope]
Consider the set of all possible overlap vectors $Q(\bm{\sigma})$ for any positive $n \in \mathbb{N}$ and sets of vectors $\bm{\sigma} \in (\{\pm 1\}^n)^\ell$. Then $\cR^{(\ell)} \subseteq [-1,1]^{2^{[\ell]}}$ is the closure of this set. Formally,
\begin{align}
    \cR^{(\ell)} = \overline{\{Q(\bm{\sigma}): n \in \N \mathrel{,} \bm{\sigma} \in (\{\pm 1\}^n)^{\ell}\}}\,.
\end{align}

\end{definition}
\begin{remark}
    Since $R_\varnothing(\bm{\sigma}) = 0$ for all $\bm{\sigma}$, we can ignore this coordinate,
    and then $\cR^{(\ell)}$ is a non-degenerate convex polytope in $\R^{2^{\el}-1}$.
    For example, $\cR^{(2)}$ is a regular tetrahedron in $\R^3$.
\end{remark}

\begin{definition}[Preimage of overlap vectors]
For any open subset $S \subseteq \cR^{(\ell)}$ (in the Euclidean subset topology of\ $\cR^{(\el)}$), let $U^{(\ell)}_n(S) \subseteq (\{\pm 1\}^{n})^\ell$ be the preimage of $S$ in dimension $n$; i.e. the set of $\bm{\sigma} \in (\{\pm 1\}^{n})^\ell$ such that $Q(\bm{\sigma}) \in S$.
We drop the superscript when $\ell$ is in context.
\end{definition}

We are interested in the distribution on $\cR^{(\el)}$
of the overlap vector $Q(\sigma_1, \dots, \sigma_\el)$ when $\sigma_1, \dots, \sigma_\el$ are independently sampled near-optimal solutions.\footnote{It is likely that the distribution converges in some sense as $n \to \infty$, and then converges again as $\beta \to \infty$, but our proof does not show this nor use this.}
Our main result shows that the support of this distribution for a Max-CSP in the $n \to \infty$ then $\al \to \infty$ limit equals that of a corresponding spin glass.
Hence, \emph{overlap gaps} (formally defined in \Cref{sec:ogp_from_freeenergy}) transfer between the two models.

The proof uses the concept of \emph{free energy}, defined as
the logarithm of the partition function.

\begin{definition}[Free energy]
The \emph{free energy} of $H$ is $\log Z_H(\beta)$.
\end{definition}

The free energy at low temperature ($\beta \to \infty$) matches the optimal value of the Hamiltonian. This is a commonly-used fact (for example, \cite[Equation 1.7]{panchenko2013sherrington}); we include the short proof.
\begin{fact}
\label{fact:maxvalue}
The maximum value of a Hamiltonian $H: \Omega \to \R$ on a finite domain $\Omega$ is related to its free energy by
\begin{align}
    \max_{\sigma \in \Omega} H(\sigma)
    \le 
     \frac{1}{\beta}
     \log Z_H(\beta)
     \le 
      \max_{\sigma \in \Omega} H(\sigma) + \frac{\log |\Omega|}{\beta}\,.
\end{align}
\end{fact}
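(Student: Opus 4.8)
The plan is to sandwich the partition function $Z_H(\beta) = \sum_{\sigma \in \Omega} e^{\beta H(\sigma)}$ between a single dominant term and $|\Omega|$ copies of it, then take logarithms and divide by $\beta$. Write $M = \max_{\sigma \in \Omega} H(\sigma)$, which exists because $\Omega$ is finite, and fix some $\sigma^\star \in \Omega$ achieving it.

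For the lower bound, I would simply observe that one of the summands in $Z_H(\beta)$ is $e^{\beta H(\sigma^\star)} = e^{\beta M}$, and every other summand is nonnegative, so $Z_H(\beta) \ge e^{\beta M}$. Taking logarithms and dividing by $\beta > 0$ gives $\tfrac{1}{\beta}\log Z_H(\beta) \ge M$. For the upper bound, I would use that $H(\sigma) \le M$ for every $\sigma$, so $e^{\beta H(\sigma)} \le e^{\beta M}$, and summing over the $|\Omega|$ elements of $\Omega$ yields $Z_H(\beta) \le |\Omega| \cdot e^{\beta M}$. Taking logarithms and dividing by $\beta$ gives $\tfrac{1}{\beta}\log Z_H(\beta) \le M + \tfrac{\log|\Omega|}{\beta}$. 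Combining the two displays proves the claim.

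There is essentially no obstacle here; the only points to be careful about are that $\beta > 0$ so that dividing by $\beta$ preserves the inequalities and that $\log$ is monotone increasing, and that $\Omega$ being finite and nonempty guarantees both that the maximum is attained and that $1 \le |\Omega| < \infty$ (so $\log|\Omega| \ge 0$ and the upper bound is a genuine additive slack). In the application of interest we will take $\Omega = \{\pm 1\}^n$, so $\log|\Omega| = n\log 2$ and the additive error is $\tfrac{n\log 2}{\beta}$, which is $o(n)$ once $\beta \to \infty$; this is what lets us pass between free energy density and optimal value density in the later arguments.
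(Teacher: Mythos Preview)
Your proof is correct and is essentially identical to the paper's own argument: bound $Z_H(\beta)$ below by the single term $e^{\beta M}$ and above by $|\Omega|\,e^{\beta M}$, then take $\tfrac{1}{\beta}\log(\cdot)$. The paper writes it slightly more tersely but the content is the same.
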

\begin{proof}
We have 
\begin{align}
     \frac{1}{\beta}
     \log \sum_{\sigma \in \Omega}
     \exp
     \left(
     \beta
     H(\sigma)
     \right)
     \ge 
          \max_{\sigma \in \Omega} \frac{1}{\beta}
     \log
     \exp
     \left(
     \beta
     H(\sigma)
     \right) = \max_{\sigma \in \Omega} H(\sigma)\,,
\end{align}
and 
\begin{align}
      \frac{1}{\beta}
     \log \sum_{\sigma \in \Omega}
     \exp
     \left(
     \beta
     H(\sigma)
     \right)
     \le 
      \max_{\sigma \in \Omega} 
         \frac{1}{\beta} 
     \log |\Omega|
     \exp
     \left(
     \beta
     H(\sigma)
     \right)
     =  \max_{\sigma \in \Omega} H(\sigma) + \frac{\log |\Omega|}{\beta}\,.
\end{align}
\end{proof}

\subsubsection{Extended remark on multi-overlaps}

The $I$-overlap for $|I| > 2$ is a type of \emph{multi-overlap}, i.e. a higher-order version of the standard overlap.
In fact, the structures we will use to obstruct algorithms only constrain the pairwise overlaps with $|I| = 2$.
The pairwise overlaps characterize functions that are rotationally-invariant.
Multi-overlaps may be useful when studying CSPs at small $\alpha$; see the concluding remarks in \Cref{sec:discussion}.

The $I$-overlap is general enough
to capture any property of the solution geometry which is \emph{permutation-invariant} under the action of $S_n$ permuting the bits of $\{\pm 1\}^n$.
Formally, we have the following:
\begin{definition}[Permutation-invariant function]\label{defn:perm-inv}
    Let a permutation $\pi \in S_n$ act on $\sigma \in \{\pm 1\}^n$ as $\sigma^{\pi}_i = \sigma_{\pi(i)}$
    and on a function $f : \{\pm 1\}^n \to \R$ (extended coordinate-wise to $f: (\{\pm 1\}^n)^\el \to \R$) as
        $f^\pi(\sigma) = f(\sigma^\pi)$.
    Then $f$ is permutation-invariant if it is fixed by all $\pi$.
\end{definition}
\begin{fact}
Any permutation-invariant function $f: (\{\pm 1\}^n)^\el \to \mathbb{R}$ can be expressed as a function on $\cR^{(\el)}$,
\begin{align}
    f(\sigma_1, \dots, \sigma_\el) = f(Q(\sigma_1, \dots, \sigma_\el))\,.
\end{align}
\end{fact}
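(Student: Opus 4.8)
The plan is to show that the overlap vector $Q(\bm{\sigma})$ is a \emph{complete invariant} for the orbit of $\bm{\sigma}$ under the diagonal action of $S_n$ on $(\{\pm 1\}^n)^\el$, and then to read off the conclusion from permutation-invariance of $f$. First I would re-encode a tuple $\bm{\sigma} = (\sigma_1, \dots, \sigma_\el)$ by its sequence of \emph{columns}: for $j \in [n]$ set $c_j(\bm{\sigma}) = ((\sigma_1)_j, \dots, (\sigma_\el)_j) \in \{\pm 1\}^\el$, and let $N_v(\bm{\sigma}) = \#\{\, j \in [n] : c_j(\bm{\sigma}) = v \,\}$ for each $v \in \{\pm 1\}^\el$. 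A permutation $\pi \in S_n$ with $\bm{\tau} = \bm{\sigma}^\pi$ is exactly a bijection of $[n]$ that matches the columns of $\bm{\tau}$ with those of $\bm{\sigma}$, so $\bm{\sigma}$ and $\bm{\tau}$ lie in the same $S_n$-orbit if and only if $N_v(\bm{\sigma}) = N_v(\bm{\tau})$ for every $v$.

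Next — the one step with any real content — I would show that the count vector $(N_v(\bm{\sigma}))_v$ is determined by $Q(\bm{\sigma})$. Writing $\chi_I(v) = \prod_{i \in I} v_i$ for the parity character indexed by $I \subseteq [\el]$, the definition of the $I$-overlap reads $n\, R_I(\bm{\sigma}) = \sum_{v \in \{\pm 1\}^\el} N_v(\bm{\sigma})\, \chi_I(v)$, so the tuple $(n R_I(\bm{\sigma}))_{I \subseteq [\el]}$ is precisely the Walsh--Hadamard transform of $(N_v(\bm{\sigma}))_v$. Since $\{\chi_I\}_{I \subseteq [\el]}$ is an orthogonal basis of the real functions on $\{\pm 1\}^\el$, this transform is invertible, so $(N_v(\bm{\sigma}))_v$ — and hence the $S_n$-orbit of $\bm{\sigma}$ — is recovered from $Q(\bm{\sigma})$ together with the fixed number $n$.

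Putting the pieces together: if $Q(\bm{\sigma}) = Q(\bm{\tau})$ then $\bm{\tau} = \bm{\sigma}^\pi$ for some $\pi \in S_n$, whence $f(\bm{\sigma}) = f(\bm{\sigma}^\pi) = f(\bm{\tau})$ by permutation-invariance. Thus $f$ is constant on the fibers of $Q$, so the rule $\tilde{f}(q) := f(\bm{\sigma})$ for any $\bm{\sigma}$ with $Q(\bm{\sigma}) = q$ is a well-defined function on the set of realizable overlap vectors $\{\, Q(\bm{\sigma}) : \bm{\sigma} \in (\{\pm 1\}^n)^\el \,\} \subseteq \cR^{(\el)}$ (extend it arbitrarily to the rest of $\cR^{(\el)}$), and it satisfies $f(\bm{\sigma}) = \tilde{f}(Q(\bm{\sigma}))$, which is the claim once $\tilde f$ is renamed back to $f$. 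I do not expect a genuine obstacle: the only non-bookkeeping ingredient is invertibility of the Hadamard transform, and even the orbit characterization is elementary; the only thing needing a moment of care is the harmless clash of conventions around the $\varnothing$-coordinate, which carries no information since $n$ is already fixed.
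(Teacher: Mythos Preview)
The paper states this as a \emph{Fact} without proof, so there is no argument to compare against. Your proposal is correct and is exactly the natural proof: the column-count vector $(N_v)_v$ is a complete invariant for the $S_n$-orbit, and the identity $n\,R_I(\bm{\sigma}) = \sum_v N_v(\bm{\sigma})\,\chi_I(v)$ exhibits $Q(\bm{\sigma})$ as the Walsh--Hadamard transform of the counts, hence determines them (with $n$ fixed). Your remark about the $\varnothing$-coordinate is apt --- note in passing that the paper's Remark immediately following the overlap-polytope definition has a typo ($R_\varnothing = 1$, not $0$), but as you say this coordinate is constant and carries no information once $n$ is fixed.
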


If $f(\sigma_1, \dots, \sigma_\el)$ is additionally invariant under the action of $S_\el$ which permutes the inputs then $f$ is determined by the
overlaps $R_{\{1\}}, R_{\{1,2\}}, R_{\{1,2,3\}}, \dots, R_{\{1,2,3,\dots, \el\}}$.
The corresponding overlap polytope is $[-1,+1]^\el$, which is significantly smaller and simpler.
These overlaps are a more common definition of multi-overlap than \Cref{defn:multi-overlap} \cite{barbier2022strong}.

\subsection{Fourier analysis on the hypercube}
\label{sub:fourier_analysis_prelim}

Fourier analysis is commonly used to study Boolean functions~\cite{o2014analysis}.
For example, the Fourier basis provides a convenient way to understand the action of linear operators on a Boolean function.

We consider the space of Boolean functions $f: \{\pm 1\}^k \to \R$ with the expectation inner product, over the uniform distribution on $\{\pm 1\}^n$.
These functions have a canonical decomposition.

\begin{definition}[Fourier spectrum of a Boolean function]
    Every function $f: \{\pm 1\}^k \to \R$ permits a unique decomposition as a linear combination of parity functions. Specifically,
    \begin{align}
        f(\sigma) = \sum_{S \subseteq [k]}\widehat{f}(S)\prod_{i \in S}\sigma_i\, ,
    \end{align}
    where $\widehat{f}(S)$ are called the Fourier coefficients.
\end{definition}
One can verify that the monomials $\left\{\prod_{i \in S}\sigma_i\right\}_{S \subseteq [k]}$ form an orthonormal basis, often called the \emph{Fourier basis}.

Recall that the average value of $f$ is exactly the Fourier coefficient of the empty set:
\begin{fact}
$\E_{\sigma \in \{\pm 1\}^n} [f(\sigma)] = \widehat{f}(\varnothing)$.
\end{fact}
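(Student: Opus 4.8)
The plan is to expand $f$ in its Fourier representation and integrate term by term. First I would invoke the Fourier decomposition stated just above, writing $f(\sigma) = \sum_{S \subseteq [k]} \widehat{f}(S)\prod_{i \in S}\sigma_i$, and then take the expectation with $\sigma$ uniform on $\{\pm 1\}^k$ (the ambient dimension in the displayed statement should read $k$ rather than $n$, since $f$ has domain $\{\pm 1\}^k$). By linearity of expectation this gives $\E_{\sigma}[f(\sigma)] = \sum_{S \subseteq [k]} \widehat{f}(S)\,\E_{\sigma}\!\left[\prod_{i \in S}\sigma_i\right]$.

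Next I would evaluate each $\E_{\sigma}\!\left[\prod_{i \in S}\sigma_i\right]$. Because the coordinates $\sigma_1,\dots,\sigma_k$ are independent and each is uniform on $\{\pm 1\}$, we have $\E_{\sigma}\!\left[\prod_{i \in S}\sigma_i\right] = \prod_{i \in S}\E[\sigma_i]$, which is $1$ for $S = \varnothing$ (empty product) and $0$ for every nonempty $S$, since $\E[\sigma_i] = 0$. This is exactly the statement, already recalled in the excerpt, that the parity monomials $\{\prod_{i \in S}\sigma_i\}_{S \subseteq [k]}$ are orthonormal, and in particular orthogonal to the constant function $\mathbf{1} = \prod_{i \in \varnothing}\sigma_i$. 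Substituting back, every term in the sum vanishes except the $S = \varnothing$ term, leaving $\E_{\sigma}[f(\sigma)] = \widehat{f}(\varnothing)$, as claimed. Equivalently, the whole argument collapses to the one line $\widehat{f}(\varnothing) = \langle f, \mathbf{1}\rangle = \E_{\sigma}[f(\sigma)]$, reading off the Fourier coefficient as the inner product of $f$ against the corresponding basis element.

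There is essentially no obstacle here: the only ingredients are $\E[\sigma_i] = 0$ together with independence of the coordinates (equivalently, orthonormality of the parity basis), both of which are established in the preceding subsection. The one point that merits a word of care is the minor index mismatch between $\{\pm 1\}^n$ in the statement and $\{\pm 1\}^k$ in the definition of $f$, which I would simply resolve by stating and proving the identity over $\{\pm 1\}^k$.
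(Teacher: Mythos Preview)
Your proof is correct and is the standard one-line verification. The paper itself states this as a bare fact without proof (introducing it with ``Recall that\ldots''), so there is no argument in the paper to compare against; your approach is exactly what any reader would supply. Your remark about the $n$/$k$ index mismatch is also accurate.
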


We also consider the \emph{noise stability} of a Boolean function, which describes how resistant the function is to independent noise on its input bits.
\begin{definition}[$\rho$-correlated]\label{defn:rho-correlated}
Fix $\sigma \in \{\pm 1\}^k$ and $\rho \in [-1,+1]$. Then a random sample $\tau$ of the distribution $N_\rho(\sigma)$ chooses each spin $\tau_i$ independently as
    \begin{align}
        \tau_i = \begin{cases}
        \sigma_i & \text{with probability } \frac{1+\rho}{2}\,, \\
        -\sigma_i & \text{with probability }\frac{1-\rho}{2}\,.
        \end{cases}
        \end{align}
    Since $\E_{\tau \sim N_\rho(\sigma)}[\tau_i \sigma_i] = \rho$, we say that $\tau$ is $\rho$-correlated with $\sigma$.
\end{definition}
\begin{definition}[Noise stability of a Boolean function around a point]
\label{defn:noise-stability-point}
    For $\rho \in [-1,+1]$, define the noise stability of the Boolean function $f: \{\pm 1\}^k \to \R$ around point $\sigma \in \{\pm1\}^n$ as
    \begin{align}
        \Stab_\rho[f](\sigma) = \E_{\tau \sim N_\rho(\sigma)}\bigg[f(\sigma)f(\tau)\bigg]\,.
    \end{align}

\end{definition}

\begin{definition}[Noise stability of a Boolean function]\label{defn:noise-stability}
The noise stability of a Boolean function $f: \{\pm1\}^k \to \R$ is defined as
\begin{align}
    \Stab_\rho[f] =
    \E_{\sigma \sim \{\pm 1\}^k}
    \left[
    \E_{\tau \sim N_\rho(\sigma)}
    \left[f(\sigma)f(\tau)\right]
    \right] = \E_{\sigma \sim \{\pm1\}^k}\left[\Stab_\rho[f](\sigma) \right] \,.
\end{align}
\end{definition}

\begin{fact}[{\cite[Theorem 2.49]{o2014analysis}}]
\begin{align}
\Stab_\rho[f] = \displaystyle\sum_{S \subseteq [k]} \rho^{|S|} \widehat{f}(S)^2
= \sum_{j = 0}^k \rho^j\|f^{=j}\|^2\,,
\end{align}
where $\|f^{=j}\|^2 = \sum_{T \subseteq [k], |T| = j} \widehat{f}(T)^2$
is the Fourier weight of the $j$th Fourier level of $f$.
\end{fact}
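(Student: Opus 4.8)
The plan is to realize $\Stab_\rho[f]$ as the inner product of $f$ with its image under the \emph{noise operator}, and then diagonalize that operator in the Fourier basis. Concretely, I would define $T_\rho f : \{\pm 1\}^k \to \R$ by $(T_\rho f)(\sigma) := \E_{\tau \sim N_\rho(\sigma)}[f(\tau)]$. By \Cref{defn:noise-stability}, averaging over the inner randomness $\tau$ first gives
\begin{align}
\Stab_\rho[f] = \E_{\sigma \sim \{\pm 1\}^k}\!\big[f(\sigma)\,(T_\rho f)(\sigma)\big] = \langle f, T_\rho f\rangle\,,
\end{align}
where $\langle\cdot,\cdot\rangle$ is the expectation inner product, so it suffices to compute $T_\rho f$.

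The key step is the observation that each parity monomial $\chi_S(\sigma) := \prod_{i \in S}\sigma_i$ is an eigenfunction of $T_\rho$ with eigenvalue $\rho^{|S|}$. Indeed, by \Cref{defn:rho-correlated} the coordinates of $\tau \sim N_\rho(\sigma)$ are chosen \emph{independently} given $\sigma$, with $\E_{\tau \sim N_\rho(\sigma)}[\tau_i] = \rho\,\sigma_i$, so the expectation of a product over distinct indices factors:
\begin{align}
(T_\rho \chi_S)(\sigma) = \E_{\tau \sim N_\rho(\sigma)}\Big[\prod_{i \in S}\tau_i\Big] = \prod_{i \in S}\E_{\tau \sim N_\rho(\sigma)}[\tau_i] = \prod_{i \in S}\rho\,\sigma_i = \rho^{|S|}\chi_S(\sigma)\,.
\end{align}
Applying this to each term of the Fourier expansion $f = \sum_{S \subseteq [k]}\widehat{f}(S)\chi_S$ and using linearity of $T_\rho$ then yields $T_\rho f = \sum_{S \subseteq [k]}\rho^{|S|}\widehat{f}(S)\chi_S$.

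To finish, I would substitute this into $\langle f, T_\rho f\rangle$, expand $f$ in the Fourier basis as well, and invoke orthonormality of the parity monomials under the expectation inner product (so $\E_\sigma[\chi_S(\sigma)\chi_{S'}(\sigma)]$ equals $1$ if $S = S'$ and $0$ otherwise), which annihilates every cross term and leaves $\Stab_\rho[f] = \sum_{S \subseteq [k]}\rho^{|S|}\widehat{f}(S)^2$. Grouping this sum by the size $|S| = j$ and recalling the definition $\|f^{=j}\|^2 = \sum_{|T| = j}\widehat{f}(T)^2$ gives the second displayed equality. There is no real obstacle here — the argument is entirely routine — and the only two points meriting any care are the conditional independence of the coordinates of $N_\rho(\sigma)$ (which is exactly what lets the product expectation factor) and the fact that the parity monomials form an orthonormal system, both of which are standard.
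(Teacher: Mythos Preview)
Your proof is correct and is exactly the standard argument for this fact. The paper does not actually give its own proof of this statement --- it simply cites \cite[Theorem 2.49]{o2014analysis} --- and the argument you wrote (realize $\Stab_\rho[f]$ as $\langle f, T_\rho f\rangle$, diagonalize $T_\rho$ in the Fourier basis via $T_\rho\chi_S = \rho^{|S|}\chi_S$, then apply orthonormality) is precisely the proof given in that reference.
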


\subsection{Concentration inequalities}
\label{sec:concentration}

In the probabilistic combinatorics literature, ``with high probability'' means with probability $1-o_n(1)$,
where $f = o_n(g)$ denotes $\lim_{n \to \infty}\tfrac{f(n)}{g(n)} = 0$.

\begin{remark}\label{rmk:concentration}
All of the quantities that we consider are exponentially concentrated. As a consequence, for all $\eps > 0$, the stated theorems hold with $\eps$ slack with probability at least $1-\exp(-C n)$, for some $C >0$ depending on $\eps$.
For ease of exposition, we have opted to simply state the results with high probability.
\end{remark}

\begin{lemma}[{Concentration under spin glass Gibbs distribution, \cite[Theorem 1.2]{panchenko2013sherrington}}]
\label{lem:gaussian-concentration}
    Let $\beta > 0$ and ${f : \{\pm1\}^n\to \R_{\geq 0}}$ be arbitrary. Let $\{H(\sigma)\}_{\sigma \in \{\pm1\}^n}$ be a Gaussian process such that $\E H(\sigma)^2 \leq s^2$ for all $\sigma \in \{\pm1\}^n$.
    Let $X = \log \sum_{\sigma \in \{\pm1\}^n} \exp(\beta H(\sigma))f(\sigma).$
    Then for all $x \geq 0$,
    \begin{equation}
            \Pr[\abs{X - \E X} \geq x] \leq 2 \exp\left(\frac{-x^2}{4s^2 \beta^2}\right)\,.
    \end{equation}
\end{lemma}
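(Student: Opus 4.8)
The plan is to exhibit $X$ as a Lipschitz function of an i.i.d.\ standard Gaussian vector and then quote the classical Gaussian concentration-of-measure inequality for Lipschitz functions. First I would represent the process: since $\{H(\sigma)\}_{\sigma\in\{\pm1\}^n}$ is a centered Gaussian vector indexed by the finite set $\{\pm1\}^n$, write $H=\Sigma^{1/2}g$ with $g\sim\cN(0,I_{2^n})$, where $\Sigma\in\R^{2^n\times2^n}$ is its covariance matrix and $\Sigma^{1/2}$ the symmetric PSD square root; note $\Sigma_{\sigma\sigma}=\E H(\sigma)^2\le s^2$. Assuming $f\not\equiv 0$ (otherwise the claim is vacuous), $X(g)=\log\sum_\sigma f(\sigma)\exp\!\big(\beta(\Sigma^{1/2}g)_\sigma\big)$ is a finite and smooth function of $g$ --- a log-sum-exp of affine functions with nonnegative weights --- and sandwiching it between a single surviving term and $2^n$ times the largest term shows $X$ is integrable, so $\E X$ is well-defined.

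Second, I would bound the Euclidean gradient uniformly. Differentiating, $\nabla X(g)=\beta\sum_\sigma\mu_g(\sigma)\,(\Sigma^{1/2})_{\sigma,\cdot}$, where $\mu_g(\sigma)\propto f(\sigma)\exp(\beta(\Sigma^{1/2}g)_\sigma)$ is the tilted Gibbs probability measure (supported on $\{\sigma:f(\sigma)>0\}$). By Jensen's inequality for the convex map $v\mapsto\norm{v}^2$,
\[
\norm{\nabla X(g)}^2\;\le\;\beta^2\sum_\sigma\mu_g(\sigma)\,\norm{(\Sigma^{1/2})_{\sigma,\cdot}}^2\;=\;\beta^2\sum_\sigma\mu_g(\sigma)\,\Sigma_{\sigma\sigma}\;\le\;\beta^2 s^2 ,
\]
using $\norm{(\Sigma^{1/2})_{\sigma,\cdot}}^2=\big(\Sigma^{1/2}(\Sigma^{1/2})^{\!\top}\big)_{\sigma\sigma}=\Sigma_{\sigma\sigma}$. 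Hence $X$ is $(\beta s)$-Lipschitz on $\R^{2^n}$ with respect to the Euclidean norm.

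Third, I would apply the Gaussian concentration inequality for Lipschitz functions: for $L$-Lipschitz $F$ and $g\sim\cN(0,I_N)$, $\Pr[|F(g)-\E F(g)|\ge x]\le 2\exp(-x^2/(2L^2))$. Taking $L=\beta s$ yields $\Pr[|X-\E X|\ge x]\le 2\exp(-x^2/(2\beta^2 s^2))\le 2\exp(-x^2/(4\beta^2 s^2))$, which is the claim (in fact with a slightly better constant).

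The computation is essentially routine, so I do not expect a genuine obstacle; the only steps that need care are the reduction of the abstract Gaussian process to an i.i.d.\ coordinate representation and the check that the gradient bound holds uniformly in $g$, so that $X$ is \emph{globally} Lipschitz rather than merely locally. If one wishes to avoid invoking Gaussian concentration as a black box, the Herbst argument can be carried out in full: bound $\mathrm{Ent}\big(e^{\lambda X}\big)$ by $2\lambda^2\beta^2 s^2\,\E e^{\lambda X}$ (or a similar constant) using the Gaussian log-Sobolev inequality together with the gradient estimate above, deduce a sub-Gaussian bound $\E\exp(\lambda(X-\E X))\le \exp(c\lambda^2\beta^2 s^2)$, and finish with a Chernoff bound optimized over $\lambda$; the somewhat lossy constants produced this way are already consistent with the stated inequality.
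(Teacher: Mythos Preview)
The paper does not supply its own proof of this lemma; it simply quotes the result from Panchenko's book \cite[Theorem 1.2]{panchenko2013sherrington}. Your argument is correct and is the standard route: realize the finite Gaussian vector as $\Sigma^{1/2}g$ for $g\sim\cN(0,I)$, observe that the log-partition is a $(\beta s)$-Lipschitz function of $g$ via the Gibbs-average gradient bound, and invoke Gaussian concentration for Lipschitz functions. One small point: you assume the process is centered, whereas the statement only bounds the second moment $\E H(\sigma)^2$; but any deterministic mean can be absorbed into the weight $f$, and the variance is then at most $s^2$, so the reduction is harmless. As you note, your argument in fact gives the sharper exponent $-x^2/(2\beta^2 s^2)$.

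For comparison, Panchenko's proof in the book proceeds somewhat differently: rather than passing through an explicit i.i.d.\ coordinate representation, he bounds the moment generating function directly by differentiating $t\mapsto \E\exp(\lambda X_t)$ along a Gaussian interpolation and applying Gaussian integration by parts, which produces the same sub-Gaussian tail. Your Lipschitz/concentration packaging is arguably cleaner and makes the role of the variance bound $\Sigma_{\sigma\sigma}\le s^2$ more transparent, at the cost of invoking the Gaussian isoperimetric/log-Sobolev inequality as a black box; Panchenko's argument is more self-contained but slightly more computational. Either way, there is no gap.
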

For a spin glass $H^\xi$, we will think of $X$ as having mean $C \cdot n + o(n)$, which formally holds if $f$ satisfies $\exp(-C' \cdot n) \leq f(\sigma) \leq \exp(C' \cdot n)$ for all $\sigma$. By \Cref{fact:cov-gaussian-process}, $\E H^\xi(\sigma)^2 = n \cdot \xi(1) = O(n)$, so $X$ typically fluctuates by only $O(\sqrt{n})$.
An important special case of this concentration inequality is concentration of
the maximum value of $H$. The proof is in \Cref{app:csp-technical}.

\begin{restatable}{corollary}{maxConcentration}
\label{cor:max-concentration}
    Let $\{H(\sigma)\}_{\sigma \in \{\pm1\}^n}$ be a Gaussian process such that $\E H(\sigma)^2 \leq s^2$ for all $\sigma \in \{\pm1\}^n$.
    Then for all $x \geq 0$,
    \begin{equation}
            \Pr[\abs{\max_{\sigma \in \{\pm 1\}^n} H(\sigma) - \E \max_{\sigma \in \{\pm 1\}^n} H(\sigma)} \geq x] \leq 2 \exp\left(\frac{-x^2}{4s^2}\right)\,.
    \end{equation}
\end{restatable}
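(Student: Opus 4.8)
The plan is to deduce this from \Cref{lem:gaussian-concentration} by sending the inverse temperature $\beta \to \infty$. First I would apply \Cref{lem:gaussian-concentration} with the constant function $f \equiv 1$ (which trivially satisfies $0 \le f \le 1$), so that the random variable there becomes $X = X_\beta := \log Z_H(\beta) = \log \sum_{\sigma \in \{\pm 1\}^n} \exp(\beta H(\sigma))$. The lemma gives, for every $\beta > 0$ and $y \ge 0$, the bound $\Pr[\abs{X_\beta - \E X_\beta} \ge y] \le 2\exp(-y^2/(4 s^2 \beta^2))$. Substituting $y = \beta x$, this is equivalent to $\Pr[\abs{\tfrac1\beta X_\beta - \E \tfrac1\beta X_\beta} \ge x] \le 2\exp(-x^2/(4 s^2))$ for all $x \ge 0$, with a right-hand side that no longer depends on $\beta$.

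Second I would invoke \Cref{fact:maxvalue} with $\Omega = \{\pm 1\}^n$, which gives the \emph{deterministic} sandwich $\max_{\sigma} H(\sigma) \le \tfrac1\beta X_\beta \le \max_\sigma H(\sigma) + \tfrac{n \log 2}{\beta}$ on every outcome of the Gaussian process. In particular $\tfrac1\beta X_\beta \to \max_\sigma H(\sigma)$ as $\beta \to \infty$, uniformly over outcomes; and since $\max_\sigma H(\sigma)$ is a maximum of finitely many Gaussians it is integrable, so taking expectations also yields $\E \tfrac1\beta X_\beta \to \E \max_\sigma H(\sigma)$ (indeed $0 \le \E\tfrac1\beta X_\beta - \E\max_\sigma H(\sigma) \le \tfrac{n\log 2}{\beta}$).

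Finally I would pass to the limit in the $\beta$-uniform concentration bound. The case $x = 0$ is trivial, as the probability is at most $1 \le 2$. For fixed $x > 0$ and any $\eps \in (0,x)$, pick $\beta$ large enough that $\tfrac{n \log 2}{\beta} < \eps/2$; then the sandwich forces $\abs{\tfrac1\beta X_\beta - \max_\sigma H(\sigma)} < \eps/2$ and $\abs{\E\tfrac1\beta X_\beta - \E\max_\sigma H(\sigma)} < \eps/2$ on every outcome, so the event $\{\abs{\max_\sigma H(\sigma) - \E\max_\sigma H(\sigma)} \ge x\}$ is contained in $\{\abs{\tfrac1\beta X_\beta - \E\tfrac1\beta X_\beta} \ge x - \eps\}$. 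Combining this containment with the bound from the first step gives $\Pr[\abs{\max_\sigma H(\sigma) - \E\max_\sigma H(\sigma)} \ge x] \le 2\exp(-(x-\eps)^2/(4s^2))$, and letting $\eps \to 0$ finishes the proof.

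I do not expect any genuine obstacle here: the only delicate point is interchanging the limit $\beta\to\infty$ with the probability estimate, and this is handled cleanly by the outcome-by-outcome sandwich from \Cref{fact:maxvalue} rather than by any probabilistic convergence argument. As an alternative one could prove the statement directly, writing $H(\sigma) = \langle a_\sigma, g\rangle$ for a standard Gaussian vector $g$ with $\norm{a_\sigma}^2 = \E H(\sigma)^2 \le s^2$, noting that $g \mapsto \max_\sigma \langle a_\sigma, g\rangle$ is $s$-Lipschitz, and applying Gaussian concentration of measure for Lipschitz functions (this even yields the slightly stronger constant $2 s^2$); but the $\beta \to \infty$ route is preferable here since it reuses \Cref{lem:gaussian-concentration} verbatim and reproduces exactly its constant.
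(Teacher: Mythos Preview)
Your proposal is correct and follows essentially the same approach as the paper: apply \Cref{lem:gaussian-concentration} with $f\equiv 1$, rescale by $\beta$ to get a $\beta$-independent bound on $\tfrac{1}{\beta}\log Z_H(\beta)$, then let $\beta\to\infty$ using \Cref{fact:maxvalue}. The only cosmetic difference is that the paper phrases the limit step via a.s.\ convergence and Slutsky's theorem, whereas your outcome-by-outcome sandwich with an explicit $\eps$-argument is arguably cleaner; the paper likewise notes the alternative Borell--TIS/Lipschitz route with the sharper constant $2s^2$.
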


We use a similar concentration inequality for the CSP setting.
The proof is found in \Cref{app:csp-technical}.
\begin{restatable}[Concentration under CSP Gibbs distribution]{lemma}{cspConcentration}\label{prop:poisson-concentration}
    Let $\beta > 0$ and $f : \{\pm 1\}^n \to \R_{\geq 0}$ be arbitrary. Let $\{H(\sigma)\}_{\sigma \in \{\pm 1\}^n}$ be a sample from \csp{}. Let 
    $X = \log \sum_{\sigma \in \{\pm1\}^n} \exp(\beta H(\sigma))f(\sigma)$.
    Then for all $x \geq 0$,
    \begin{equation}
            \Pr[\abs{X - \E X} \geq x] \leq 2 \exp\left(\frac{-x^2 \alpha}{4(n + \tfrac{x}{\beta})\beta^2}\right)\,.
    \end{equation}
\end{restatable}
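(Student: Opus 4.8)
The plan is to use the bounded-differences / Azuma-type machinery adapted to Poisson randomness, exactly paralleling the proof of the Gaussian-disorder concentration statement (\Cref{lem:gaussian-concentration}), but with the Efron--Stein / entropy method replaced by a martingale argument over the (independent) Poisson multiplicities of the constraints. First I would pass to the Poisson model described in the remark after \Cref{defn:csp_hamiltonian}: the multiplicity $N_e$ of each possible constraint $e$ is an independent $\Pois\!\left(\tfrac{\alpha}{2^k n^{k-1}}\right)$ random variable, and $H(\sigma) = \tfrac{1}{\alpha}\sum_e N_e f_e(\sigma_e)$. Since $X = \log \sum_\sigma e^{\beta H(\sigma)} f(\sigma)$ is a function of the independent family $(N_e)_e$, I would expose the $N_e$ one at a time (in any fixed order) and form the Doob martingale $X_0 = \E X, X_1, \ldots$; the goal is to bound each martingale increment and apply a concentration inequality for martingales with controlled increments.

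The key computation is the effect of changing a single $N_e$. Each $f_e$ takes values in $[-1,1]$, so increasing $N_e$ by $1$ changes $H(\sigma)$ by $\tfrac{1}{\alpha} f_e(\sigma_e) \in [-\tfrac1\alpha, \tfrac1\alpha]$ \emph{uniformly in $\sigma$}; hence $X$ changes by at most $\tfrac{\beta}{\alpha}$ per unit change in $N_e$ (this uses that $\log\sum_\sigma e^{\beta H(\sigma)} f(\sigma)$ is $\tfrac{\beta}{\alpha}$-Lipschitz in $N_e$, since a uniform additive shift of $H$ by $c$ shifts $X$ by exactly $\beta c$, and monotone perturbations interpolate). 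So $X$ is a $(\tfrac{\beta}{\alpha})$-Lipschitz function of the Poisson vector $(N_e)$ in the sense that $|X(N) - X(N')| \le \tfrac{\beta}{\alpha}\|N - N'\|_1$. For such functions of independent Poissons there is a standard sub-gamma concentration bound: writing $V = \sum_e \lambda_e = \tfrac{\alpha}{2^k n^{k-1}} \cdot 2^k n^k = \alpha n$ for the total Poisson intensity and $L = \tfrac{\beta}{\alpha}$ for the Lipschitz constant, one gets a bound of the shape $\Pr[|X - \E X| \ge x] \le 2\exp\!\left(-\tfrac{x^2}{2(L^2 V + L x/3)}\right)$ or similar. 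Plugging in $L^2 V = \tfrac{\beta^2}{\alpha^2}\cdot \alpha n = \tfrac{\beta^2 n}{\alpha}$ and $Lx = \tfrac{\beta x}{\alpha}$ recovers exactly the denominator $\tfrac{4(n + x/\beta)\beta^2}{\alpha}$ in the statement (up to the constant $4$, which absorbs the $1/2$ and $1/3$). To be self-contained I would derive the Poisson sub-gamma bound via the exponential-moment method: bound $\E e^{\theta(X - \E X)}$ through the martingale increments $\E[e^{\theta(X_i - X_{i-1})} \mid \mathcal F_{i-1}]$, using that for a single Poisson variable $N_e$ with mean $\lambda_e$ and an $L$-Lipschitz dependence, the log-MGF of the increment is at most $\lambda_e(e^{\theta L} - 1 - \theta L) \le \tfrac{\lambda_e \theta^2 L^2}{2(1-\theta L/3)}$ (Bennett/Bernstein for Poisson), then sum over $e$ and optimize $\theta$.

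The main obstacle I expect is \emph{not} the Poisson concentration inequality itself — that is classical — but rather the bookkeeping needed to make the reduction rigorous: the Poisson model only \emph{approximates} the fixed-$m$ model $\csp$, so strictly speaking $\{H(\sigma)\}$ as stated is the fixed-$m$ instance, and one must either (i) prove the bound directly in the fixed-$m$ model using a martingale over the $m$ i.i.d.\ constraint choices (each of which, when resampled, shifts $H$ by at most $\tfrac{2}{\alpha}$ pointwise, giving an Azuma bound with $m = \alpha n$ terms each of square-size $(\tfrac{2\beta}{\alpha})^2$, i.e.\ variance proxy $\tfrac{4\beta^2 n}{\alpha}$ — which also matches), or (ii) invoke that the total-variation distance between the two models is $o_n(1)$ and argue the statement is insensitive to this, though that only gives the bound up to an additive $o_n(1)$ in probability, weaker than what is claimed. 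I would go with route (i): the fixed-$m$ bounded-differences martingale is cleanest and avoids the $x/\beta$ term entirely if one is willing to lose the Bernstein improvement, but since the statement explicitly has the $(n + x/\beta)$ denominator, I would instead keep the Poisson model (route with $N_e$'s) where the heavy tail of the Poisson genuinely produces the $x/\beta$ correction, and simply note that passing between the models costs only $o_n(1)$ which is harmless for the intended application. The remaining steps — verifying the Lipschitz constant, summing the intensities to $\alpha n$, and tidying constants into the factor $4$ — are routine.
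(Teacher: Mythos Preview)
Your proposal is correct and follows essentially the same approach as the paper: view the CSP Hamiltonian as a function of the independent Poisson edge-multiplicities, observe that incrementing any single multiplicity shifts $X$ by at most $\beta/\alpha$, and apply a Bernstein-type bounded-differences inequality with total Poisson intensity $\alpha n$. The paper packages the concentration step slightly differently---it cites a biased McDiarmid inequality and obtains the Poisson version via a $\Bin(m,p/m)\to\Pois(p)$ limit rather than your direct exponential-moment route---and it handles your fixed-$m$ versus Poisson concern with the same one-line observation you make, namely that the underlying martingale argument applies equally in the exact model.
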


Exactly analogously to \Cref{cor:max-concentration}, we conclude concentration of the max.

\begin{corollary}\label{cor:max-concentration-csp}
    Let $\{H(\sigma)\}_{\sigma \in \{\pm1\}^n}$ be a sample from \csp{}.
    Then for all $x \geq 0$,
    \begin{equation}
            \Pr[\abs{\max_{\sigma \in \{\pm 1\}^n} H(\sigma) - \E \max_{\sigma \in \{\pm 1\}^n} H(\sigma)} \geq x] \leq 2 \exp\left(\frac{-x^2\al}{4(n+x)}\right)\,.
    \end{equation}
\end{corollary}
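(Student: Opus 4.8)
The plan is to mirror the proof of \Cref{cor:max-concentration}, substituting the CSP concentration bound (\Cref{prop:poisson-concentration}) for the Gaussian one. Write $M := \max_{\sigma \in \{\pm 1\}^n} H(\sigma)$, and observe that $M$ and $\E M$ are determined by the instance alone and do not depend on $\beta$; so it suffices to produce, for every $\beta > 0$, a tail bound on $\abs{M - \E M}$ and then let $\beta \to \infty$.

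First I would invoke \Cref{prop:poisson-concentration} with the choice $f \equiv 1$, so that $X = \log Z_H(\beta)$ is the free energy. Replacing $x$ by $x\beta$ in that lemma's inequality turns the numerator $x^2$ into $x^2\beta^2$ and the denominator $(n + \tfrac{x}{\beta})\beta^2$ into $(n+x)\beta^2$, so the factor $\beta^2$ cancels and one obtains the $\beta$-free bound
\begin{align}
\Pr\!\left[\abs{\tfrac{1}{\beta}\log Z_H(\beta) - \tfrac{1}{\beta}\E\log Z_H(\beta)} \geq x\right] \leq 2\exp\!\left(\frac{-x^2\al}{4(n+x)}\right)\,.
\end{align}
Next, \Cref{fact:maxvalue} applied with $\Omega = \{\pm 1\}^n$ gives $0 \leq \tfrac{1}{\beta}\log Z_H(\beta) - M \leq \tfrac{n\log 2}{\beta}$ pointwise, and the same sandwich holds after taking expectations. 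By the triangle inequality $\abs{M - \E M} \leq \abs{\tfrac{1}{\beta}\log Z_H(\beta) - \tfrac{1}{\beta}\E\log Z_H(\beta)} + \tfrac{2n\log 2}{\beta}$, so for every $x > \tfrac{2n\log 2}{\beta}$,
\begin{align}
\Pr[\abs{M - \E M} \geq x] \leq 2\exp\!\left(\frac{-\left(x - \tfrac{2n\log 2}{\beta}\right)^2 \al}{4\left(n + x - \tfrac{2n\log 2}{\beta}\right)}\right)\,.
\end{align}
Since the left-hand side does not depend on $\beta$, letting $\beta \to \infty$ and using continuity of the right-hand side in the vanishing error term $\tfrac{2n\log 2}{\beta}$ yields the claimed inequality; the case $x = 0$ is trivial since the bound exceeds $1$.

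I do not expect any real obstacle: this is the routine ``free energy converges to the ground state energy as $\beta \to \infty$'' argument, structurally identical to \Cref{cor:max-concentration}. The only points deserving care are verifying that the $\beta^2$ factors cancel cleanly in \Cref{prop:poisson-concentration} after the rescaling (so that the resulting bound is genuinely independent of $\beta$), and noting that the limit $\beta \to \infty$ is legitimate precisely because $M$ and $\E M$ are $\beta$-free.
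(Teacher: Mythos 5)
Your proposal is correct and follows the same strategy the paper intends: the paper itself only says ``Exactly analogously to \Cref{cor:max-concentration},'' and your proof correctly carries out that analogy, applying \Cref{prop:poisson-concentration} with $f\equiv 1$, rescaling $x\mapsto x\beta$ so that the $\beta^2$ factors cancel (which you rightly flagged as the point deserving care), and then taking $\beta\to\infty$. The only, minor, deviation from the paper's template is in the final limiting step: the paper's proof of \Cref{cor:max-concentration} invokes almost-sure convergence plus Slutsky's theorem to pass to the limit, whereas you instead use the deterministic sandwich $0 \leq \tfrac{1}{\beta}\log Z_H(\beta) - M \leq \tfrac{n\log 2}{\beta}$ and a triangle-inequality error budget of $\tfrac{2n\log 2}{\beta}$, then let $\beta\to\infty$ inside the probability bound. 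Your version is actually slightly more self-contained and avoids the continuity-set caveat implicit in arguing via convergence in distribution, but both routes are valid and arrive at the identical bound.
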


\section{Sparse and dense models have the same free energy}
\label{sec:mainproof}
In this section we prove
\Cref{thm:main_informal}.
We first define a \emph{linearly coupled model} for both \csp{} and \sgxi{}. We connect the free energy of the coupled models via the Guerra-Toninelli interpolation as in several prior works~\cite{panchenko2005free, chen2019suboptimality, jagannath2020unbalanced, chou2022limitations};
our proof is especially inspired by \cite{panchenko2018ksat}.

\begin{definition}[$(A,b)$-coupled models]
\label{defn:coupled-model}
Let $A \in \{0,1\}^{\ell \times \ell'}$ be a $0$-$1$ matrix and $b \in \R_{\geq 0}^{\ell'}$ a nonnegative vector satisfying $(Ab)_i = 1$ for all $i \in [\ell]$.

An \emph{$(A, b)$-coupled model of \sgxi{}} is a collection of random Hamiltonians $\cG_1, \dots, \cG_\el$ related by
\begin{align}
    \left(\begin{array}{c} \cG_1 \\ \vdots \\ \cG_{\ell}
\end{array}\right) 
= A\left(\begin{array}{c} \cG'_1 \\ \vdots \\ \cG'_{\ell'}
\end{array}\right)\,,
\end{align}
where each $\cG'_{i'}$ is the Hamiltonian for an independent instance of $\sg{}_{s \to \sqrt{b_{i'}}\xi(s)}$.
The \emph{grand Hamiltonian} is $\cG: (\{\pm 1\}^n)^\ell \to \mathbb{R}$ defined by $\cG(\mathbf{x}) = \sum_{i=1}^{\ell}\cG_i(\mathbf{x}_i)$.

An \emph{$(A, b)$-coupled model of\ \csp{}} is a collection of random Hamiltonians $\cH_1, \dots, \cH_\el$ related by
\begin{align}
\left(\begin{array}{c} \cH_1 \\ \vdots \\ \cH_{\ell}
\end{array}\right) 
= A\left(\begin{array}{c} \cH'_1 \\ \vdots \\ \cH'_{\ell'}
\end{array}\right)\,,
\end{align}
where each $\cH'_{i'}$ is the Hamiltonian for an independent instance of~\cspdens{b_{i'}\alpha}.
The \emph{grand Hamiltonian} is $\cH: (\{\pm 1\}^n)^\ell \to \mathbb{R}$ defined by $\cH(\mathbf{x}) = \sum_{i=1}^{\ell}\cH_i(\mathbf{x}_i)$.
\end{definition}
In this section we use $\cG$ for a grand Hamiltonian of a spin \underline{g}lass and $\cH$ for that of a CSP.
Expectations are over the randomness of the Hamiltonians.

Notice that $A$ is $\{0,1\}$-valued. One can think of this matrix as choosing which of the $\el'$ independent ``hidden'' instances are connected to each of the $\el$ ``observed'' instances.
The vector $b$ gives the variances of the hidden instances,
and the scaling is set so that the variance of each observed instance is the same as a single instance.

\begin{remark}
    This model describes any linear correlation structure on $\el$ Hamiltonians
    $\cH_1, \dots, \cH_\el$,
    in the sense that it is equivalent to the following alternative model. For the CSP model, there are $2^\el$ parameters $(\eta_S)_{S \subseteq [\el]}$
    such that $\eta_S$ specifies the fraction of constraints that are simultaneously shared between
    all of the Hamiltonians in $S$ and the remaining constraints are sampled independently.
\end{remark}

We study the free energy of our models, restricted to sets of assignments with particular overlap vectors.
\begin{definition}[Free energy of states with given overlap]
    Let $S \subseteq \mathcal{R}^{(\ell)}$ be an open set.
    Let $\cH :(\{\pm 1\}^n)^\el \to \R$ be a grand Hamiltonian. 
    Define $Z_{\cH, S}(\beta)$ as the partition function of $\cH$ when the configuration tuples are restricted to those having overlap vector in $S$; that is,
    \begin{align}
        Z_{\cH, S}(\beta) &= \sum_{\mathbf{x} \in U^{(\ell)}_n(S)} e^{\beta\,\cH(\mathbf{x})}\,.
    \end{align}
Define the free energy density of $\cH$ as
    \begin{align}
        \phi_{\cH, S}(\beta) &= \frac{1}{\beta \ell n} \log Z_{\cH,S}(\beta)\,.
    \end{align}
\end{definition}
\begin{remark}
    We ignore the edge case $U_n^{(\el)}(S) = \emptyset$.
\end{remark}

Now we can state our main interpolation.
\begin{theorem}[Interpolation of random Max-CSPs and spin glasses; generalized version of \Cref{thm:main_informal}]
\label{thm:overlap-geometry}
Choose a positive integer $\ell$.
Consider any set $S \subseteq \cR^{(\ell)}$. Let $\cG$ and $\cH$ be grand Hamiltonians for $(A,b)$-coupled models of \sgxi{} and \csp{}, respectively, where  $\xi$ is related to $\Lam$ as in \Cref{eq:main-eqn}. Then:
\begin{align}
\label{eq:free-energy-equiv}
\E \phi_{\cH,S}(\beta)
=
\E_{f \sim \Lam}[f]
+ 
\frac{1}{\sqrt{\alpha}}
\E \phi_{\cG,S}(\beta)
+ O\left( \tfrac{\beta^2}{\alpha^2}\right) + 
o_n\left(1\right)\,. \end{align}
where the second-to-last term (which may depend on $n$) satisfies $\abs{O\left(\tfrac{\beta^2}{\alpha^{2}}\right)} \leq C \cdot \tfrac{\beta^2}{\al^2}$ whenever $\tfrac{\beta}{\al} \leq \eps_0$ for constants $C, \eps_0 >0$ depending on  $\el$.

Furthermore, by the concentration arguments in \Cref{sec:concentration},
we may drop the expectations over $\cG, \cH$
with the qualitative change that the last term is random and is $o_n\left(1\right)$ with high probability.
\end{theorem}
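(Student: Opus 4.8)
The plan is to run the Guerra–Toninelli interpolation between the grand Hamiltonian $\cH$ of the coupled CSP model and (a rescaled copy of) the grand Hamiltonian $\cG$ of the coupled spin glass model, tracking the free energy restricted to the overlap window $S$. First I would introduce an interpolating family of Hamiltonians $\cH_t$ for $t \in [0,1]$, which at $t = 1$ is the CSP grand Hamiltonian (in the Poisson model, so that the multiplicity of each clause is an independent Poisson random variable and the different instances $\cH'_{i'}$ genuinely decouple) and at $t = 0$ is $\E_{f \sim \Lam}[f] \cdot (\ell n)$ plus the spin glass grand Hamiltonian $\tfrac{1}{\sqrt\alpha}\cG_t$, where the $c_p$ in $\xi$ have been chosen exactly as in \Cref{eq:main-eqn} so that the second-moment / covariance structures of the two models match at the leading order. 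Concretely, along the interpolation I would carry a Poisson clause process whose rate is scaled by $t$ and an independent Gaussian field whose variance is scaled by $(1-t)$, restrict the partition function to $\mathbf{x} \in U^{(\ell)}_n(S)$ throughout, and set $\varphi(t) = \tfrac{1}{\beta \ell n}\E \log Z_{\cH_t, S}(\beta)$.

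Next I would differentiate $\varphi(t)$ in $t$. The Poisson side contributes, via the standard ``adding a clause'' / Poissonization derivative formula, a term of the form $\alpha \cdot \E\langle \E_{f,\text{signs},\text{indices}}[\exp(\tfrac{\beta}{\alpha} f_e(\mathbf{x}_i)) - 1 - \tfrac{\beta}{\alpha} f_e(\mathbf{x}_i)]\rangle$ summed appropriately over which observed instances $i$ the clause touches (governed by $A$, $b$); the Gaussian side contributes the usual Gaussian integration-by-parts term involving the covariance $n\,\xi(R_I(\mathbf{x}))$. The key computation is to Taylor-expand the Poisson exponential to second order in $\tfrac{\beta}{\alpha}$: the zeroth and first order terms vanish, the second order term is $\tfrac{\beta^2}{2\alpha^2}\,\E_{\text{signs},\text{indices}}[f_e(\mathbf{x}_i) f_e(\mathbf{x}_{i'})]$, and here the averaging over random literal signs is what kills all Fourier cross-terms and leaves exactly $\sum_j \|f^{=j}\|^2 R_{I}(\mathbf{x})^j = \xi(R_I(\mathbf{x})) + \widehat f(\varnothing)^2$ (this is the step flagged as \Cref{eqn:whyweneedrandomsigns} and the reason random signs are needed). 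Matching this against the Gaussian derivative term shows $\varphi'(t)$ is, up to the linear $\E_{f \sim \Lam}[f]$ shift, equal to the analogous spin-glass interpolation derivative plus an error coming from the neglected cubic and higher terms in the Poisson expansion. Bounding the tail of the Poisson exponential's Taylor series by $\tfrac{\beta^3}{\alpha^3}e^{\beta/\alpha}$ and using $|f_e| \le 1$, $|R_I| \le 1$ and $|A|,|b|$ bounded in terms of $\ell$, this error is $O_\ell(\tfrac{\beta^3}{\alpha^3}) \le O_\ell(\tfrac{\beta^2}{\alpha^2})$ as long as $\tfrac{\beta}{\alpha} \le \eps_0$. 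Integrating $\varphi'$ over $t \in [0,1]$ and noting that passing from the fixed-$m$ model to the Poisson model changes the free energy density by only $o_n(1)$ (a standard concentration/coupling argument, as in the remark on the Poisson model, since $\Pois(\alpha n)$ concentrates to within $O(\sqrt n)$ of $\alpha n$) yields \Cref{eq:free-energy-equiv}.

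Finally, the ``furthermore'' clause: dropping the expectations over $\cG, \cH$ is immediate from \Cref{prop:poisson-concentration} and \Cref{lem:gaussian-concentration}, which give that $\tfrac{1}{\beta \ell n}\log Z_{\cH,S}(\beta)$ and $\tfrac{1}{\beta \ell n}\log Z_{\cG,S}(\beta)$ each deviate from their means by $O(n^{-1/2})$ with probability $1 - \exp(-\Omega(n))$ (here $f = \mathbf 1[\mathbf{x} \in U_n^{(\ell)}(S)]$ is the indicator restricting to $S$, which is nonnegative and trivially between $\exp(-C'n)$ and $\exp(C'n)$ on its support — one should handle the measure-zero event $U_n^{(\ell)}(S) = \emptyset$ by the convention noted after the definition), so a union bound over the two sides replaces each $\E \phi$ by $\phi$ at the cost of an additive $o_n(1)$ absorbed into the existing error term. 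The main obstacle I anticipate is the sign-averaging Fourier computation in the second-order Poisson term — getting the bookkeeping exactly right so that the cross-instance covariance collapses to $\widehat f(\varnothing)^2 + \xi(R_I(\mathbf{x}))$ (and in particular that the constant $\widehat f(\varnothing)^2$ piece is exactly what is subtracted off in defining $\xi$, and the linear shift that survives is precisely $\E_{f\sim\Lam}[f] = \widehat f(\varnothing)$ after integrating), and simultaneously verifying that the coupling matrix $A$ and variance vector $b$ thread through the Poisson derivative and the Gaussian derivative in the same way so the restricted-overlap structure $S$ is respected identically on both sides.
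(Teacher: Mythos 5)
Your proposal is the paper's approach: a Guerra--Toninelli interpolation restricted to $U_n^{(\ell)}(S)$ in the Poisson model, a chain-rule split into the Poisson (add-a-clause) derivative and the Gaussian (Stein's-lemma) derivative, Taylor expansion of the Poisson term with the random literal signs killing first order and producing the noise-stability polynomial at second order, an $O(\beta^2/\alpha^2)$ bound on the $t$-derivative, and then the Poisson-to-exact lemma plus concentration to drop expectations.

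However, the Poisson derivative as you write it, roughly $\alpha\,\E\langle \exp(\tfrac{\beta}{\alpha}f_e) - 1 - \tfrac{\beta}{\alpha}f_e\rangle$, is a single-replica expression, and with it the interpolation fails. The correct add-a-clause derivative is $\E\big[\log\big\langle \exp\big(\tfrac{\beta}{\alpha}(f_e - \widehat{f}(\varnothing))\big)\big\rangle_{\mathbf{x}}\big]$; expanding the outer logarithm via $\log z = -\sum_{p\geq 1}(1-z)^p/p$ introduces, at second order in $\beta/\alpha$, two-replica terms of the form $\langle f_e(\mathbf{x}_i)f_e(\mathbf{y}_j)\rangle_{\mathbf{x},\mathbf{y}}$ alongside the one-replica $\langle f_e(\mathbf{x}_i)f_e(\mathbf{x}_j)\rangle_{\mathbf{x}}$. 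These are essential: the Gaussian side, via Stein's lemma with Gibbs averages (\Cref{lem:steins-with-gibbs}), also yields a one-replica term $\langle\xi(R(\mathbf{x}_i,\mathbf{x}_j))\rangle$ and a two-replica term $-\langle\xi(R(\mathbf{x}_i,\mathbf{y}_j))\rangle$, and the cancellation making $\dv{}{t}\phi_\beta(t) = O(\beta^2/\alpha^2)$ requires matching both pieces. With your formula, the cross-replica piece of the Gaussian derivative has no partner on the Poisson side, leaving a $\Theta(\beta/\alpha)$ residual in the $t$-derivative and an $O(\beta/\alpha)$ (rather than $O(\beta^2/\alpha^2)$) error after integrating in $t$. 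Replacing your Poisson-derivative expression with $\E[\log\langle\exp\rangle]$ and expanding it via replicas, as in the paper, repairs this; the remainder of your plan (random signs forcing $\E_{f^*}\langle f^*\rangle = \widehat{f}(\varnothing)$, the $\Stab_\rho[f] = \xi(\rho) + \widehat{f}(\varnothing)^2$ identity, the $(A,b)$-bookkeeping, the Poisson-to-exact coupling, and the concentration step) is correct.
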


In order to use the Guerra-Toninelli interpolation, we modify the CSP model so that the
number of constraints is $m \sim \Pois(\al n)$ rather than $m = \al n$ fixed.
Because $\Pois(\al n)$ concentrates around its mean as $n \to \infty$, the free energy density is asymptotically the same.
The (straightforward) proof of the following lemma is in \Cref{app:csp-technical}.
Using this lemma contributes the $o_n(1)$ term to \Cref{eq:free-energy-equiv}.
\begin{restatable}{lemma}{poissonVsExact}
\label{lemma:poisson_equals_fixed_freeenergy}
    Let $\phi^{(\textnormal{Pois})}_{\cH,S}$ and $\phi^{(\textnormal{exact})}_{\cH,S}$ denote the 
    free energy density of a $\csp{}$ instance with $m~\sim~\Pois(\al n)$ and $m~=~\al n$ clauses, respectively. Then
    \begin{align}
        \abs{\E\phi^{(\textnormal{Pois})}_{\cH,S} - \E\phi^{(\textnormal{exact})}_{\cH,S}} \leq \frac{1}{\sqrt{\al n}} = 
        o_n\left(1\right)\,.
    \end{align}
\end{restatable}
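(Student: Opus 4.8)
The plan is a routine coupling argument; there is no essential obstacle. The two ingredients are a crude Lipschitz estimate showing that the free energy density moves slowly when clauses are added or deleted, and the standard fact that the fixed and Poissonized models can be coupled so as to differ in only $O(\sqrt{\al n})$ clauses in expectation.

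For the Lipschitz estimate, fix $S$ and recall that $\phi_{\cH,S}(\beta) = \frac{1}{\beta\ell n}\log Z_{\cH,S}(\beta)$ with $Z_{\cH,S}(\beta) = \sum_{\mathbf x \in U^{(\ell)}_n(S)} e^{\beta\cH(\mathbf x)}$, where the index set $U^{(\ell)}_n(S)$ depends only on $n$ and $S$, not on the Hamiltonian. Hence if two grand Hamiltonians obey $\sup_{\mathbf x}\abs{\cH_1(\mathbf x) - \cH_2(\mathbf x)} \le \delta$, then termwise comparison gives $e^{-\beta\delta}Z_{\cH_2,S}(\beta) \le Z_{\cH_1,S}(\beta) \le e^{\beta\delta}Z_{\cH_2,S}(\beta)$, so (the $\beta$ cancels) $\abs{\phi_{\cH_1,S}(\beta) - \phi_{\cH_2,S}(\beta)} \le \delta/(\ell n)$. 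Since each clause is valued in $[-1,1]$ and the CSP Hamiltonian is scaled by $1/\al$, deleting or adding $D$ clauses changes the Hamiltonian they belong to --- hence the grand Hamiltonian, up to a factor depending only on the coupling data $(A,b)$ --- by at most $O(D/\al)$ uniformly over $\mathbf x$.

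For the coupling, draw an i.i.d.\ sequence $e_1, e_2, \dots$ of uniformly random clauses (each carrying its $k$ variable indices, a draw $f \sim \Lam$, and $k$ random signs), and independently draw $M \sim \Pois(\al n)$. Take the exact instance to be $(e_1, \dots, e_{\al n})$ and the Poisson instance to be $(e_1, \dots, e_M)$; by Poissonization (thinning) the latter has exactly the law of the Poisson model described in \Cref{sec:notation}. The two instances differ in $D := \abs{M - \al n}$ clauses, so the previous paragraph gives $\abs{\phi^{(\textnormal{Pois})}_{\cH,S} - \phi^{(\textnormal{exact})}_{\cH,S}} \le O(D)/(\al\ell n)$ almost surely over the coupling. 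Taking expectations and using $\E D = \E\abs{M - \al n} \le \sqrt{\Var(M)} = \sqrt{\al n}$ (Cauchy--Schwarz) yields a bound of order $\tfrac{1}{\sqrt{\al n}} = o_n(1)$; in the single-instance case ($\ell = 1$, no coupling) the constant is exactly $1$ as stated, and for a genuine grand Hamiltonian one Poissonizes each hidden instance separately, picking up a harmless $(A,b)$-dependent constant but the same conclusion. The degenerate case $U^{(\ell)}_n(S) = \emptyset$ is excluded as noted, and when $\al n \notin \Z$ one replaces it by $\lceil \al n\rceil$, which is immaterial asymptotically. The only point requiring any care is tracking that the Lipschitz constant of $\phi_{\cH,S}$ in the Hamiltonian is $1/(\ell n)$, in particular independent of $\beta$, so that multiplying by the per-clause size $1/\al$ and by $\E D = O(\sqrt{\al n})$ produces a bound of order $\tfrac{1}{\sqrt{\al n}}$ rather than one that blows up at large $\beta$.
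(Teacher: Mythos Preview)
Your proposal is correct and follows essentially the same route as the paper: couple the two instances to share the first $\min(M,\al n)$ clauses, bound the sup-norm difference of the Hamiltonians by $\abs{M-\al n}/\al$, translate this into a $\beta$-free bound on the free energy density, and finish with $\E\abs{M-\al n}\le\sqrt{\Var(M)}=\sqrt{\al n}$. Your additional remarks about the $\ell>1$ coupled case and the $\beta$-independence of the Lipschitz constant are fine elaborations but not substantively different from the paper's argument.
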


\subsection{Proof of \texorpdfstring{\Cref{thm:overlap-geometry}}{Theorem 3.4}}
For notational convenience, we assume in this proof that $\Lam$ is supported
on a single predicate $f$. The general case is handled by converting $\widehat{f}(\emptyset)$ back to $\E_{f \sim \Lam}[f]$.

Fix positive integers $\ell, \ell'$. We define an interpolated Hamiltonian
\begin{align}\label{eq:gen-gt-interpolation}
    \cK(t,\mathbf{x}) = \cH^{\alpha (1-t) n}(\mathbf{x}) - (1-t) \ell n \widehat{f}(\varnothing) + \sqrt{\frac{t}{\alpha}} \cG(\mathbf{x})\,,
\end{align}
where $\cH^{\alpha (1-t) n}$ is the grand Hamiltonian of an $(A,b)$-coupled model of \cspdens{\alpha(1-t)}. The parameter $t$ controls the interpolation from the Max-CSP (when $t = 0$) to the spin glass (when $t=1$).

We let $\beta > 0$ be a parameter independent of $n$.
We will later choose $\beta$ such that $1 \ll \beta \ll \alpha$ as $\alpha \to \infty$.

Fix an open subset $S \subseteq \cR^{(\ell)}$. We write the average free energy density of $\cK$, at inverse temperature $\beta$, among states that produce overlap vectors in $S$:
\begin{align}
    \phi_{\beta}(t) :=& \E_{\cH, \cG}\phi_{\cK(t, \cdot), S}(\beta)\\
    =& \frac{1}{\ell n} \frac{1}{\beta}
    \E_{\cH, \cG} 
    \log 
    \sum_{\mathbf{x} \in U_n(S)} 
    \exp\left( \beta \cK(t,\mathbf{x}) \right)\,.
\end{align}

In this proof, we upper-bound the derivative $\dv{}{t}\phi_{\beta}(t)$, and thereby the difference between $\phi_{\beta}(0)$ and $\phi_{\beta}(1)$, showing that the free energy densities of $\cG$ and $\cH$ are close.

\subsubsection{Taking the derivative of $\phi_{\beta}(t)$}

Let's calculate $\dv{}{t}\phi_{\beta}(t)$.
First, we generalize $\cK$ and $\phi_\beta$ so the $t$-dependence of $\cG$ and $\cH$ are controlled by independent parameters:
\begin{align}
    \widetilde \cK(\rho, \gamma, \mathbf{x}) 
    &:= \cH^{\rho}(\mathbf{x}) - \frac{\rho\ell }{\alpha} \widehat{f}(\varnothing) + \frac{\gamma}{\sqrt{\alpha}} \cG(\mathbf{x})\,,
    \\
    \widetilde\phi_{\beta}(\rho,\gamma) &:= \frac{1}{\beta \el n} \E_{\cH^{\rho},\cG} \log \sum_{\mathbf{x} \in U_n(S)} \exp\left( \beta  \widetilde \cK(\rho,\gamma,\mathbf{x})\right)\,.
\end{align}
When $\rho = (1-t) \alpha n$ and $\gamma = \sqrt{t}$, we recover the original expressions:
\begin{align}
 \cK(t, \mathbf{x}) &= \widetilde \cK((1-t)\alpha n,\sqrt{t},\mathbf{x})
 \\
\phi_{\beta}(t) &= \widetilde \phi_{\beta}((1-t)\alpha n,\sqrt{t})
\end{align}
We then take a derivative using the chain rule:
\begin{align}
\label{eq:coupled-chain-rule}
    \dv{}{t}\phi_{\beta}(t) &=
    \left(\pdv{}{\rho}\widetilde\phi_{\beta}(\rho,\gamma)\right)\dv{\rho}{t} + \left(\pdv{}{\gamma}\widetilde\phi_{\beta}(\rho,\gamma)\right)\dv{\gamma}{t}
    \\
    &= - \alpha n \left(\pdv{}{\rho}\widetilde\phi_{\beta}(\rho,\gamma)\right) +
    \frac{1}{2 \sqrt{t}}\left(\pdv{}{\gamma}\widetilde\phi_{\beta}(\rho,\gamma)\right)
\end{align}

We also introduce a Gibbs expectation operator to use when computing the partial derivatives. Given a function $p(\mathbf{x})$, the average of $p$
with respect to the Gibbs distribution of $\widetilde{\cK}(\rho, \gamma, x)$ is
\begin{align}
     \iprod{p}_{\mathbf{x}} := \frac{
     \sum_{\mathbf{x} \in U_n(S)}
     p(\mathbf{x})
     \cdot \exp\left(\beta \widetilde \cK(\rho, \gamma,\mathbf{x})
     \right)
     }
     {\sum_{\mathbf{x} \in U_n(S)} \exp\left(\beta \widetilde \cK(\rho, \gamma,\mathbf{x})\right)}\,.
\end{align}

\subsubsection{The Poisson derivative}
To calculate 
$\pdv{}{\rho}\widetilde\phi_{\beta}(\rho,\gamma)$, we introduce $\ell'$ new variables $\rho_1, \dots, \rho_{\ell'}$ to parameterize the independent Poisson instances used to construct $\cH^{\rho}$. 
We introduce intermediate functions
\begin{align}
    \widetilde \cK(\rho_1, \dots,\rho_{\ell'},\gamma,\mathbf{x})   &=  
    \sum_{i\in[\ell]} \bigg[ \bigg( A\left(\begin{array}{c} \cH'^{\rho_1}_1 \\ \vdots \\ \cH_{\ell'}'^{\rho_{\ell'}}
\end{array}\right)\bigg)_i(\mathbf{x}_i)
    - \bigg(A\left(\begin{array}{c} \rho_1 \\ \vdots \\ \rho_{\ell'}
\end{array}\right)\bigg)_i{\frac{\widehat{f}(\varnothing)}{\alpha}}+
 \frac{\gamma}{\sqrt{\alpha}} \cG_i(\mathbf{x}_i)
    \big]\,,
  \\
   \widetilde\phi_{\beta}(\rho_1, \dots, \rho_{\ell'},\gamma) &= \frac{1}{\beta \el n} \E_{\cH_1'^{\rho_1},\dots,\cH_{\ell'}'^{\rho_{\ell'}}}\E_{\cG}
   \bigg[ 
   \log \sum_{\mathbf{x} \in U^{(\ell)}_n(S)} \exp\left( \beta \widetilde \cK(\rho_1, \dots, \rho_{\ell'},\gamma,\mathbf{x})\right)
   \bigg]\,.
  \end{align}
When $\rho_{i'} = b_{i'}\rho$ for all $i'\in[\ell']$, we recover the original functions $\widetilde \cK$ and $\widetilde \phi_\beta$. In this case, through another application of the chain rule,
\begin{align}
    \pdv{}{\rho}\widetilde\phi_{\beta}(\rho,\gamma)
= \sum_{i' \in [\ell']}
\bigg(\pdv{}{\rho_{i'}}\widetilde\phi_{\beta}(\rho_1, \dots, \rho_{\ell'},\gamma)\bigg)\pdv{\rho_{i'}}{\rho}
= \sum_{i \in [\ell']} b_{i'}\pdv{}{\rho_{i'}}\widetilde\phi_{\beta}(\rho_1, \dots, \rho_{\ell'},\gamma)\,.
\end{align}
We use the explicit derivative of a Poisson variable:
\begin{fact}[Derivative of a Poisson variable]\label{fact:poisson-derivative}
\begin{align}\pdv{}{\lam} \E_{X \sim \Pois(\lam)} \big[f(X)\big] = \E_{X \sim \Pois(\lam)}\big[f(X+1) - f(X)\big]
\end{align}
\end{fact}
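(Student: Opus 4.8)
The plan is to expand the Poisson expectation as an explicit power series in $\lam$ and then differentiate term by term. Writing out the density of $\Pois(\lam)$,
\begin{align}
\E_{X \sim \Pois(\lam)}[f(X)] = \sum_{k \ge 0} f(k)\, e^{-\lam}\,\frac{\lam^k}{k!}\,.
\end{align}
Provided $f$ grows at most exponentially (which is the case in all our applications, where $f$ is a log-partition function), this series together with its formal $\lam$-derivative converges uniformly on compact subsets of $(0,\infty)$, so one may differentiate under the sum by a standard dominated-convergence argument. This justification is the only place where any care is needed.

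Next I would apply the product rule to a single summand,
\begin{align}
\pdv{}{\lam}\!\left(e^{-\lam}\,\frac{\lam^k}{k!}\right) = -\,e^{-\lam}\,\frac{\lam^k}{k!} + e^{-\lam}\,\frac{\lam^{k-1}}{(k-1)!}\,,
\end{align}
where the second term is understood to vanish at $k=0$. Summing $f(k)$ against the first group of terms reassembles exactly $-\,\E_{X \sim \Pois(\lam)}[f(X)]$.

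For the second group I would reindex with $j = k-1$, which gives
\begin{align}
\sum_{k \ge 1} f(k)\, e^{-\lam}\,\frac{\lam^{k-1}}{(k-1)!} = \sum_{j \ge 0} f(j+1)\, e^{-\lam}\,\frac{\lam^{j}}{j!} = \E_{X \sim \Pois(\lam)}[f(X+1)]\,.
\end{align}
Adding the two groups yields $\E[f(X+1)] - \E[f(X)] = \E_{X \sim \Pois(\lam)}[f(X+1) - f(X)]$, which is the claim. The whole argument is elementary; the ``hard part'', such as it is, is merely checking that term-by-term differentiation is legitimate for the class of $f$ to which the identity is applied, and in that setting (log-partition functions, hence mildly growing) this is routine.
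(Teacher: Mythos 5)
Your proof is correct and is the standard argument; the paper states this as a Fact without supplying a proof, so there is nothing to compare against. One small remark: since the Poisson weights $\lam^k/k!$ decay superexponentially, the dominated-convergence justification holds for $f$ with even factorial-type growth, so the "at most exponential" hypothesis you invoke (which is indeed satisfied by the log-partition functions in the application) is more than generous.
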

Because of \Cref{fact:poisson-derivative},
\begin{align}
&\pdv{}{\rho_{i'}}\widetilde\phi_{\beta}(\rho_1, \dots, \rho_{\ell'},\gamma)  
= 
\widetilde\phi_{\beta}(\rho_1, \dots, \rho_{i'}+1, \dots, \rho_{\ell'},\gamma) - \widetilde\phi_{\beta}(\rho_1, \dots, \rho_{i'}, \dots, \rho_{\ell'},\gamma)
\\
&=  \frac{1}{\beta \ell n} \E_{\cH_1'^{\rho_1},\dots,\cH_{\ell'}'^{\rho_{\ell'}}}\E_{\cG}\log
\bigg\langle
\exp\bigg(
\beta
\widetilde \cK(\rho_1, \dots, \rho_{i'}+1,\dots,\rho_{\ell'},\gamma,\mathbf{x}) 
-
\beta
\widetilde \cK(\rho_1, \dots, \rho_{i'},\dots,\rho_{\ell'},\gamma,\mathbf{x})
\bigg)
\bigg\rangle_{\mathbf{x}}
\\
&= \frac{1}{\beta \ell n}
\E_{\cH_1'^{\rho_1},\dots,\cH_{\ell'}'^{\rho_{\ell'}}}
\log\bigg\langle
\exp
\bigg(
\beta
\sum_{j\in[\ell]} A_{ji'} 
\big(
(\cH_{i'}'^{\rho_{i'}+1} - \cH_{i'}'^{\rho_{i'}})(\mathbf{x}_j) - \widehat{f}(\varnothing)
\big)
\bigg)
\bigg\rangle_{\mathbf{x}}\, .
\end{align}
The difference of $\cH_{i'}'^{\rho_{i'}+1}$ and $\cH_{i'}'^{\rho_{i'}}$ is a single extra clause (normalized by $\alpha$), applied to $k$ random indices of the input with random signs.
Let $f^*$ be the extra clause. Then
\begin{align}
    =  \frac{1}{\beta  \ell n}
    \E_{f^*}
\log\iprod{
\exp\left(
\frac{\beta}{\alpha}
\sum_{j\in[\ell]} A_{ji'} 
\big(
f^*(\mathbf{x}_j) - \widehat{f}(\varnothing)
\big)
\right)
}_{\mathbf{x}}\,.
\end{align}

When $\frac{\beta}{\alpha}$ is small, this term can be Taylor-expanded as  $\log z = \log\left(1 - (1 - z)\right)= - \sum_{p \ge 1} \frac{(1-z)^p}{p}$:
\begin{align}
 &=  \frac{-1}{\beta\ell n}
\E_{f^*}
\sum_{p \ge 1} \frac{1}{p}\bigg( 1 - 
\iprod{
\exp\left(
\frac{\beta}{\alpha}
\sum_{j\in[\ell]}
A_{ji'}
\big(
f^*(\mathbf{x}_j) - { \widehat{f}(\varnothing)}
\big)
\right)
}_{\mathbf{x}}
\bigg)^p\,.
\end{align}
We introduce additional ``replicas'' $\mathbf{x}_{(s)}$ of $\mathbf{x}$. Precisely, each $\mathbf{x}_{(s)}$ is an i.i.d. copy of $\mathbf{x}$.
For any function $w$ and any set of replicas $y_0,y_1,\dots$, we have the identity $\iprod{w(y_0)}^p = \iprod{\prod_{s \in [p]}w(y_s)}$. As a result,
\begin{align}
 &=  \frac{-1}{\beta \ell n}
\E_{f^*} 
\iprod{
\sum_{p \ge 1}
\frac{1}{p}
\prod_{s=1}^p 
\bigg( 1 -
\exp\left(
\frac{\beta}{\alpha}
\sum_{j\in[\ell]}
A_{ji'}
\big(
f^*(\mathbf{x}_{(s)_j}) - { \widehat{f}(\varnothing)}
\big)
\right)
\bigg)
}_{\mathbf{x}_{(1)}, \mathbf{x}_{(2)}, \dots}\,.
\end{align}
We Taylor-expand this expression in $\frac{\beta}{\alpha}$. The first line comes from the $p=1$ term and the second line comes from the $p=2$ term:
\begin{align}
 =  \frac{-1}{\beta \ell n}
\E_{f^*}
\bigg\langle
&- \frac{\beta}{\alpha} \sum_{j\in[\ell]} A_{ji'} 
\big(
f^*(\mathbf{x}_j) - { \widehat{f}(\varnothing)}
\big) 
-
\frac{\beta^2}{2\alpha^2} 
\bigg(
\sum_{j\in[\ell]} A_{ji'} 
\big(f^*(\mathbf{x}_j) - { \widehat{f}(\varnothing)}
\big)
\bigg)^2 \nonumber
\\
&+ 
\frac{\beta^2}{2\alpha^2} 
\bigg(
\sum_{j\in[\ell]} A_{ji'} 
\big(
f^*(\mathbf{x}_j) - { \widehat{f}(\varnothing)}
\big) 
\bigg)
\bigg(
\sum_{j\in[\ell]} A_{ji'} 
\big(
f^*(\mathbf{y}_j) - { \widehat{f}(\varnothing)}
\big) 
\bigg)
+ O\left( \tfrac{\beta^3}{\alpha^3}\right)
\bigg\rangle_{\mathbf{x}, \mathbf{y}}\,.
\end{align}
Notice that because the clause applies random signs to its input,
\begin{align}
\label{eqn:whyweneedrandomsigns}
    \E_{f^*} \big[ \big \langle f^*(\mathbf{x}_j) \big \rangle_{\mathbf{x}_j} \big] = \widehat{f}(\varnothing)\,.
\end{align}
As a result, the only terms that remain are 
\begin{align}
 =  \frac{- \beta }{2 \alpha^2 \ell n} 
\E_{f^*}
\bigg\langle
&
\sum_{i,j\in[\ell]} A_{ii'}A_{ji'} 
\big(
-
f^*(\mathbf{x}_i) f^*(\mathbf{x}_j) 
+ 
f^*(\mathbf{x}_i) f^*(\mathbf{y}_j) 
\big)
\bigg\rangle_{\mathbf{x}, \mathbf{y}}
+  \frac{1}{ \ell n} O\left( \tfrac{\beta^2}{\alpha^3}\right)\,.
\end{align}  

From here, we rewrite the correlation of $f^*$ as a function of the noise stability of $f$. Specifically, let $u_a = \eps_a(\mathbf{x}_i)_{d_a}$ and $v_a = \eps_a(\mathbf{y}_j)_{d_a}$ for some uniformly chosen $\eps \unif \{\pm 1\}^k$ and $d \unif [\ell]^k$. Then $u,v$ are marginally uniform points in the hypercube that are $R(\mathbf{x}_{i},\mathbf{y}_{j})$-correlated. 

Write $f^*(\sigma) = f(\eps_1\sigma_{d_1}, \dots, \eps_k\sigma_{d_k})$. Then 
\begin{align}
    \iprod{f^*(\mathbf{x}_i)f^*(\mathbf{y}_j)}_{\mathbf{x}_i,\mathbf{y}_j}  = \iprod{f(u)f(v)}_{\mathbf{x}_i,\mathbf{y}_j} = 
    \big\langle \Stab_{R(\mathbf{x}_i,\mathbf{y}_j)}[f] \big\rangle_{\mathbf{x}_i,\mathbf{y}_j}
\end{align}
where $\Stab_{R(\mathbf{x}_i,\mathbf{y}_j)}[f]$ is defined as in~\Cref{defn:noise-stability}. Using this, we get
\begin{align}
&= \frac{- \beta}{2 \alpha^2 \ell n} 
\sum_{i,j\in[\ell]} A_{ii'}A_{ji'} 
\big\langle
- \Stab_{R(\mathbf{x}_{i}, \mathbf{x}_{j})}[f]
+ \Stab_{R(\mathbf{x}_{i}, \mathbf{y}_{j})}[f]
\big\rangle_{\mathbf{x}, \mathbf{y}}
+  \frac{1}{ \ell n} O\left( \tfrac{\beta^2}{\alpha^3}\right)
\\
&= \frac{\beta}{2 \alpha^2 \ell n} 
\sum_{i,j\in[\ell]} A_{ii'}A_{ji'} 
\big\langle
 \xi(R(\mathbf{x}_i, \mathbf{x}_j))
- \xi(R(\mathbf{x}_i, \mathbf{y}_j))
\big\rangle_{\mathbf{x},\mathbf{y}}
+  \frac{1}{ \ell n} O\left( \tfrac{\beta^2}{\alpha^3}\right)\,.
\end{align}

We can now write the $\rho$-derivative of $\widetilde{\phi}_\beta$:
\begin{align}
    \pdv{}{\rho}\widetilde\phi_{\beta} =\frac{\beta}{2 \alpha^2 \ell n}  \sum_{i' \in [\ell']} \sum_{i,j \in [\ell]} b_{i'} A_{ii'}A_{ji'} 
\big\langle
 \xi(R(\mathbf{x}_i, \mathbf{x}_j))
- \xi(R(\mathbf{x}_i, \mathbf{y}_j))
\big\rangle_{\mathbf{x},\mathbf{y}} 
+  \frac{1}{ \ell n} O\left( \tfrac{\beta^2}{\alpha^3}\right)
\end{align}
\subsubsection{The Gaussian derivative}

We compute the derivative of $\widetilde\phi_{\beta}(\rho, \gamma)$ with respect to $\gam$. 
Since the variance of each $\cG_i$ is independent of $\gamma$, 
the derivative pulls into the expectation operator: 
\begin{align}
    &\pdv{\gam}\widetilde\phi_{\beta}(\rho, \gamma) = \frac{1}{\beta \ell n}\E_{\cH^\rho, \cG }\pdv{\gam}\log\ \bigg[ \sum_{\mathbf{x} \in U_n(S)}\exp\left(\beta
    \widetilde{\cK}(\rho, \gam, \mathbf{x})\right) \bigg]
\end{align}
By the chain rule, the $\gamma$-derivative of the partition function is proportional to the Gibbs average of the $\gamma$-derivative of the Hamiltonian $\widetilde{\cK}$:
\begin{align}
    &= \frac{1}{\ell n}
    \E_{\cH^\rho, \cG}
    \left[
    \left\langle 
    \pdv{\gam}
    \left(
    \widetilde{\cK}(\rho, \gam, \mathbf{x})
    \right)
    \right\rangle_{\mathbf{x}}
    \right] 
\end{align}

Notice that the expectation is a correlation between a Gaussian process and a Gibbs measure, so we can use the following formula:
\begin{lemma}[Stein's lemma with Gibbs average~{\cite[Lemma 1.1]{panchenko2013sherrington}}]\label{lem:steins-with-gibbs}
    Consider two jointly Gaussian processes $\{Y(\sigma)\}_\sigma$ and $\{Z(\sigma)\}_\sigma$. For any $w$, let $\langle \cdot \rangle_{\sigma_1,\dots,\sigma_w}$ be the $w$-product Gibbs measure with respect to the process $\{Z(\sigma)\}_\sigma$. Then
    \begin{align}
        \E_{Y,Z}[\langle Y(\sigma) \rangle_\sigma] = \E_{Z}\left[\bigg\langle\E_{Y}\bigg[Y(\sigma)Z(\sigma)\bigg]\bigg\rangle_\sigma - \bigg\langle \E_{Y}\bigg[Y(\sigma)Z(\sigma')\bigg]\bigg\rangle_{\sigma, \sigma'}\right]\, .
    \end{align}
\end{lemma}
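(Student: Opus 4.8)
The plan is to derive this identity from Gaussian integration by parts (Wick's formula) applied to the Gibbs weights, following Panchenko. Since the index set here is the finite hypercube $\{\pm 1\}^n$, everything is a finite-dimensional Gaussian integral and there are no analytic subtleties. Write the Gibbs weight of a configuration $\sigma$ as
\[
W_\sigma := \frac{e^{Z(\sigma)}}{\sum_{\tau} e^{Z(\tau)}}\,,
\]
so that $\langle p(\sigma)\rangle_\sigma = \sum_\sigma p(\sigma) W_\sigma$ and $\langle p(\sigma,\sigma')\rangle_{\sigma,\sigma'} = \sum_{\sigma,\sigma'} p(\sigma,\sigma') W_\sigma W_{\sigma'}$, and note $\sum_\sigma W_\sigma = 1$ identically. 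The argument is unchanged if the Gibbs measure also carries extra configuration-dependent weights that do not depend on the Gaussian family $Y$; this covers the situation in the application, where the measure also carries the non-Gaussian factor $e^{\beta\cH^\rho}$, so one conditions on $\cH^\rho$ first and then applies the lemma with $Y = \tfrac{1}{\sqrt{\alpha}}\cG$ and the Gaussian part of $\widetilde{\cK}$ playing the role of $Z$.

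First I would expand
\[
\E_{Y,Z}\big[\langle Y(\sigma)\rangle_\sigma\big] = \sum_\sigma \E\big[Y(\sigma)\, W_\sigma\big]\,,
\]
and, for each fixed $\sigma$, apply Gaussian integration by parts in the variable $Y(\sigma)$ against the function $Z \mapsto W_\sigma$: since $\big(Y(\sigma),(Z(\tau))_\tau\big)$ is jointly centered Gaussian,
\[
\E\big[Y(\sigma)\, W_\sigma\big] = \sum_\tau \E\big[Y(\sigma)Z(\tau)\big]\;\E\!\left[\frac{\partial W_\sigma}{\partial Z(\tau)}\right]\,.
\]
The growth conditions this formula needs are trivial here, as $W_\sigma \in [0,1]$ with bounded partial derivatives.

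Next I would compute the derivative by the quotient rule,
\[
\frac{\partial W_\sigma}{\partial Z(\tau)} = \mathbf{1}[\tau=\sigma]\,W_\sigma - W_\sigma W_\tau\,,
\]
substitute, and sum over $\sigma$ and $\tau$. Writing $C(\sigma,\tau) := \E_Y[Y(\sigma)Z(\tau)]$ for the deterministic covariance kernel of the two Gaussian families, the diagonal contribution is $\sum_\sigma C(\sigma,\sigma)\,\E_Z[W_\sigma] = \E_Z\big[\langle C(\sigma,\sigma)\rangle_\sigma\big]$ and the off-diagonal contribution is $-\sum_{\sigma,\tau} C(\sigma,\tau)\,\E_Z[W_\sigma W_\tau] = -\E_Z\big[\langle C(\sigma,\sigma')\rangle_{\sigma,\sigma'}\big]$, which together are exactly the claimed identity. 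Passing to a $w$-product Gibbs measure changes nothing: $Y$ depends on a single replica, so the spectator replicas integrate out by $\sum_\sigma W_\sigma = 1$.

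I do not expect a real obstacle, since the statement is standard. The two points worth care are: (i) in the application the Gibbs measure is not driven by a purely Gaussian Hamiltonian, so one must condition on the non-Gaussian CSP part $\cH^\rho$ before invoking the lemma; and (ii) $\E_Y[Y(\sigma)Z(\tau)]$ must be read as the deterministic covariance function of the two Gaussian families rather than a conditional expectation, so that it passes cleanly through both the Gibbs average and the outer $\E_Z$. One can alternatively prove the identity by interpolating $Z$ along $\sqrt{s}\,Z$, but the integration-by-parts route is shorter and is the one I would take.
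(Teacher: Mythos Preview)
Your proposal is correct and is exactly the standard Gaussian integration-by-parts argument that underlies Panchenko's Lemma 1.1. Note that the paper does not actually supply its own proof of this lemma; it simply cites \cite[Lemma 1.1]{panchenko2013sherrington} and uses the result, so there is nothing in the paper to compare against beyond confirming that your derivation matches the cited source, which it does.
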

Consider the following Gaussian processes:
\begin{align}
    S_{\cH^\rho, \cG}: (\mathbf{x}) &\to \pdv{\gam}
    \left(
    \widetilde{\cK}(\rho, \gam, \mathbf{x})
    \right)
    \\
    T_{\cH^\rho, \cG}: (\mathbf{x}) &\to \beta
    \widetilde{\cK}(\rho, \gam, \mathbf{x})
\end{align}
Applying~\Cref{lem:steins-with-gibbs} then yields
    \begin{align}
        = \frac{1}{\ell n} 
        \E_{\cH'^{\rho}, \cG}
        \bigg[ 
        \left \langle 
        \E_{\cH'^{\rho}, \cG}
        \big[ 
        S_{\cH^\rho, \cG}(\mathbf{x})
        T_{\cH^\rho, \cG}(\mathbf{x})
        \big]
        \right \rangle_{\mathbf{x}}
        - \ 
        \left \langle
        \E_{\cH'^{\rho}, \cG}
        \big[ 
        S_{\cH^\rho, \cG}(\mathbf{x})
        T_{\cH^\rho, \cG}(\mathbf{y})
        \big]
        \right \rangle_{\mathbf{x},\mathbf{y}}
        \bigg]\,.
    \end{align}

The derivative only acts on the Gaussian components of the Hamiltonian $\widetilde{\cK}$:
    \begin{align}
    \qquad\qquad= 
    \frac{1}{\sqrt{\alpha} \ell n}
    \E_{\cH^\rho, \cG }
    \bigg[
    &\bigg\langle
      \E_{\cG}
        \big[ 
        \cG(\mathbf{x})
     T_{\cH^\rho, \cG}(\mathbf{x})
    \big]
    \bigg\rangle_{\mathbf{x}} 
    - 
    \bigg\langle
     \E_{\cG}
        \big[ 
        \cG(\mathbf{x})
     T_{\cH^\rho, \cG}(\mathbf{y})
     \big]
    \bigg\rangle_{\mathbf{x},\mathbf{y}}
    \bigg]
    \\ 
     = 
     \frac{\beta}{\sqrt{\alpha} \ell n}
     \E_{\cH^\rho, \cG }
     \bigg[
     &\left\langle
        \E_{\cG} \big[
        \cG(\mathbf{x})
        \big(
        \cH^{\rho}(\mathbf{x}) -\  \frac{\rho \ell}{\alpha} \widehat{f}(\varnothing)
        + \frac{\gamma}{\sqrt{\alpha}}\cG(\mathbf{x})
        \big)
     \big]
     \right\rangle_{\mathbf{x}} - 
     \nonumber \\  
     &
     \bigg\langle 
     \E_{\cG} \big[
             \cG(\mathbf{x})
        \big(
        \cH^{\rho}(\mathbf{y}) - \frac{\rho \ell}{\alpha} \widehat{f}(\varnothing)
        + \frac{\gamma}{\sqrt{\alpha}}\cG(\mathbf{y})
        \big)
     \big]
     \bigg\rangle_{\mathbf{x},\mathbf{y}}
     \bigg]
    \end{align}
The constant terms (proportional to $\widehat{f}(\varnothing)$) cancel.
Furthermore, $\cG$ is centered and is independent of $\cH$: For each independent instance $\cG'_i$, $\E[\cG'_{i}\cH^\rho] = 0$, and by linearity of expectation, $\E[\cG \cH^\rho] = 0$.
All that remains is
\begin{align}
    &= 
    \frac{\beta \gamma}{\alpha\ell n}
    \E_{\cG }
    \bigg[
    \bigg\langle
     \E_{\cG} 
     \big[
    \cG(\mathbf{x})\cG(\mathbf{x})
     \big]
     \bigg\rangle_{\mathbf{x}}
    - \ \ 
   \bigg\langle
    \E_{\cG} 
     \big[
    \cG(\mathbf{x})\cG(\mathbf{y})
   \big]
   \bigg\rangle_{\mathbf{x},\mathbf{y}}
   \bigg]\,.
    \end{align}
    Using the definition of an $(A,b)$-coupled model, this is
    \begin{align}
    = \frac{\beta\gamma}{ \alpha \ell n}\E_{\cG}\bigg[
    &\bigg\langle\sum_{i,j \in [\ell]}\sum_{i',j' \in [\ell']}A_{ii'}A_{ji'}\sqrt{b_{i'}b_{j'}}
    \E_{\cG}[\cG'_{i'}(\mathbf{x}_i)\cG'_{j'}(\mathbf{x}_j)]
    \bigg\rangle_{\mathbf{x}} - \nonumber \\
    &\bigg\langle\sum_{i,j\in [\ell]}
    \sum_{i',j' \in [\ell']}
    A_{ii'}A_{ji'}\sqrt{b_{i'}b_{j'}}\E_{\cG}[\cG'_{i'}(\mathbf{x}_i)\cG'_{j'}(\mathbf{y}_j)]\bigg\rangle_{\mathbf{x},\mathbf{y}}\bigg]\,.
    \end{align}
   By \Cref{fact:cov-gaussian-process}, we can write the covariance as a function of $\xi$:
    \begin{align}
        \E_{\cG}[\cG'_{i'}(\mathbf{x}_i) \cG'_{j'}(\mathbf{y}_j)] =
        \begin{cases}
n \xi(R(\mathbf{x}_i,\mathbf{y}_j)) & i' = j' \\
0 & \text{otherwise}
\end{cases}
    \end{align}
    Thus, we have
    \begin{align}
        = 
         \frac{\beta\gam}{ \alpha \ell}
    \E_{\cG}
        \bigg[
        \bigg \langle 
        \sum_{i,j \in [\ell]}
        \sum_{i' \in [\ell']}
        A_{ii'}A_{ji'}b_{i'}
              \big(
        \xi(R(\mathbf{x}_i,\mathbf{x}_j)) 
         -
   \xi(R(\mathbf{x}_i, \mathbf{y}_j))
    \big)
    \bigg\rangle_{\mathbf{x},\mathbf{y}}\bigg]\,.
    \end{align}

\subsubsection{Putting it all together}
Now we can calculate the total derivative of $\phi_{\beta}(t)$:
\begin{align}
 \dv{}{t}\phi_{\beta}(t) &=
    -\alpha n \left(\pdv{}{\rho}\widetilde\phi_{\beta}(\rho,\gamma)\right) +
    \frac{1}{2\sqrt{t}}\left(\pdv{}{\gamma}\widetilde\phi_{\beta}(\rho,\gamma)\right)
    \\
    &=\big(\frac{-\beta}{2 \alpha \ell} + \frac{\beta \gamma}{2 \alpha \ell \sqrt{t}}  \big)
    \sum_{i' \in [\ell']} \sum_{i,j \in [\ell]} b_{i'} A_{ii'}A_{ji'} 
\big\langle
 \xi(R(\mathbf{x}_i, \mathbf{x}_j))
- \xi(R(\mathbf{x}_i, \mathbf{y}_j))
\big\rangle_{\mathbf{x},\mathbf{y}}   - \frac{1}{\ell} O\left( \tfrac{\beta^2}{\alpha^2}\right)
\\
&= O\left( \tfrac{\beta^2}{\alpha^2}\right)\,,
\end{align}
since $\gamma = \sqrt{t}$ and $\ell$ is a constant.
So 
\begin{align}
    \abs{\phi_\beta(1) - \phi_\beta(0)} 
    \le \max_{t \in [0,1]} \dv{}{t} \phi_\beta(t) =  O\left( \tfrac{\beta^2}{\alpha^2}\right) \,.
\end{align}
This proves \Cref{thm:overlap-geometry}.

\section{Optimal value of a random Max-CSP}
\label{sec:optimalvalue}

As a corollary of \Cref{thm:overlap-geometry}, we prove that
in the large $\alpha$ limit,
the optimal value of a coupled Max-CSP among solutions with given overlap structure is determined by that of a spin glass.
\begin{corollary}
\label{prop:max-value}
    The following holds for all $\el, \el'$ and $(A,b)$-coupled models $\cG$ of \sgxi{} and $\cH$ of $\csp{}$,
    and all open sets $S \subseteq \cR^{(\el)}$,
    \begin{align}
        \E\max_{\mathbf{x} \in U^{(\el)}_n(S)} \frac{1}{\el n} \cH(\mathbf{x}) = \E_{f \sim \Lam}[f] +  \E\max_{\mathbf{x} \in U^{(\el)}_n(S)}\frac{1}{\sqrt{\al}} \frac{1}{\el n} \cG(\mathbf{x}) + O(\alpha^{-2/3}) + o_n(1)\,.
    \end{align}
    where the second-to-last term (which may depend on $n$) satisfies $\abs{O\left(\alpha^{-2/3}\right)} \leq C \cdot \alpha^{-2/3}$ for all $\alpha \geq C$ for a constant $C$ depending on $\el$.

    Furthermore, by the concentration arguments in \Cref{sec:concentration},
    we may drop the expectations over $\cG, \cH$
    with the qualitative change that the last term is random and is $o_n\left(1\right)$ with high probability.
\end{corollary}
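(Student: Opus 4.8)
The plan is to obtain \Cref{prop:max-value} directly from the free‑energy interpolation of \Cref{thm:overlap-geometry} by passing between the maximum of a Hamiltonian and its free energy via the elementary sandwich of \Cref{fact:maxvalue}, and then choosing the inverse temperature $\beta$ so as to balance the two competing error terms. Throughout, $\beta$ will be taken to depend on $\al$ but not on $n$, so that the $o_n(1)$ terms of \Cref{thm:overlap-geometry} remain harmless for each fixed $\al$.

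First I would apply \Cref{fact:maxvalue} with the finite domain $\Omega = U^{(\el)}_n(S)$ to each of the grand Hamiltonians $\cH$ and $\cG$. Since $|U^{(\el)}_n(S)| \le 2^{\el n}$, dividing the resulting two‑sided bound by $\el n$ gives, deterministically,
\[
\left| \frac{1}{\el n}\max_{\mathbf{x} \in U^{(\el)}_n(S)} \cH(\mathbf{x}) - \phi_{\cH,S}(\beta) \right| \le \frac{\log 2}{\beta},
\qquad
\left| \frac{1}{\el n}\max_{\mathbf{x} \in U^{(\el)}_n(S)} \cG(\mathbf{x}) - \phi_{\cG,S}(\beta) \right| \le \frac{\log 2}{\beta}.
\]
Taking expectations over the Hamiltonians, and dividing the $\cG$‑bound by $\sqrt{\al}$ (which only shrinks it), and substituting into \Cref{thm:overlap-geometry}, I get
\[
\E \max_{\mathbf{x} \in U^{(\el)}_n(S)} \frac{1}{\el n}\cH(\mathbf{x})
= \E_{f \sim \Lam}[f]
+ \frac{1}{\sqrt{\al}} \, \E \max_{\mathbf{x} \in U^{(\el)}_n(S)} \frac{1}{\el n}\cG(\mathbf{x})
+ O\!\left(\tfrac{1}{\beta}\right) + O\!\left(\tfrac{\beta^2}{\al^2}\right) + o_n(1),
\]
valid whenever $\tfrac{\beta}{\al} \le \eps_0$, with the constant hidden in $O(\beta^2/\al^2)$ depending only on $\el$.

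Next I would optimize: taking $\beta = \al^{2/3}$ makes $\tfrac{1}{\beta} = \al^{-2/3}$ and $\tfrac{\beta^2}{\al^2} = \al^{-2/3}$, so both error terms become $O(\al^{-2/3})$, and the hypothesis $\tfrac{\beta}{\al} = \al^{-1/3} \le \eps_0$ holds as soon as $\al$ exceeds a threshold depending only on $\eps_0$ (hence only on $\el$). This yields the stated bound $|O(\al^{-2/3})| \le C\al^{-2/3}$ for all $\al \ge C$ with $C = C(\el)$. Finally, to drop the expectations, I would invoke the concentration bounds \Cref{cor:max-concentration} (for the Gaussian process $\cG$) and \Cref{cor:max-concentration-csp} (for $\cH$), i.e.\ the concentration arguments of \Cref{sec:concentration}: after normalization by $n$, both $\max \cH$ and $\max \cG$ deviate from their means by $o_n(1)$ with high probability, so the identity above holds with the expectations removed and an extra random $o_n(1)$ absorbed into the last term.

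There is no substantive obstacle here beyond bookkeeping: the only point requiring care is the interplay of the two limits. Because $\beta = \al^{2/3}$ depends on $\al$, and the $o_n(1)$ term of \Cref{thm:overlap-geometry} is $o_n(1)$ only for each fixed $\al$, the error must be read in the order "first $n \to \infty$ (absorbing $o_n(1)$), then $\al \to \infty$ (absorbing $O(\al^{-2/3})$)"—exactly as in \Cref{cor:maximum_value}. The remaining work is just tracking the dependence of constants on $\el$ and verifying that the chosen $\beta$ meets the interpolation's hypothesis $\beta/\al \le \eps_0$, both of which are routine.
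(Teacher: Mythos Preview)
Your proposal is correct and follows essentially the same argument as the paper: apply \Cref{fact:maxvalue} to replace each free energy density by the corresponding normalized maximum up to an $O(1/\beta)$ error, plug into \Cref{thm:overlap-geometry}, and set $\beta = \alpha^{2/3}$ to balance the $O(1/\beta)$ and $O(\beta^2/\alpha^2)$ terms. Your added remarks on the order of limits and on invoking \Cref{cor:max-concentration} and \Cref{cor:max-concentration-csp} for the concentration step are accurate and match the paper's intent.
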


\begin{proof}
Using \Cref{thm:overlap-geometry}, 
we have
\begin{align}\E \phi_{\cH,S}(\beta) = \E_{f \sim \Lam}[f] + \frac{1}{\sqrt{\alpha}} \E \phi_{\cG,S}(\beta) + O\left(\tfrac{\beta^2}{\alpha^2}\right) +o_n(1) \,.\end{align}
Recall that there are at most $2^{\ell n}$ choices of $\mathbf{x}$. Applying \Cref{fact:maxvalue} for $\cG$ and for $\cH$, we get
\begin{align}
    \abs{\phi_{\cG, S}(\beta) - \max_{\mathbf{x} \in U_n(S)} \frac{1}{\el n}\cG(\mathbf{x})} & \leq \frac{\log |U_n(S)|}{\beta \el n} = O\left(\tfrac{1}{\beta}\right) \,,\\
    \abs{\phi_{\cH, S}(\beta) - \max_{\mathbf{x} \in U_n(S)} \frac{1}{\el n}\cH(\mathbf{x})} &= O\left(\tfrac{1}{\beta}\right) \,.
\end{align}
By Jensen's inequality, the same holds when using expectations.
Plugging in these bounds and choosing $\beta = \alpha^{2/3}$ proves the claim.
\end{proof}
In the special case $\ell = \ell' = 1$, $(A, b) = (\mathbb{I}, (1))$, and $S = \cR^{(\ell)} = [-1,1]$,
we conclude \Cref{cor:maximum_value}.

\subsection{Numerical calculations}
We compute the value of the Parisi formula for the spin glasses associated with popular CSPs in \Cref{table:calcs}. This code can be run for any choice of spin glass and is available online\footnote{ \href{https://github.com/marwahaha/csp-parisi/}{https://github.com/marwahaha/csp-parisi/}}.
Our code uses the zero-temperature representation of the Parisi functional from \cite{auffinger2017parisi}, which we restate below:

\begin{definition}[Parisi functional at zero temperature~\cite{parisi1980sequence, auffinger2017parisi}]
\label{defn:parisivariational}
        Given a function $\zeta$ in 
\begin{align}
        \cU = \left\{\zeta : [0,1) \to \R_{\geq 0} : \zeta \text{ is right-continuous, non-decreasing, }\int_{0}^1 \zeta(t)dt < \infty \right\}\,,
    \end{align}
    the Parisi functional $\cP^\xi_{\infty}(\zeta)$ is
    \begin{align}
        \cP^\xi_{\infty}(\zeta) = \Phi_\zeta(0,0) - \frac{1}{2}\int_0^1s\xi''(s)\zeta(s)ds\, ,
    \end{align}
    where the function $\Phi_\zeta(x;t): \R \times [0,1] \to \R$ is the solution of the Hamilton-Jacobi-Bellman equation
    \begin{align}
        \partial_t\Phi_\zeta(x;t) + \frac{\xi''(t)}{2}\left(\partial_{xx}\Phi_\zeta(x;t) + \zeta(t)\left(\partial_x\Phi_\zeta(x;t)\right)^2\right) = 0\, ,
    \end{align}
    with initial condition $\Phi_\zeta(x;1) = \abs{x}$.
\end{definition}

As stated in the introduction, the optimum value of a spin glass is
$\inf_{\zeta \in \cU} \cP^{\xi}_\infty(\zeta)$.
The value ALG achievable by efficient algorithms is given by $\inf_{\zeta \in \cL} \cP^{\xi}_\infty(\zeta)$ for an extended class of functions $\cL \supseteq \cU$ defined below. See \cite[Section 6.1]{huang2021tight} and references therein for well-posedness considerations.

\begin{definition}[ALG]\label{def:alg}
    Let ALG $= \inf_{\zeta \in \cL} \cP^{\xi}_\infty(\zeta)$,
    where the class of functions $\cL$ is defined by
    \begin{equation}
        \cL = \left\{\begin{array}{cc}
             \zeta : [0, 1) \to \R_{\geq 0}: \zeta\text{ is right-continuous, }\\
             \norm{\xi" \cdot \zeta}_{TV[0,t]} < \infty \text{ for all }t \in [0,1),\int_0^1 \xi"(t)\zeta(t) dt < \infty
        \end{array}\right\}
    \end{equation}
    where the total variation $\norm{f}_{TV(J)}$ on an interval $J$ is
    \begin{equation}
        \norm{f}_{TV(J)} = \sup_{n \in \N} \sup_{t_0 < \cdots < t_n, t_i \in J} \sum_{i =1}^n \abs{f(t_i) - f(t_{i-1})}\,.
    \end{equation}
\end{definition}

\begin{theorem}[{\cite[Lemma 5.4]{chen2019suboptimality}}]
\label{thm:alg-opt}
    For even mixture polynomials $\xi$ such that $c_2 = 0$, $\textsf{ALG} < \textsf{GSED}(\sgxi{})$ holds with strict inequality.
\end{theorem}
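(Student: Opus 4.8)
The plan is to exploit the fact that $\cL$ is strictly larger than $\cU$ while $\textsf{ALG}$ and $\textsf{GSED}(\sgxi{})$ are infima of the \emph{same} functional $\cP^\xi_\infty$: trivially $\textsf{ALG} \le \textsf{GSED}(\sgxi{})$, and the whole point is to rule out equality. Let $\zeta^\star \in \cU$ be a minimizer of $\cP^\xi_\infty$ over $\cU$, which exists by \Cref{thm:parisi-formula}. If $\textsf{ALG} = \textsf{GSED}(\sgxi{})$, then since $\zeta^\star \in \cU \subseteq \cL$ the point $\zeta^\star$ also minimizes $\cP^\xi_\infty$ over the convex set $\cL$, and in particular satisfies the first-order condition $\langle D\cP^\xi_\infty(\zeta^\star),\, \eta - \zeta^\star\rangle \ge 0$ for every $\eta \in \cL$. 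I would derive a contradiction from this when $\xi$ is even and $c_2 = 0$.

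For the variational calculus I would use the now-standard representation of the G\^ateaux derivative of the zero-temperature Parisi functional (Auffinger--Chen, Jagannath--Tobasco): at any $\zeta$ it is given by integration against
\begin{align}
    b_\zeta(t) \;=\; \frac{\xi''(t)}{2}\Bigl( \E\bigl[(\partial_x \Phi_\zeta(X_t;t))^2\bigr] - t\Bigr)\,,
\end{align}
where $X_t$ solves $dX_t = \xi''(t)\,\zeta(t)\,\partial_x\Phi_\zeta(X_t;t)\,dt + \sqrt{\xi''(t)}\,dW_t$ with $X_0 = 0$, so that $\langle D\cP^\xi_\infty(\zeta),\eta\rangle = \int_0^1 b_\zeta(t)\,\eta(t)\,dt$. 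Since $\cL$ is closed under the rescalings $c\,\zeta^\star$ and under adding arbitrary bounded nonnegative bumps, the first-order condition over $\cL$ forces $b_{\zeta^\star}(t) \ge 0$ for a.e.\ $t$ with $\xi''(t) > 0$; equivalently $\E[(\partial_x\Phi_{\zeta^\star}(X_t;t))^2] \ge t$ for a.e.\ $t \in (0,1)$. (By contrast, minimizing over $\cU$ only allows monotonicity-preserving perturbations, which yields the strictly weaker \emph{integrated} inequality $\int_t^1 b_{\zeta^\star}(s)\,ds \ge 0$ --- precisely the classical Parisi characterization --- so there is genuine slack between the two notions, which is exactly what a gap between $\textsf{ALG}$ and $\textsf{GSED}$ records.)

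To contradict $\E[(\partial_x\Phi_{\zeta^\star}(X_t;t))^2] \ge t$ as $t \to 0^+$, I would use evenness and the absence of a quadratic term. Since $\xi$ is even, the Hamilton--Jacobi--Bellman equation preserves evenness in $x$, so $\Phi_{\zeta^\star}(\cdot;t)$ is even, hence $\partial_x\Phi_{\zeta^\star}(0;t) = 0$; together with local boundedness of $\partial_{xx}\Phi_{\zeta^\star}$ for $t \le \delta$ (standard regularity of $\Phi_{\zeta^\star}$ away from $t=1$) and the global bound $|\partial_x\Phi_{\zeta^\star}| \le 1$, this gives $|\partial_x\Phi_{\zeta^\star}(x;t)| \le C|x|$ for all $x$ and all $t \le \delta$. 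Since $\xi$ is even with $c_2 = 0$, the smallest $p$ with $c_p \neq 0$ is some $p_0 \ge 4$, so $\xi''(s) = \Theta(s^{p_0-2})$ and $\int_0^t \xi''(s)\,ds = \Theta(t^{p_0-1})$ as $t \to 0$; an It\^o--Gr\"onwall estimate for $X_t$ (the drift is lower order because $|\partial_x\Phi_{\zeta^\star}(X_t;t)| \le C|X_t|$ and $\zeta^\star$ is bounded near $0$) then gives $\E[X_t^2] = O(t^{p_0-1})$, and hence $\E[(\partial_x\Phi_{\zeta^\star}(X_t;t))^2] \le C^2\,\E[X_t^2] = O(t^{p_0-1}) = o(t)$. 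Thus $b_{\zeta^\star}(t) < 0$ on some interval $(0,\delta')$ on which $\xi'' > 0$, contradicting the necessary condition above. Therefore $\textsf{ALG} < \textsf{GSED}(\sgxi{})$.

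The main obstacle is the analytic bookkeeping rather than the idea: one needs the G\^ateaux-derivative formula for $\cP^\xi_\infty$ and enough regularity of $\Phi_{\zeta^\star}$ --- in particular that $\partial_{xx}\Phi_{\zeta^\star}$ stays bounded on compact $x$-sets as $t \to 0$ --- to make the estimate $\E[(\partial_x\Phi_{\zeta^\star}(X_t;t))^2] = o(t)$ rigorous; after that the contradiction is a one-liner. An alternative, more concrete route would avoid $\zeta^\star$ altogether and instead exhibit an explicit $\zeta_0 \in \cL \setminus \cU$ (for instance one coming from Subag's incremental construction) and bound $\cP^\xi_\infty(\zeta_0) < \textsf{GSED}(\sgxi{})$ directly, but this appears to demand comparable control of $\Phi$.
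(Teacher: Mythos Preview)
The paper does not prove this theorem; it is quoted without proof from \cite[Lemma 5.4]{chen2019suboptimality}, so there is no in-paper argument to compare against.

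Your variational approach---contrast the first-order optimality conditions for $\cP^\xi_\infty$ over $\cU$ versus the larger cone $\cL$, and derive a contradiction near $t=0$ from evenness and $c_2=0$---is natural and broadly in the spirit of how such strict inequalities are established in the Parisi-variational literature. The point that evenness forces $\partial_x\Phi_{\zeta^\star}(0;t)=0$, while $c_2=0$ makes $\int_0^t\xi''(s)\,ds$ decay superlinearly, so that $\E[(\partial_x\Phi_{\zeta^\star}(X_t;t))^2]=o(t)$ and hence $b_{\zeta^\star}$ is negative on a small interval, is the right mechanism. The technical gap you already flag is genuine, however: to pass from $\partial_x\Phi_{\zeta^\star}(0;t)=0$ to $|\partial_x\Phi_{\zeta^\star}(x;t)|\le C|x|$ you need a uniform bound on $\partial_{xx}\Phi_{\zeta^\star}$ as $t\to 0^+$, and since the diffusion coefficient $\xi''(t)$ degenerates at $t=0$ this is not automatic from parabolic smoothing alone. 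The required regularity for the zero-temperature Parisi PDE is available in the literature (Auffinger--Chen, Jagannath--Tobasco), so with those inputs your outline can be completed; as written it is a plausible sketch rather than a self-contained proof.
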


\begin{table}[ht]
\centering
\renewcommand{\arraystretch}{1.1}
\begin{tabular}{|c|c|c|c|c|}
\hline
\textbf{$k$} & \textbf{Max 1-in-$k$ SAT}                   & \textbf{Max $k$ NAESAT}                      & \textbf{Max $k$ SAT}                         & \textbf{Max $k$ XOR}                       \\ \hline
2            & $\frac{1}{2} + \frac{0.54}{\sqrt{\alpha}}$  & $\frac{1}{2} + \frac{0.54}{\sqrt{\alpha}}$   & $\frac{3}{4} + \frac{0.40}{\sqrt{\alpha}}$   & $\frac{1}{2} + \frac{0.54}{\sqrt{\alpha}}$ \\ \hline
3            & $\frac{3}{8} + \frac{0.54}{\sqrt{\alpha}}$  & $\frac{3}{4} + \frac{0.47}{\sqrt{\alpha}}$   & $\frac{7}{8} + \frac{0.33}{\sqrt{\alpha}}$   & $\frac{1}{2} + \frac{0.58}{\sqrt{\alpha}}$ \\ \hline
4            & $\frac{1}{4} + \frac{0.48}{\sqrt{\alpha}}$  & $\frac{7}{8} + \frac{0.37}{\sqrt{\alpha}}$   & $\frac{15}{16} + \frac{0.26}{\sqrt{\alpha}}$ & $\frac{1}{2} + \frac{0.58}{\sqrt{\alpha}}$ \\ \hline
5            & $\frac{5}{32} + \frac{0.41}{\sqrt{\alpha}}$ & $\frac{15}{16} + \frac{0.28}{\sqrt{\alpha}}$ & $\frac{31}{32} + \frac{0.20}{\sqrt{\alpha}}$ & $\frac{1}{2} + \frac{0.59}{\sqrt{\alpha}}$ \\ \hline
\end{tabular}
    \caption{\footnotesize Optimal value of a random $k$-CSP with $n$ variables and $\alpha n$ clauses, as $n \to \infty$. 
    The calculation uses the zero-temperature Parisi functional in \Cref{defn:parisivariational}. 
    All values are rounded to two decimal places.
    The values for $k$XOR match those in \cite{marwaha_kxor}.
    We expect the values to be accurate to two significant figures, based on consistency with independently calculated values for $2$XOR and $3$XOR~\cite{alaoui2020optimization}.
    }
    \label{table:calcs}
\end{table}

\section{Overlap gaps in a random Max-CSP}
\label{sec:ogp_from_freeenergy}

\Cref{thm:overlap-geometry} implies that many quantities are equivalent for \csp{} at large $\alpha$ and its associated spin glass model. For example:
\begin{enumerate}
    \item Free energy of a single instance (proving \Cref{thm:main_informal}).
    \\($\ell = 1$, $\ell' = 1$, $A = [[1]]$, $b = [1]$, $S = \cR$)
    \item Free energy of a single instance, restricting the Hamming weight to $W \subseteq [-1,1]$ (related to \cite{jagannath2020unbalanced}).
    \\($\ell = 1$, $\ell' = 1$, $A = [[1]]$, $b = [1]$, $S = \{r \in \cR: r_{\{1\}} \in W\}$)
    \item Existence (or non-existence) of an overlap gap property in the overlap range $(s,t)$.
    \\    ($\ell = 2$, $\ell' = 1$, $A = [[1],[1]]$, $b = [1]$, $S = \{r \in \cR: s < r_{\{1,2\}} < t\}$)
    \item Existence (or non-existence) of an overlap gap property in the overlap range $(s,t)$ for $\eta$-coupled Hamiltonians~\cite{chen2019suboptimality}.
    \\    ($\ell = 2$, $\ell' = 3$, $A = [[1,\;1,\;0],[1,\;0,\;1]]$, $b = [\eta,\;1-\eta,\;1-\eta]$, $S = \{r \in \cR: s < r_{\{1,2\}} < t\}$)
    \item Existence (or non-existence) of the branching overlap gap property in~\cite{huang2021tight}.
    \\ (Choice of parameters in \Cref{remark:branching_ogp})
\end{enumerate}

We use \Cref{thm:overlap-geometry} to show how \csp{} inherits the overlap gap property from a spin glass.
This proof is a generalization of~\cite[Proof of Theorem 5]{chen2019suboptimality} and~\cite[Lemma 8.14]{chou2022limitations} to hold for arbitrary $(A,b)$-coupled instances. 
We use the same argument to transfer the branching OGP of \cite{huang2021tight}, a hierarchical style of OGP described by a rooted tree.

\begin{definition}[OGP~\cite{gamarnik2021topological}]\label{defn:ogp-with-value}
    A Hamiltonian $H: \{\pm 1\}^n \to \R$ exhibits the overlap gap property (OGP)
    at value $v$ if
    there are $-1 \leq s < t \leq 1$ such that for all $\sigma_1, \sigma_2$ with $H(\sigma_1) \geq v, H(\sigma_2) \geq v$, we have 
    \begin{align}
        R(\sigma_1, \sigma_2) \not\in (s,t)\,.
    \end{align}
\end{definition}

\begin{definition}[Average-OGP, nonstandard definition]\label{defn:average-ogp-with-value}
A Hamiltonian $H$ exhibits the average-OGP at value $v$ if the same holds whenever $\frac{1}{2}(H(\sigma_1) + H(\sigma_2)) \geq v$.
\end{definition}

\begin{remark}
    The average-OGP implies the OGP. On the other hand,
    the OGP implies the average-OGP with a weakened (larger) value.
    The interpolation in this work transfers the average-OGP.
\end{remark}

\begin{proposition}\label{prop:inherit-overlap-gaps}
    If \sgxi{} exhibits the average-OGP at value $v$ with high probability, then for all $\eps > 0$, for all
    sufficiently large $\alpha$, $\csp{}$ exhibits the average-OGP at value $\E_{f \sim \Lam}[f] + \frac{v+\eps}{\sqrt{\alpha}}$ with high probability, when $\xi$ is related to $\Lam$ as in \Cref{eq:main-eqn}.
\end{proposition}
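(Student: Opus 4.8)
The plan is to derive this directly from the optimal-value transfer \Cref{prop:max-value}, specialized to the overlap-gap choice of parameters $\ell=2$, $\ell'=1$, $A$ the $2\times 1$ all-ones matrix, $b=(1)$, and $S=\{r\in\cR^{(2)}: s<r_{\{1,2\}}<t\}$ (see \Cref{defn:coupled-model}), where $(s,t)$ is the gap interval witnessing the spin-glass average-OGP at value $v$. With these parameters, an $(A,b)$-coupled model of $\csp{}$ is a single CSP instance with Hamiltonian $H^\alpha$ placed on both coordinates, so $\tfrac{1}{\ell n}\cH(\mathbf{x})=\tfrac12\big(\tfrac{1}{n}H^\alpha(\mathbf{x}_1)+\tfrac{1}{n}H^\alpha(\mathbf{x}_2)\big)$, and similarly $\tfrac{1}{\ell n}\cG(\mathbf{x})=\tfrac12\big(\tfrac1n H^\xi(\mathbf{x}_1)+\tfrac1n H^\xi(\mathbf{x}_2)\big)$ for a single spin-glass instance. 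The key observation is that, reading \Cref{defn:average-ogp-with-value} for the per-variable-normalized Hamiltonian (so values live on the same scale as the optimum $v_{\cI}$), a model exhibits the average-OGP at value $w$ with gap $(s,t)$ \emph{if and only if} $\max_{\mathbf{x}\in U_n^{(2)}(S)}\tfrac{1}{\ell n}\cH(\mathbf{x})<w$ (and likewise with $\cG$); so the proposition is just a comparison of these two restricted maxima.

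The steps are then: (i) apply \Cref{prop:max-value} with the above parameters, and combine it with concentration of the restricted maxima (\Cref{cor:max-concentration} for the Gaussian side, \Cref{cor:max-concentration-csp} for the CSP side, i.e.\ the ``drop the expectations'' clause of \Cref{prop:max-value}), to get that with high probability over both (independent) instances,
\[
\max_{\mathbf{x}\in U_n^{(2)}(S)}\tfrac{1}{\ell n}\cH(\mathbf{x})\ \le\ \E_{f\sim\Lam}[f]+\tfrac{1}{\sqrt\alpha}\max_{\mathbf{x}\in U_n^{(2)}(S)}\tfrac{1}{\ell n}\cG(\mathbf{x})+O(\alpha^{-2/3})+o_n(1),
\]
where $|O(\alpha^{-2/3})|\le C\alpha^{-2/3}$ for $\alpha\ge C$ with $C$ an absolute constant ($\ell=2$ is fixed); (ii) feed in the hypothesis $\max_{\mathbf{x}\in U_n^{(2)}(S)}\tfrac{1}{\ell n}\cG(\mathbf{x})<v$ w.h.p.\ to obtain $\max_{\mathbf{x}\in U_n^{(2)}(S)}\tfrac{1}{\ell n}\cH(\mathbf{x})<\E_{f\sim\Lam}[f]+\tfrac{v}{\sqrt\alpha}+C\alpha^{-2/3}+o_n(1)$ w.h.p.; and (iii) absorb the error terms: since $\alpha^{-2/3}=o(\alpha^{-1/2})$, for every sufficiently large $\alpha$ we have $C\alpha^{-2/3}<\tfrac{\eps}{2\sqrt\alpha}$, and then for each such fixed $\alpha$, letting $n\to\infty$ drives the $o_n(1)$ term below $\tfrac{\eps}{2\sqrt\alpha}$ w.h.p. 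Hence w.h.p.\ the CSP restricted maximum is $<\E_{f\sim\Lam}[f]+\tfrac{v+\eps}{\sqrt\alpha}$, which by the equivalence in the previous paragraph is precisely the average-OGP for $\csp{}$ at value $\E_{f\sim\Lam}[f]+\tfrac{v+\eps}{\sqrt\alpha}$ with the same gap $(s,t)$. (For large $n$, $U_n^{(2)}(S)\ne\emptyset$ since overlaps become dense in $[-1,1]$, so the maxima are well defined and the edge case is irrelevant.)

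There is no serious conceptual obstacle once \Cref{prop:max-value} is available; the care lies entirely in the bookkeeping. The crucial point is that the admissible slack on the CSP side, $\tfrac{\eps}{\sqrt\alpha}$, must dominate the interpolation error $O(\alpha^{-2/3})$ — this works only because the error exponent $2/3$ exceeds $1/2$, which is exactly what forces ``sufficiently large $\alpha$'' into the statement. One must also respect the quantifier order (fix $\eps$, then enlarge $\alpha$, then send $n\to\infty$) and note that the interpolation estimate and the spin-glass OGP are high-probability events on the product of two independent probability spaces, so their intersection is again high-probability. Finally, the gap interval $(s,t)$ is a purely geometric feature of the overlap polytope $\cR^{(2)}$ and is therefore transported verbatim from \sgxi{} to $\csp{}$. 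The identical argument, run with the other parameter choices from the bulleted list preceding the statement, transfers the $\eta$-coupled and branching variants of the OGP.
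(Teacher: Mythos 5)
Your proof is correct and takes essentially the same route as the paper: both instantiate \Cref{prop:max-value} with $\ell=2$, $\ell'=1$, $A=[[1],[1]]$, $b=[1]$, and $S$ the gap interval, rephrase the average-OGP as a bound on the restricted maximum over $U_n^{(2)}(S)$, plug in the spin-glass hypothesis, and absorb the $O(\alpha^{-2/3})$ and $o_n(1)$ errors into the $\eps/\sqrt{\alpha}$ slack. Your write-up is slightly more careful about the normalization convention, the quantifier order, and splitting $\eps$ to land exactly on $v+\eps$ (the paper's version ends at $v+2\eps$ and leaves the relabeling implicit), but these are presentational, not substantive, differences.
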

\begin{proof}
    Let $\el =2$, and let $S = (a,b)$ be the overlap gap for the spin glass model \sgxi{}. 
    Using \Cref{prop:max-value}, for sufficiently large $\alpha$, we have w.h.p
    \begin{equation}
    \max_{\mathbf{x} \in U_n^{(\el)}(S)}\frac{1}{\el n}\cH(\mathbf{x}) \leq \E_{f \sim \Lam}[f] + \frac{\max_{\mathbf{x} \in U_n^{(\el)}(S)}\frac{1}{\el n}\cG(\mathbf{x}) + \eps}{\sqrt{\alpha}} +o_n(1)\,.
    \end{equation}
    The overlap gap property
    implies that w.h.p
    \begin{equation}
        \max_{\mathbf{x} \in U_n^{(\el)}(S)}\frac{1}{\el n}\cG(\mathbf{x}) \leq v\,.
    \end{equation}
    Hence 
    with high probability we also have
    \begin{align}
        \max_{\mathbf{x} \in U_n^{(\el)}(S)}\frac{1}{\el n}\cH(\mathbf{x}) \leq \E_{f \sim \Lam}[f] + \frac{v + 2\eps}{\sqrt{\alpha}}\,.
    \end{align}
\end{proof}

Analogously, we can transfer a generic version of the OGP on $(A, b)$-coupled models.
\begin{definition}[Generic OGP]
    For a relatively open subset $S \subseteq \cR^{(\el)}$, random Hamiltonians $H_1, \dots, H_\el: \{\pm 1\}^n \to \R$ exhibit an \emph{$S$-OGP} at value $v$
    if 
    \begin{equation}
        \max_{(\sigma_1, \dots, \sigma_\el) \in U_n^{(\el)}(S)} \frac{1}{\el n}\left(H_1(\sigma_1) + \cdots + H_\el(\sigma_\el)\right) \leq v\,.
    \end{equation}
    The \emph{size} of the OGP is $\el$.
\end{definition}

\begin{proposition}\label{prop:S-OGP}
    If an $(A,b)$-coupled model of \sgxi{} exhibits an $S$-OGP at value $v$ with high probability,
    then for all $\eps > 0$,
    for all sufficiently large $\alpha$,
    the $(A,b)$-coupled model of \csp{} exhibits an $S$-OGP at value $\E_{f \sim \Lam}[f] + \frac{v+\eps}{\sqrt{\al}}$ with high probability,
    when $\xi$ is related to $\Lam$ as in \Cref{eq:main-eqn}.
\end{proposition}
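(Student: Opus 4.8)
The plan is to follow the template of the proof of \Cref{prop:inherit-overlap-gaps}, replacing the special case $\ell = 2$, $S = (a,b)$ with an arbitrary $\ell$ and arbitrary relatively open $S \subseteq \cR^{(\ell)}$. The key observation is that \Cref{prop:max-value} is already stated for arbitrary $(A,b)$-coupled models and arbitrary open sets $S$, so the generalization is essentially free: nothing in the argument of \Cref{prop:inherit-overlap-gaps} actually used that $\ell = 2$ or that $S$ was an interval of pairwise overlaps.

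First I would apply \Cref{prop:max-value} to the given $(A,b)$-coupled models $\cG$ of \sgxi{} and $\cH$ of \csp{}, with the set $S$ in the statement, to obtain that with high probability (after invoking the concentration remark to drop expectations and absorb the $O(\alpha^{-2/3})$ error into the $\eps$ slack for $\alpha$ large enough)
\begin{align}
\max_{\mathbf{x} \in U_n^{(\el)}(S)} \frac{1}{\el n}\cH(\mathbf{x}) \leq \E_{f \sim \Lam}[f] + \frac{1}{\sqrt{\al}}\left(\max_{\mathbf{x} \in U_n^{(\el)}(S)} \frac{1}{\el n}\cG(\mathbf{x}) + \eps\right) + o_n(1)\,.
\end{align}
Here $\cH(\mathbf{x}) = \sum_{i=1}^\el \cH_i(\mathbf{x}_i)$ and $\cG(\mathbf{x}) = \sum_{i=1}^\el \cG_i(\mathbf{x}_i)$ are exactly the grand Hamiltonians, so $\max_{\mathbf{x} \in U_n^{(\el)}(S)} \frac{1}{\el n}\cH(\mathbf{x})$ is literally the quantity appearing in the definition of an $S$-OGP for the tuple $(\cH_1, \dots, \cH_\el)$. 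Next, by hypothesis the $(A,b)$-coupled model of \sgxi{} exhibits an $S$-OGP at value $v$ with high probability, which is by definition exactly the bound $\max_{\mathbf{x} \in U_n^{(\el)}(S)} \frac{1}{\el n}\cG(\mathbf{x}) \leq v$. Substituting this into the displayed inequality, and again using $\eps$ slack to kill the $o_n(1)$ term, gives with high probability
\begin{align}
\max_{\mathbf{x} \in U_n^{(\el)}(S)} \frac{1}{\el n}\cH(\mathbf{x}) \leq \E_{f \sim \Lam}[f] + \frac{v + 2\eps}{\sqrt{\al}}\,,
\end{align}
which is precisely an $S$-OGP at value $\E_{f\sim\Lam}[f] + \frac{v+2\eps}{\sqrt{\al}}$; relabeling $\eps$ finishes the proof.

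The only point requiring a little care — and the one I'd call the "main obstacle," though it is minor — is bookkeeping the error terms and the order of limits. \Cref{prop:max-value} produces an $O(\alpha^{-2/3})$ term and an $o_n(1)$ term, and one must check that for $\alpha$ large enough the former is below $\eps/\sqrt{\al}$ (it is, since $\alpha^{-2/3} = o(\alpha^{-1/2})$) and that the $o_n(1)$ term, which after concentration is random but $o_n(1)$ w.h.p., can be absorbed into the stated $\eps$-slack exactly as in \Cref{prop:inherit-overlap-gaps}. One should also note that the constant $C$ controlling the $O(\alpha^{-2/3})$ depends on $\el$ but not on $n$, so "sufficiently large $\alpha$" is a statement depending only on $\el$, $\eps$, and the mixture $\xi$, consistent with the phrasing of the proposition. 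Since the argument is otherwise a verbatim repetition of \Cref{prop:inherit-overlap-gaps} with $S$ and $\ell$ left general, I would keep the write-up short and simply point to that proof.
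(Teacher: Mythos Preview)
Your proposal is correct and takes essentially the same approach as the paper: the paper does not give a separate proof of \Cref{prop:S-OGP} but simply states it after remarking ``Analogously, we can transfer a generic version of the OGP on $(A, b)$-coupled models,'' implicitly relying on the verbatim argument of \Cref{prop:inherit-overlap-gaps} with general $\ell$ and $S$ plugged into \Cref{prop:max-value}. Your write-up, including the bookkeeping of the $O(\alpha^{-2/3})$ and $o_n(1)$ terms, matches this exactly.
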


\subsection{The branching OGP}
\label{remark:branching_ogp}

The branching OGP gives bounds \emph{tight} algorithmic bounds for a class of spin glass models, matching the performance of certain approximate message passing algorithms \cite{huang2021tight}. We present the somewhat involved definition and show that it is captured by our framework.

\begin{definition}[Tree-coupled ensemble]
A \emph{tree-coupled ensemble of Hamiltonians} is defined by:
\begin{enumerate}
    \item A rooted tree of height $D$ defined by the vector $\vec{k} \in \Z_+^D$,
    so that every node at depth $d$ has $k_{d+1}$ children.
    We describe the nodes of the tree as $\mathbb{T}(\vec{k})$ and the leaves
    of the tree as $\mathbb{L}(\vec{k})$.
    \item A coupling vector $\vec{p} \in \Z_+^{D+1}$, so that
    \begin{equation}
        0 = p_0 \leq p_1 \leq \cdots \leq p_D = 1\,.
    \end{equation}
\end{enumerate}
Given these parameters, we generate a family of Hamiltonians $(H^{(u)})_{u \in \mathbb{L}(\vec{k})}$ as follows.
Generate independent instances of the Hamiltonian $\widetilde{H}^{(v)}$ for every non-root $v \in \mathbb{T}(\vec{k})$, and scale\footnote{For the spin glass, scale the variance of each Gaussian by $p_d - p_{d-1}$; for the CSP, scale the number of clauses by $p_d - p_{d-1}$.} each by a factor of $p_d - p_{d-1}$ where $d$ is the depth of $v$ in the tree. Use these to construct the Hamiltonians $H^{(u)}$ for each $u  \in \mathbb{L}(\vec{k})$ defined by
\begin{align}
    H^{(u)} = \sum_{d=0}^{D-1} \widetilde{H}^{(a(u,d))}\,,
\end{align}
where $a(u,d)$ is the $d$th ancestor of $u$. 
The \emph{size} of the ensemble is $\abs{\mathbb{L}(\vec{k})} = \prod_{i=1}^D \vec{k}_i$.
\end{definition}

\begin{definition}[Branching OGP]
\label{defn:branchingogp}
    A tree-coupled ensemble of Hamiltonians exhibits a \emph{branching OGP} with gap $\eta$ at value $v$ if:
    \begin{itemize}
        \item there is $\vec{q} \in \Z_+^{D+1}$ with $0 \leq q_0 < q_1 < \cdots < q_D = 1$,
        \item define $Q \in \R^{\mathbb{L}(\vec{k})\times \mathbb{L}(\vec{k})}$ by $Q_{u,v} = q_{lca(u,v)}$, where $lca(u,v)$ is the depth of the least common ancestor of $u$ and $v$,
        \item define 
        \begin{align*}
        \cQ(\eta) =& \left\{\vec{\sigma} \in (\{\pm 1\}^n)^{\mathbb{L}(\vec{k})} : \forall u, v \in \mathbb{L}(\vec{k}). \; \abs{R(\sigma^{(u)}, \sigma^{(v)}) - Q_{u,v}} \leq \eta \right\}
        \end{align*}
        \item it holds that for all $(\sigma^{(u)})_{u \in \mathbb{L}(\vec{k})} \in \cQ(\eta)$, the average value over $u \in \mathbb{L}(\vec{k})$ of $H^{(u)}(\sigma^{(u)})$ is at most $v$.
    \end{itemize}
    The \emph{size} of the OGP is $\abs{\mathbb{L}(\vec{k})} = \prod_{i=1}^D \vec{k}_i$.
\end{definition}

\begin{remark}
\label{remark:branchingogp_modify_defn}
    Our definition is slightly simplified from \cite{huang2021tight}.
    We always fix (using their notation) $\mathbf{m} = 0$,
    because as we show below in \Cref{app:debiasing}, it suffices
    to obstruct algorithms which have mean zero.
\end{remark}

A tree-coupled ensemble is an instance of an $(A,b)$-coupled model, \Cref{defn:coupled-model}, using the following choice of parameters:
\begin{itemize}
    \item $\ell = |\mathbb{L}(\vec{k})|$.
    \item $\ell' = |\mathbb{T}(\vec{k})|$.
    \item $A_{u,v} = 1$ if $u = v$, or $v \in \mathbb{T}(\vec{k})$ is a non-root ancestor of $u \in \mathbb{L}(\vec{k})$; otherwise $A_{u,v} = 0$.
    \item $b_{v} = p_d - p_{d-1}$ for every $v \in \mathbb{T}(\vec{k})$ at the $d$th level of the tree.
\end{itemize}

The presence of the branching OGP is determined by the set
\begin{equation}
S = \{r \in \cR: \forall u,v \in \mathbb{L}(\vec{k}) \mathrel{.} \abs{r_{\{u,v\}} - Q_{u,v}} < \eta\}\,.
\end{equation}

Therefore, by \Cref{prop:S-OGP}, the branching OGP transfers from the spin
glass to the CSP.

\section{Implications for algorithmic hardness}
\label{sec:hardness}
We prove that the OGP has consequences for the performance of many Max-CSP algorithms, specifically those which are \emph{overlap-concentrated}.
This section follows \cite[Section 3]{huang2021tight} which proves a similar result for spin glasses,
with the changes that (1)~an instance of (the Poisson version of) the CSP model consists of i.i.d Poisson random variables whereas a spin glass consists of i.i.d Gaussian random variables; 
(2)~we debias the algorithms first to simplify the proof.
We do not enforce any time or space constraints on algorithms and the proofs apply equally well to arbitrary functions satisfying the assumptions.

\begin{definition}[$t$-correlated instances]\label{defn:t-correlated}
For $t \in [0,1]$, a \emph{$t$-correlated pair of instances} $\cI^1_t$ and $\cI^2_t$ of the model~\csp{} is created by:
\begin{enumerate}
    \item Generate $\Pois(\alpha t n)$ random clauses of the CSP. Let $\cI^1_t$ and $\cI^2_t$ both be initially equal to this instance.
    \item Independently generate two additional sets of $\Pois((1-t)\alpha n)$ clauses. Add the first set of clauses to $\cI^1_t$ and the second set to $\cI^2_t$.
\end{enumerate}
Equivalently, we let $\cI_t^1$ be an instance with $\Pois(\al n)$ clauses,
then independently with probability $1-t$ for each possible clause and sign pattern, we re-sample the multiplicity $\Pois\left(\tfrac{\al}{2^kn^{k-1}}\right)$ of the clause and sign pattern to generate $\cI_t^2$.

Equivalently, sample $\cI^1_t, \cI^2_t$ from the $(A,b)$-coupled model with $A = [[1,1,0],[1,0,1]], b = [t, 1-t,1-t]$.
\end{definition}

\begin{definition}[Overlap-concentrated algorithm, {\cite[Definition 2.1]{huang2021tight}}]\label{defn:overlap-conc}
    A deterministic algorithm $\cA$ is \emph{overlap-concentrated} if
    for every $t \in [0,1]$ and constant $\delta > 0$, the event
    \begin{equation}
    \abs{R(\cA(H_t^1), \cA(H_t^2)) - \E R(\cA(H_t^1), \cA(H_t^2))} \leq \delta
    \end{equation}
    occurs w.h.p, where $H_t^1, H_t^2$ are $t$-correlated instances.
\end{definition}

Our results also apply to randomized algorithms which are overlap-concentrated w.h.p over the random seed.
For the remainder of the section, we assume the algorithm is deterministic.

By a debiasing procedure, we may assume without loss of generality that
an overlap-concentrated algorithm $\cA$ satisfies $\E[\cA(H)] = 0$. See \Cref{app:debiasing} for details.

Fix an algorithm $\cA$.
For $t \in [0,1]$, let $H^1_t$ and $H^2_t$ be a $t$-correlated pair of Hamiltonians for \ \csp{}. We define the \emph{correlation function} $\chi: [0,1] \to \R$ for $\cA$ by
\begin{align}
    \chi(t) = \E\bigg[R(\cA(H^1_t), \cA(H^2_t))\bigg]\, .
\end{align}

The correlation function $\chi$ satisfies the following properties.
\begin{proposition}\label{prop:properties-correlation}
    For any algorithm $\cA$ run on an instance $\cI$ of \csp{} such that $\E [\cA(\cI)] = 0$, $\chi: [0,1] \to [0,1]$ satisfies
    \begin{enumerate}
        \item $\chi$ is continuous,
        \item $\chi(t)$ is strictly increasing,
        \item $\chi(0) = 0, \chi(1) = 1$
        \item for all $t \in [0,1]$, $\chi(t) \leq t$.
    \end{enumerate}
\end{proposition}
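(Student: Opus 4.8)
The plan is to pass to the Poisson model (as the rest of \Cref{sec:hardness} does) and regard a debiased deterministic algorithm $\cA$ as a square-integrable function of the \emph{disorder} --- the vector $N=(N_c)_c$ of multiplicities of the finitely many possible (index-tuple, sign-pattern, predicate) types, a vector of i.i.d.\ Poisson variables --- with $\cA$ valued on the sphere $\{x\in\R^n:\|x\|=\sqrt n\}$ (one normalizes the output exactly as in \cite{huang2021tight}). The key observation is that the $t$-correlation of \Cref{defn:t-correlated} is precisely the Poisson noise (Ornstein--Uhlenbeck--Mehler) semigroup acting on $N$: let $T_t$ denote the Markov kernel sending $\cI^1_t$ to the conditional law of $\cI^2_t$, i.e.\ retain each clause of $\cI^1_t$ independently with probability $t$ to form the shared core, then add a fresh $\Pois((1-t)\al n)$ batch. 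A one-line computation on the first chaos gives $\E[(N_c-\lambda)(\cI^2_t)\mid\cI^1_t]=t\,(N_c-\lambda)(\cI^1_t)$, and in general $(\cI^1_t,\cI^2_t)$ is exchangeable, so $T_t$ is self-adjoint on $L^2$ with $T_sT_t=T_{st}$, $T_1=\mathrm{id}$ and $T_0=\E[\cdot]$. Expanding the overlap coordinatewise and conditioning on $\cI^1_t$ then yields the compact formula $\chi(t)=\tfrac1n\langle\cA,T_t\cA\rangle_{L^2}$.

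Next I would invoke the Poisson--Wiener chaos decomposition: $L^2$ of the disorder splits into orthogonal subspaces of degrees $k=0,1,2,\dots$ on which $T_t$ acts by multiplication by $t^k$ (the Poisson analogue of the integer spectrum of the OU operator; see the Poisson chaos literature). Writing $\cA=\sum_{k\ge0}\cA^{(k)}$ accordingly and $w_k:=\|\cA^{(k)}\|_{L^2}^2\ge0$, one gets
\begin{align}
\chi(t)=\frac1n\sum_{k\ge0}w_k\,t^k,\qquad \sum_{k\ge0}w_k=\|\cA\|_{L^2}^2=\E_\cI\|\cA(\cI)\|^2=n,\qquad w_0=\|\E[\cA]\|^2=0,
\end{align}
where the last identity is exactly the debiasing hypothesis $\E[\cA(\cI)]=0$. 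All four claims fall out of this formula. First, $\chi(0)=w_0/n=0$ and $\chi(1)=\bigl(\sum_k w_k\bigr)/n=1$. Second, $\chi$ is a power series with nonnegative coefficients that converges on $[0,1]$, hence is continuous there and nondecreasing, and it is strictly increasing because $\sum_{k\ge1}w_k=n>0$ forces $w_k>0$ for some $k\ge1$, so $\chi(t)-\chi(s)=\tfrac1n\sum_{k\ge1}w_k(t^k-s^k)>0$ whenever $s<t$. Third, since $w_0=0$ and $t^k\le t$ for every $k\ge1$ and $t\in[0,1]$,
\begin{align}
\chi(t)=\frac1n\sum_{k\ge1}w_k\,t^k\;\le\;\frac tn\sum_{k\ge1}w_k\;=\;t.
\end{align}

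\textbf{Where the work is.} The substantive step is identifying the $t$-correlation with the Poisson noise semigroup and using that its first nonconstant eigenvalue is $t$ rather than some value in $(t,1)$: this ``spectral gap'' is exactly what upgrades the trivial bound $\chi(t)\le1$ to $\chi(t)\le t$. A chaos-free alternative for the softer properties: condition on the shared core $S$ (density $t\al$); the two outputs are then conditionally i.i.d.\ with common conditional mean $m(S):=\E[\cA(\cI^1_t)\mid S]$, so $\chi(t)=\E_S\,\tfrac1n\|m(S)\|^2$, and monotonicity and continuity follow from the tower property, Jensen, and a Poisson-thinning coupling of the cores across different values of $t$ (e.g.\ $\chi(t)-\chi(s)=\E_S\tfrac1n\|m(S_t)-m(S_s)\|^2\le 4\bigl(1-e^{-(t-s)\al n}\bigr)$ gives Lipschitz continuity for each fixed $n$); to recover $\chi(t)\le t$ one then runs an ANOVA (Efron--Stein) decomposition of $\cA$ over the individual clauses, each of which lands in the core independently with probability $t$, reproducing the same $\sum_k w_k t^{|\cdot|}$ structure. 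Two bookkeeping points to check: that the debiasing of \Cref{app:debiasing} makes $\E[\cA(\cI)]=0$ strongly enough to kill the degree-$0$ chaos $\cA^{(0)}$ (it does, by construction), and that outputs are taken on the radius-$\sqrt n$ sphere --- otherwise $\chi(1)=\tfrac1n\E\|\cA\|^2$ and the final bound reads $\chi(t)\le t\cdot\chi(1)\le t$.
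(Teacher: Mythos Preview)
Your proposal is correct and follows essentially the same strategy as the paper: express $\chi(t)=\tfrac1n\langle \cA,T_t\cA\rangle$ for a self-adjoint Markov semigroup $T_t$, diagonalize $T_t$ via an orthogonal decomposition on which it acts as $t^k$, and read off all four properties from the resulting power series $\chi(t)=\tfrac1n\sum_{k\ge 0} w_k t^k$ together with $w_0=\|\E\cA\|^2=0$ and $\sum_k w_k=n$.

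The only difference worth noting is which decomposition you use. The paper (Appendix~B) treats the instance as a product of independent coordinates---one Poisson multiplicity $N_c$ per clause type---and uses the Efron--Stein decomposition with the ``resample each coordinate with probability $1-t$'' noise operator, so the grading is by the \emph{number of coordinates} $|S|$. You instead use the Poisson--Wiener chaos decomposition and the ``thin with probability $t$, then add fresh mass'' semigroup (which matches the primary formulation in \Cref{defn:t-correlated} exactly), so the grading is by total Charlier degree. These are genuinely different orthogonal bases and different operators (e.g.\ $C_2(N_c)$ sits in Efron--Stein level $1$ but Poisson chaos level $2$), yet both yield a power series with nonnegative coefficients, zero constant term, and coefficients summing to one, so the four conclusions are identical. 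Your ``chaos-free alternative'' via conditioning on the shared core and then ANOVA over individual clauses is in fact precisely the paper's argument.
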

\begin{proof}[Proof sketch]
As we show in~\Cref{app:prop-correlation}, the correlation function $\chi(p)$ is equal to the noise stability of $\cA$ viewed as a function of independent Poisson random variables, from which the listed properties easily follow.
See the appendix for the full proof.
\end{proof}

\begin{definition}[Correlation function, {\cite[Proposition 3.1]{huang2021tight}}]\label{defn:corr-function}
    A function $\chi:[0,1] \to [0,1]$ that satisfies the properties
    of \Cref{prop:properties-correlation} is called a \emph{correlation function}.
\end{definition}

\begin{definition}[$\eta$-forbid an algorithm]
    An OGP with gap $(s,t)$ $\eta$-forbids an algorithm $\cA$ if $\E R(\cA(H^1_r), \cA(H^2_r))$ is in the $\eta$-interior of the interval $(s,t)$ for $H^1_r$ and $H^2_r$ a pair of $r$-correlated instances for some $r\in[0,1]$.
    In other words, the open ball with radius $\eta$ around the point is contained in $(s,t)$.

    More generally, an $S$-OGP $\eta$-forbids an algorithm $\cA$ if $\E Q(\cA(H_1), \dots, \cA(H_\el))$ 
    is in the $\eta$-interior of $S$.
\end{definition}

\begin{definition}[$\eta$-forbid a correlation function]
An $S$-OGP $\eta$-forbids a correlation function $\chi$ if $S \subseteq [-1,1]^{\binom{\el}{2}}$ only depends on pairwise
    overlaps, and $(\chi(\rho_{i,j}))_{i,j}$ is in the $\eta$-interior of $S$ where $\rho_{i,j} = \E H_i(\sigma)H_j(\sigma)$ is the correlation of Hamiltonians $H_i$ and $H_j$ (for any fixed arbitrary $\sigma$).
\end{definition}

Observe that if an OGP forbids the correlation function of an algorithm, then it also forbids the algorithm.

\begin{theorem}[Existence of even spin glass OGP, {\cite[Proposition 3.2]{huang2021tight}}]
\label{thm:ogp-existence}
    For all even functions $\xi$ and $v > $ ALG, there is $\eta > 0$ and an integer $\el$ such that
    for all correlation functions $\chi$
    there is a coupled model for \sgxi{} on at most $\el$ Hamiltonians and a set $S$ $\eta$-forbidding $\chi$ such that with high probability the coupled model exhibits an $S$-OGP at value $v$.
    Specifically, the coupled model is a tree-coupled ensemble and the OGP is a branching OGP.
\end{theorem}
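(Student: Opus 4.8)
The statement is \cite[Proposition 3.2]{huang2021tight}, specialized to the mean-zero case via \Cref{remark:branchingogp_modify_defn}, so the plan is to reproduce their argument; it splits into two essentially independent parts. \textbf{Part 1 (choose tree parameters that forbid $\chi$).} Fix, \emph{before seeing $\chi$}, a depth $D$, a branching vector $\vec k \in \Z_+^D$ (so $\el = \prod_i k_i$), and a $\delta$-fine grid $0 = p_0 < p_1 < \cdots < p_D = 1$; these specify a tree-coupled ensemble of \sgxi{} with coupling vector $\vec p$. Two leaves $u,v$ whose least common ancestor is at depth $d$ index Hamiltonians $H^{(u)},H^{(v)}$ of instance-correlation $\rho_{u,v}=p_d$, so an overlap-concentrated $\cA$ produces outputs of expected overlap $\chi(p_d)$. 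Given any correlation function $\chi$, set $q_d := \chi(p_d)$; by \Cref{prop:properties-correlation} ($\chi$ strictly increasing, continuous, $\chi(0)=0$, $\chi(1)=1$) the sequence $0 \le q_0 < q_1 < \cdots < q_D = 1$ is admissible in \Cref{defn:branchingogp}. Put $Q_{u,v}=q_{lca(u,v)}$ and let $S$ be the open sup-norm ball of radius $2\eta$ about $Q$; then $\big(\chi(\rho_{u,v})\big)_{u,v} = Q$ has its $\eta$-ball inside $S$, so $S$ $\eta$-forbids $\chi$. Crucially $D,\vec k,\eta$ do not depend on $\chi$: only $\vec q$ does, and Part 2 is uniform over admissible $\vec q$.

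\textbf{Part 2 (the branching OGP holds at value $v$).} It remains to show that w.h.p.
\[
\max_{\vec\sigma \in \cQ(2\eta)} \frac{1}{|\mathbb L(\vec k)|}\sum_{u \in \mathbb L(\vec k)} H^{(u)}(\sigma^{(u)}) \le v
\]
for $D$ large, $\eta$ small, and any admissible $\vec p,\vec q$. One bounds the expectation of the left-hand side by a Guerra--Toninelli / Gaussian interpolation against a planted process whose covariance is the staircase matrix $Q$; this produces an upper bound of Parisi type, namely an infimum of the zero-temperature Parisi functional over functional order parameters constrained to interpolate the staircase $(\vec p,\vec q)$. Refining the grid and optimizing over $\vec k,\vec p$, the staircases sweep out exactly the enlarged class $\cL \supseteq \cU$ of \Cref{def:alg} — this is why the bound is $\textsf{ALG}$ and not $\textsf{GSED}$ — so the bound decreases to $\inf_{\zeta \in \cL}\cP^\xi_\infty(\zeta)=\textsf{ALG}$. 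Since $v > \textsf{ALG}$ by hypothesis, the fixed slack $v-\textsf{ALG}>0$ absorbs the $o_n(1)$ interpolation error and the $\eta$-thickening once $D$ is large and $\eta$ small; \Cref{cor:max-concentration} ($O(\sqrt n)$ concentration of the constrained max about its mean) then upgrades the bound to hold w.h.p. The statement is non-vacuous precisely because $\textsf{ALG} < \textsf{GSED}$ for the $\xi$ of interest (\Cref{thm:alg-opt}), so configurations of value $v$ do exist — they are merely arranged so that no algorithm with correlation function $\chi$ can reach them.

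\textbf{Main obstacle.} Part 2 is the crux and the genuinely new content of \cite{huang2021tight}: identifying the combinatorial overlap structure of a deep tree with a functional order parameter in $\cL$, and proving the interpolation bound converges to the \emph{extended} Parisi value rather than to $\textsf{GSED}$. Evenness of $\xi$ enters here to keep the interpolation monotone (it lets the comparison inequality tolerate the negative overlaps that appear in the staircase). Part 1 is bookkeeping, but the quantifier order must be handled with care — fix $\el,\eta$ first, let $\vec q$ vary with $\chi$ — which is legitimate exactly because the obstruction in Part 2 is uniform in $\vec q$.
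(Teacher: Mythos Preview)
The paper does not prove this theorem; it is quoted directly from \cite[Proposition 3.2]{huang2021tight} and used as a black box (the remark after \Cref{cor:formal-even-csps} again defers to their proof for the exponential concentration rate). There is consequently nothing in this paper to compare your sketch against.

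As a summary of the Huang--Sellke argument your two-part split is faithful: engineer tree parameters so that $\chi$ lands the expected overlap configuration at the centre of the forbidden cube, then bound the constrained maximum by a Guerra-type interpolation whose output is the Parisi functional over the enlarged class $\cL$. One point to revisit is your quantifier bookkeeping. You fix the coupling grid $\vec p$ \emph{before} seeing $\chi$ and then assert that the interpolation bound in Part 2 is ``uniform over admissible $\vec q$''. The theorem only requires $\eta$ and $\el$ to be universal; the coupled model --- in particular $\vec p$ --- is allowed to depend on $\chi$. In the Huang--Sellke argument the pair $(\vec p,\vec q)$ is chosen jointly to discretise a near-minimiser $\zeta^\ast\in\cL$, and the constraint $q_d=\chi(p_d)$ then forces $p_d=\chi^{-1}(\text{target})$, which is $\chi$-dependent. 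Claiming the bound holds for a single fixed $\vec p$ and \emph{all} $\vec q$ is stronger than what the statement asks for and is not what their proof establishes; you should let $\vec p$ vary with $\chi$ while keeping only the leaf count and $\eta$ uniform.
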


We now show that the presence of a branching OGP on a random Max-CSP upper-bounds the performance of any overlap-concentrated algorithm.

\begin{theorem}[Formal version of \Cref{cor:ogp_and_overlap_concentration_obstructed}]
\label{thm:branchingogp_stops_csp}
    Suppose that \csp{} exhibits a size-$\el$ branching OGP at value $v$ for a constant $\el$.
    Suppose that an overlap-concentrated algorithm $\cA$
    is $\eta$-forbidden by the OGP for a constant $\eta > 0$.
    Then w.h.p the output of $\cA$ has value at most $v$.
\end{theorem}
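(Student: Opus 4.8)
The plan is to port the argument of Huang and Sellke (see \Cref{sec:hardness} and \cite{huang2021tight}) to the Poisson/CSP setting, using the pieces already assembled: the identification of a tree-coupled ensemble with an $(A,b)$-coupled model of \csp{} (\Cref{defn:coupled-model} and \Cref{remark:branching_ogp}), the properties of the correlation function of a CSP algorithm (\Cref{prop:properties-correlation}), and the debiasing of \Cref{app:debiasing}. First I would replace $\cA$ by its debiased version, which by \Cref{app:debiasing} is still overlap-concentrated, has output value within $o_n(1)$ of the original, and satisfies $\E[\cA(H)]=0$ — matching the $\mathbf m=0$ convention built into our branching OGP (\Cref{remark:branchingogp_modify_defn}). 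Let $\chi(t)=\E R(\cA(H^1_t),\cA(H^2_t))$; by \Cref{prop:properties-correlation} this is continuous, strictly increasing, with $\chi(0)=0$, $\chi(1)=1$, $\chi(t)\le t$. Suppose toward a contradiction that, with high probability, $\cA$ outputs a configuration of value (i.e.\ $\tfrac1n H(\cdot)$) at least $v+\epsilon$ for some fixed $\epsilon>0$ — this is the form relevant to the applications such as \Cref{cor:obstruct-even-csps}; the "probability bounded away from $0$ along a subsequence" case, and the corresponding bound on $\E[\tfrac1n H(\cA(H))]$, are handled identically, following \cite{huang2021tight}.

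Next I would instantiate the tree-coupled ensemble. Let $(\vec k,\vec p,\vec q,\eta)$ be the parameters witnessing the branching OGP; realize the ensemble $(H^{(u)})_{u\in\mathbb L(\vec k)}$ of \csp{} as the $(A,b)$-coupled model with the parameters listed in \Cref{remark:branching_ogp}, so $|\mathbb L(\vec k)|=\el$, and set $\sigma^{(u)}=\cA(H^{(u)})$. Two structural facts drive the proof. (i) Each $H^{(u)}$ is marginally distributed as a single \csp{} instance, so each $\sigma^{(u)}$ is marginally distributed as the output of $\cA$ on one instance. (ii) If $d(u,u')$ denotes the depth of the least common ancestor of leaves $u,u'$, then $H^{(u)}$ and $H^{(u')}$ share exactly the clauses of the $\widetilde H^{(v)}$ with $v$ on the path from that ancestor to the root, whose clause counts sum to $p_{d(u,u')}-p_0=p_{d(u,u')}$; hence $(H^{(u)},H^{(u')})$ is a $p_{d(u,u')}$-correlated pair in the sense of \Cref{defn:t-correlated}, and $\E R(\sigma^{(u)},\sigma^{(u')})=\chi(p_{d(u,u')})$ (in particular leaves in distinct depth-one subtrees are independent). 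Now apply \Cref{defn:overlap-conc} to each of the $\binom{\el}{2}$ leaf-pairs with parameter $\delta<\eta$ and take a union bound: with high probability $|R(\sigma^{(u)},\sigma^{(u')})-\chi(p_{d(u,u')})|\le\delta$ for all pairs simultaneously. The hypothesis that the OGP $\eta$-forbids $\cA$ says precisely that the expected overlap vector, with $\{u,u'\}$-coordinate $\chi(p_{d(u,u')})$, lies in the $\eta$-interior of $S=\{r:\forall u,u'.\ |r_{\{u,u'\}}-Q_{u,u'}|<\eta\}$; since $\delta<\eta$, the actual overlap vector then lies in $S$, i.e.\ $(\sigma^{(u)})_u\in\cQ(\eta)$, with high probability.

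To finish: by hypothesis \csp{} has the size-$\el$ branching OGP at value $v$ with high probability, i.e.\ with high probability every tuple in $\cQ(\eta)$ has average leaf value at most $v$. Intersecting with the event of the previous step gives $\tfrac1{\el n}\sum_{u} H^{(u)}(\sigma^{(u)})\le v$ with high probability. On the other hand, by the contradiction hypothesis and a union bound over the $\el=O(1)$ leaves, $\tfrac1n H^{(u)}(\sigma^{(u)})\ge v+\epsilon$ for every $u$ with high probability, so the same average is $\ge v+\epsilon>v$ — a contradiction. To obtain the literal ``value at most $v$ w.h.p.''\ conclusion (and the expected-value variant) one additionally uses the deterministic, concentrating bounds on $\max_\sigma\tfrac1n H(\sigma)$ and on $\min_\sigma\tfrac1n H(\sigma)$ from \Cref{cor:max-concentration-csp}, together with the independence of leaves in distinct depth-one subtrees, exactly as in \cite{huang2021tight}.

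The conceptual skeleton above is short; the hard part will be the two differences from \cite{huang2021tight}. The first is \Cref{prop:properties-correlation}: proving that the correlation function of $\cA$, now viewed as a function of i.i.d.\ \emph{Poisson} clause-multiplicities rather than i.i.d.\ Gaussians, is still continuous, strictly increasing, and contracting ($\chi(t)\le t$) — this requires the noise-operator / spectral analysis appropriate to the Poisson measure in place of the Ornstein--Uhlenbeck semigroup used in Gaussian space. The second is the debiasing of \Cref{app:debiasing}, needed so that the $\mathbf m=0$ branching OGP suffices. The one point inside the present proof that genuinely needs care — and the reason one follows Huang and Sellke's Section~3 rather than a one-line reduction — is the probability bookkeeping in the last step: the output value of a general algorithm need not concentrate, so the passage from ``average leaf value $\le v$ w.h.p.''\ back to a statement about a single instance must go through the weak coupling of the leaves and the boundedness of $\tfrac1n H$.
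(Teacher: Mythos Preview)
Your proposal is correct and follows the same approach as the paper: run $\cA$ on all leaves of the tree ensemble, use overlap concentration plus the $\eta$-forbidding hypothesis to force $(\sigma^{(u)})_u\in\cQ(\eta)$ with high probability, then invoke the OGP to bound the average leaf value. The paper is slightly more direct than your write-up: rather than arguing by contradiction under the (too-strong) hypothesis that $\cA$ succeeds with high probability and then deferring the ``probability bounded away from $0$'' case to \cite{huang2021tight}, it sets $p=\Pr[\text{value}\ge v]$ outright, cites \cite[Proof of Proposition~3.6(a)]{huang2021tight} for the single inequality $\Pr[\cE]\ge p^{\el}$ (where $\cE$ is the event that the average leaf value is at least $v$), combines this with $\Pr[\cE]=o_n(1)$ from your overlap-concentration step, and takes $\el$th roots. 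This yields the literal ``value at most $v$ w.h.p.''\ conclusion in one stroke, without the $v+\epsilon$ detour.
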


\begin{proof}
    Let $p$ be the probability that $\cA$ outputs a solution of value at least $v$
    when run on a single instance of \csp{}, so that we want to prove $p \leq o_n(1)$.
    Fix the correlation structure of the Hamiltonians for which the branching OGP holds, and let $H^{(u)}$ be the correlated Hamiltonians.
    Consider running algorithm $\cA$ on each of the instances $H^{(u)}$.
    Let $\cE$ be the event that the average value of the output solutions is at least $v$.
    On the one hand, by \cite[Proof of Proposition 3.6(a)]{huang2021tight},
    \begin{equation}\label{eq:cE:lower-bound}
        \Pr[\cE] \geq p^\el\,.
    \end{equation}
    
    On the other hand, $\cE$ is unlikely, because the expected output of $\cA$ lies
    in the forbidden region of the OGP, and if the output of $\cA$
    lies in the forbidden region, then by definition $\cE$ does not hold.
    Specifically,
    \begin{align}
        \Pr[\cE] &\leq \Pr[\cA(H^{(u)})\text{ outside of forbidden region}]\,.
    \end{align}
    The right-hand side can be union-bounded
    across the $\binom{\el}{2}$ pairwise overlaps describing the forbidden region. 
    In order to lie outside the region, the deviation from the expectation must be at least $\eta$ in one of the coordinates.
    By overlap concentration, 
    the deviation probability is at most $\binom{\el}{2}\cdot o_n(1) = o_n(1)$.
    Combining this with \Cref{eq:cE:lower-bound} and taking the $\el$th root yields the claim.
\end{proof}

\begin{remark}
Note that \cite{huang2021tight} obstruct a broader class of algorithms that can output solutions in the convex hull of the solution space; since we  restrict to algorithms that output valid values, we do not use the technical considerations of \cite[Section 3.3]{huang2021tight}.
\end{remark}

Combining \Cref{thm:ogp-existence}, \Cref{prop:S-OGP}, \Cref{thm:branchingogp_stops_csp}, we conclude the following:

\begin{corollary}\label{cor:alg-threshold}
    Let $\cI$ be an instance of \csp{} such that every predicate $f$ in the support of $\Lam$ is even.
    For all $\eps > 0$ there is $\alpha_0 > 0$ such that for all $\al \geq \al_0$ and all overlap-concentrated algorithms, w.h.p it holds that
    \begin{equation}
        \frac{1}{n}H^\al(\cA(\cI)) \leq \E_{f \sim \Lam}[f] + \frac{ALG + \eps}{\sqrt{\alpha}}\,.
    \end{equation}
    where ALG is defined in \Cref{def:alg} for $\xi$ related to $\Lam$ as in \Cref{eq:main-eqn}.
\end{corollary}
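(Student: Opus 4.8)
The plan is to chain three results from the preceding development: \Cref{thm:ogp-existence} (an even spin glass has a branching OGP at every value above $ALG$, with parameters uniform over all correlation functions), \Cref{prop:S-OGP} (an $S$-OGP transfers from the spin glass to the $(A,b)$-coupled CSP once $\alpha$ is large), and \Cref{thm:branchingogp_stops_csp} (a branching OGP on \csp{} that $\eta$-forbids an overlap-concentrated algorithm upper-bounds that algorithm's value). The structurally delicate point is the quantifier order in the corollary: $\alpha_0$ must be produced before the algorithm $\cA$, even though the branching OGP we eventually invoke is the one tailored to $\cA$'s own correlation function.

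First I would reduce to mean-zero algorithms via the debiasing argument of \Cref{app:debiasing} (which exploits the random literal signs of \csp{}); thus assume $\E[\cA(\cdot)] = 0$, so that \Cref{prop:properties-correlation} applies and the correlation function $\chi_{\cA}(t) = \E[R(\cA(H^1_t),\cA(H^2_t))]$ of $\cA$ on $t$-correlated CSP instances is a bona fide correlation function in the sense of \Cref{defn:corr-function}. Since every predicate in the support of $\Lam$ is even, all odd Fourier levels of each such predicate vanish, so the mixture polynomial $\xi$ attached to $\Lam$ by \Cref{eq:main-eqn} is an even polynomial. Now fix $\eps > 0$, set $v = ALG + \eps/2 > ALG$ (with $ALG$ computed for this $\xi$), and apply \Cref{thm:ogp-existence} to the pair $(\xi, v)$. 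This yields $\eta > 0$ and an integer $\el$ \emph{depending only on $\Lam$ and $\eps$} --- crucially not on $\cA$ --- such that for \emph{every} correlation function, in particular $\chi_{\cA}$, there is a tree-coupled ensemble for \sgxi{} on at most $\el$ Hamiltonians together with a set $S$ that $\eta$-forbids $\chi_{\cA}$, and this ensemble exhibits an $S$-branching-OGP at value $v$ with high probability.

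Next I would transfer this to the CSP. A tree-coupled ensemble is an $(A,b)$-coupled model (see \Cref{remark:branching_ogp}), so \Cref{prop:S-OGP}, applied with slack $\eps/2$, supplies a threshold $\alpha_0$ beyond which the corresponding $(A,b)$-coupled model of \csp{} exhibits the same $S$-branching-OGP, now at value $\E_{f\sim\Lam}[f] + \frac{v + \eps/2}{\sqrt\alpha} = \E_{f\sim\Lam}[f] + \frac{ALG+\eps}{\sqrt\alpha}$, with high probability. The essential observation is that $\alpha_0$ may be taken uniform over the set $S$: the only $\alpha$-dependence in \Cref{prop:S-OGP} enters through the error term of \Cref{prop:max-value}, whose constant depends only on $\el$ and hence only on $\Lam$ and $\eps$, so $\alpha_0$ is independent of $\cA$, as required. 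Because $\eta$-forbidding depends only on $(S,\eta)$ and the correlation function (not on the model or the OGP value), this CSP branching OGP still $\eta$-forbids $\chi_{\cA}$, and therefore $\eta$-forbids $\cA$ itself. Applying \Cref{thm:branchingogp_stops_csp} to this branching OGP of constant size then gives that with high probability $\frac1n H^{\alpha}(\cA(\cI)) \leq \E_{f\sim\Lam}[f] + \frac{ALG+\eps}{\sqrt\alpha}$, which is the claim; intersecting the three ``with high probability'' events costs only an additional $o_n(1)$.

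I expect the main obstacle --- and the step I would scrutinize most carefully --- to be exactly the two uniformity claims above: that $\eta$ and $\el$ in \Cref{thm:ogp-existence} are fixed before, hence independently of, the correlation function, and that the transfer threshold $\alpha_0$ does not depend on the particular forbidding set $S$. Both are afforded by the cited statements as written (\Cref{thm:ogp-existence} fixes $\eta,\el$ and only then quantifies over correlation functions; the $O(\alpha^{-2/3})$ error term in \Cref{prop:max-value} has a constant depending only on $\el$), but it is precisely this uniformity that permits $\alpha_0$ to be chosen ahead of $\cA$ in the corollary. The remaining bookkeeping --- the splitting of $\eps$ as $\eps/2 + \eps/2$ and the debiasing reduction --- is routine.
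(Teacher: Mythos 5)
Your proposal is correct and takes essentially the same approach as the paper, whose proof is simply ``Combining \Cref{thm:ogp-existence}, \Cref{prop:S-OGP}, \Cref{thm:branchingogp_stops_csp}, we conclude the following.'' You have filled in exactly the bookkeeping the paper elides --- the debiasing reduction, the $\eps/2 + \eps/2$ split, and especially the quantifier-order check that $\eta$, $\ell$, and hence $\alpha_0$ can be fixed before the algorithm $\cA$ --- and each of the uniformity claims you flag is indeed licensed by the cited statements as written.
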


Using \Cref{thm:alg-opt}, which states that ALG is strictly less than the ground state energy for even spin glasses with no degree-2 part,
\begin{corollary}[Formal version of \Cref{cor:obstruct-even-csps}]
\label{cor:formal-even-csps}
    Let $\cI$ be an instance of \csp{} such that for every predicate $f$ in the support of $\Lam$, the only nonzero Fourier coefficients of $f$ have even degree $j \ge 4$.
    There are constants $\eps > 0, \alpha_0 > 0$ such that for all $\alpha \geq \alpha_0$ and all overlap-concentrated algorithms $\cA$, w.h.p it holds that
    \begin{equation}
        \frac{1}{n}H^\al(\cA(\cI)) \leq v_\cI - \frac{\eps}{\sqrt{\alpha}}\,.
    \end{equation}
    In particular, for $\al = \al_0$, the value is bounded away from $v_\cI$ by a constant factor.
\end{corollary}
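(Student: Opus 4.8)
The plan is to synthesize three facts that are already available: the algorithmic upper bound for even CSPs (\Cref{cor:alg-threshold}), the strict separation $ALG < GSED(\sgxi{})$ for even spin glasses with no quadratic part (\Cref{thm:alg-opt}), and the optimal‑value equivalence (\Cref{cor:maximum_value}, in the $\E_{f\sim\Lam}[f]$‑form given by \Cref{prop:max-value} with $\el=\el'=1$, $(A,b)=(\mathbb{I},(1))$, $S=\cR^{(1)}$). None of these needs to be re‑derived; the content of the corollary is just to fit them together.

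First I would unpack the hypothesis on $\Lam$. If every predicate $f$ in the support of $\Lam$ has all of its nonzero positive‑degree Fourier coefficients on even levels $j\ge 4$, then each such $f$ is an even function, so \Cref{cor:alg-threshold} applies. Moreover the associated mixture polynomial from \Cref{eq:main-eqn} is supported on even degrees, has vanishing quadratic coefficient $c_2=0$, and — since some predicate has nonzero Fourier weight at a level $j\ge 4$ — is not identically zero. Hence \Cref{thm:alg-opt} applies and yields a strict gap $\delta := GSED(\sgxi{}) - ALG > 0$, where $ALG$ is as in \Cref{def:alg} for this $\xi$.

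Next I would apply \Cref{cor:alg-threshold} with the choice $\eps := \delta/3$, obtaining $\al_1$ so that for all $\al\ge\al_1$ and every overlap‑concentrated $\cA$, with high probability $\tfrac{1}{n} H^\al(\cA(\cI)) \le \E_{f\sim\Lam}[f] + \tfrac{ALG + \delta/3}{\sqrt{\al}}$. In parallel, \Cref{prop:max-value} gives, with high probability, that $v_\cI = \E_{f\sim\Lam}[f] + \tfrac{GSED(\sgxi{})}{\sqrt{\al}} + (\text{error})$ where the error, times $\sqrt{\al}$, tends to $0$ as $n\to\infty$ and then $\al\to\infty$; using this I would fix $\al_0\ge\al_1$, depending only on $\Lam$ and $\delta$, large enough that for every $\al\ge\al_0$ one has $\big| v_\cI - \E_{f\sim\Lam}[f] - \tfrac{GSED(\sgxi{})}{\sqrt{\al}}\big| \le \tfrac{\delta/3}{\sqrt{\al}}$ with high probability (for all $n$ sufficiently large). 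Then for $\al\ge\al_0$, a union bound over the two $o_n(1)$ failure events gives, with high probability,
\begin{align*}
\tfrac{1}{n} H^\al(\cA(\cI))
&\le \E_{f\sim\Lam}[f] + \tfrac{ALG + \delta/3}{\sqrt{\al}}
\le \Big(v_\cI - \tfrac{GSED(\sgxi{})}{\sqrt{\al}} + \tfrac{\delta/3}{\sqrt{\al}}\Big) + \tfrac{ALG + \delta/3}{\sqrt{\al}} \\
&= v_\cI - \tfrac{GSED(\sgxi{}) - ALG - 2\delta/3}{\sqrt{\al}}
= v_\cI - \tfrac{\delta/3}{\sqrt{\al}}\,,
\end{align*}
so the claim holds with $\eps := \delta/3$. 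For the final sentence, at $\al=\al_0$ the gap $\eps/\sqrt{\al_0}$ is a fixed positive constant; since $v_\cI$ is bounded (in $[0,1]$ for $0/1$‑valued predicates, and bounded below by a positive constant whenever the predicates are nonconstant), this is the claimed constant‑factor separation from $v_\cI$.

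Since the argument is essentially bookkeeping, the ``main obstacle'' is mild: one must (i) verify that the Fourier hypothesis on $\Lam$ really does place $\xi$ in exactly the regime covered by \Cref{thm:alg-opt} (even, $c_2=0$, $\xi\not\equiv0$), and (ii) be careful with the iterated‑limit structure ($n\to\infty$ then $\al\to\infty$) of the error term in the optimal‑value equivalence, so that a single threshold $\al_0$ works uniformly for all larger $\al$. Neither step involves any substantive difficulty.
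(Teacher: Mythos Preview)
Your proposal is correct and follows essentially the same route as the paper, which simply invokes \Cref{thm:alg-opt} on top of \Cref{cor:alg-threshold} (with \Cref{cor:maximum_value} implicit) to conclude; you have just spelled out the bookkeeping more carefully than the paper does.
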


\begin{remark}
\label{rmk:more-concentration}
    As in \Cref{rmk:concentration},
    assuming:
    \begin{enumerate}
        \item the OGP holds with probability at least $1 -\exp(-C_1 \cdot n)$, which is shown by \cite{huang2021tight} in their proof of \Cref{thm:ogp-existence}
        \item $\cA$ is overlap-concentrated with probability at least $1 - \exp(-C_2\cdot n)$ for $C_2 = C_2(\delta)$
    \end{enumerate}
    then the conclusions of \Cref{thm:branchingogp_stops_csp}, \Cref{cor:alg-threshold}, \Cref{cor:formal-even-csps}
    hold with probability at least $1 - \exp(-C\cdot n)$,
    where $C$ depends on $C_1, C_2$.
\end{remark}

\section{Discussion}
\label{sec:discussion}

We establish a formal average-case link between spin glasses and Max-CSPs of certain minimum clause densities.
This link shows an equivalence of optimal value, overlap gaps, and hardness for a large class of algorithms.
Curiously, every Max-CSP with the same noise stability polynomial is linked to the same spin glass. 

As part of this work, we extend the list of Max-CSPs known to have an OGP. 
It is an open question to completely classify which spin glasses (and thus which Max-CSPs) have an OGP.
In the spherical spin glass setting~\cite[Proposition 1]{subag18}, the weight of the quadratic terms exactly determine the presence of an OGP; the same may be true on the hypercube.
There is a technical hurdle to proving the existence of OGPs on spin glasses when the mixture polynomial is not even. For example, the associated spin glass to Max-$3$XOR is not proven to have an OGP, although it is expected to have one~\cite{alaoui2020optimization}.

It is also possible that the onset of the branching OGP of~\cite{huang2021tight} 
marks the hardness threshold for \emph{all} efficient algorithms for spin glasses and Max-CSPs, and not just overlap-concentrated algorithms.
Furthermore, it is possible that a single algorithm (namely, suitably-applied message-passing) is the optimal algorithm.
This is remarkably similar to the situation for worst-case analysis vis-{\`a}-vis the \emph{Unique Games Conjecture (UGC)}~\cite{khot2002power}.
The UGC implies that the standard SDP is the optimal approximation algorithm for any CSP~\cite{raghavendra2008optimal}. It is interesting to investigate whether other algorithms, such as Sum-of-Squares algorithms, can be designed in a way that has equivalent performance to approximate message-passing, although there are existing certification lower bounds for a variety of models~\cite{bhattiprolu2016sum, ghosh2020sum}.
Our curiosity is heightened by the observation that the Parisi formula, and algorithms for spin glasses (and hence
average-case CSPs), only use the ``degree-2'' part of the overlap distribution on $\cR^{(\el)}$, analogous to how the basic SDP is also a ``degree-2'' algorithm.

How general is overlap concentration? \cite{huang2021tight} show that Langevin dynamics and certain families of approximate message-passing algorithms are overlap-concentrated on spin glasses, and therefore obstructed in the presence of a branching OGP to a constant given by the extended Parisi formula.
Algorithms representable as low-degree polynomials are known to be \emph{stable} on spin glasses, and obstructed for some spin glasses~\cite{gamarnik2020low} and CSPs~\cite{bresler2022algorithmic}. 
However, it remains a technical challenge to show that low-degree polynomials are overlap-concentrated:
\begin{conjecture}[Low-degree polynomials are overlap-concentrated on CSPs and spin glasses]
    Any low-degree polynomial algorithm that solves a typical instance of~\,$\csp{}$ or $\sgxi{}$ as defined in~\cite[Definition 2.3]{gamarnik2020low} is overlap-concentrated.
\end{conjecture}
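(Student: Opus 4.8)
The plan is to reduce overlap concentration to a single variance bound and then invoke hypercontractivity. Fix $t\in[0,1]$ and pass to the Poisson model, so that a $t$-correlated pair $(H^1_t,H^2_t)$ of instances of \csp{} (or of \sgxi{}) is jointly a family of \emph{independent} random variables: the shared block appears once and each private block once (i.i.d.\ Poisson multiplicities for \csp{}, i.i.d.\ Gaussians for \sgxi{}). If $\cA$ is given by the degree-$D$ polynomial $f=(f_1,\dots,f_n)$ of \cite[Definition 2.3]{gamarnik2020low} (before the sign rounding that produces a point of $\{\pm1\}^n$), then
\[
\Psi \;:=\; R(\cA(H^1_t),\cA(H^2_t)) \;=\; \frac1n\sum_{i=1}^n f_i(H^1_t)\,f_i(H^2_t)
\]
is a polynomial of degree at most $2D$ in this independent family. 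By hypercontractivity — Gaussian for \sgxi{}, and the corresponding estimate for products of the (log-concave) Poisson coordinates for \csp{} — every $L^q$ norm of a polynomial of degree at most $2D$ is within a factor $C^{D}$ of its $L^2$ norm, so standard tail bounds for low-degree polynomials (hypercontractivity together with polynomial concentration inequalities of Kim--Vu / Schudy--Sviridenko type) give $\Pr[\lvert\Psi-\E\Psi\rvert>\delta]=o_n(1)$ for every fixed $\delta>0$ \emph{provided} $\Var(\Psi)=o_n(1)$. So the whole problem reduces to proving $\Var(\Psi)=o_n(1)$, where one may use the normalization $\tfrac1n\E\lVert f(H)\rVert^2\le 1$.

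Next I would expand the variance over output coordinates:
\[
\Var(\Psi)=\frac1{n^2}\sum_{i,j=1}^n \cov\!\big(f_i(H^1_t)f_i(H^2_t),\,f_j(H^1_t)f_j(H^2_t)\big).
\]
The diagonal terms ($i=j$) sum to $O(1/n)$, using $\E[f_i^4]\le C^D(\E[f_i^2])^2$ and $\tfrac1n\sum_i\E[f_i^2]\le 1$. Expanding each off-diagonal covariance in the Hermite basis (for \sgxi{}) or the Charlier basis (for \csp{}) and applying Cauchy--Schwarz level by level bounds it, up to an $\E[f_i^2]\E[f_j^2]$-type factor, by the \emph{cross-influence} $\sum_k\sqrt{\Inf_k(f_i)\Inf_k(f_j)}$ of the two coordinate polynomials. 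Thus, up to $D$-dependent constants, $\Var(\Psi)$ is controlled by the average over ordered pairs $(i,j)$ of this cross-influence, and it suffices to show that this average is $o_n(1)$ — equivalently, that a typical pair of output coordinates shares essentially no influential input coordinate, so that no $O(1)$-size set of inputs controls a constant fraction of the outputs.

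I expect this last reduction to be the main obstacle. For an \emph{arbitrary} degree-$D$ polynomial it simply fails — e.g.\ the useless algorithm with every $f_i$ equal to one fixed normalized linear form outputs a constant-sign vector whose overlap does not concentrate — so the hypothesis that $\cA$ \emph{solves} the instance must enter in an essential way. The plan here is to show that an algorithm whose rounded output attains value within $o(1/\sqrt{\alpha})$ of $v_{\cI}$ (respectively within $o(1)$ of $GSED(\sgxi{})$) cannot concentrate its dependence on a few input coordinates: a union bound over instances shows the disorder is ``incompressible'' (no $o(n)$ coordinates of $H$ pin down a near-optimal assignment), so near-optimality forces $f(H)$ to be genuinely correlated with $\Theta(n)$ independent features of $H$, which bounds the average cross-influence. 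Turning this heuristic link between near-optimal \emph{value} and small cross-influence into a theorem is precisely where the conjecture is open; a plausibly easier intermediate step is to prove it first for the subclass of algorithms already known to be \emph{stable} in the sense of \cite{gamarnik2020low}, where what remains is to upgrade their $L^2$ stability estimate to a genuine bound on $\Var(\Psi)$. Finally one must handle the sign rounding, which is not a low-degree map: on the high-probability event that the output is near-optimal, one argues the pre-rounding vector $f(H)$ is close to $\sgn(f(H))$ in the controlled sense used in \cite{gamarnik2020low}, transferring concentration of $\tfrac1n\langle f(H^1),f(H^2)\rangle$ to that of $\tfrac1n\langle\sgn(f(H^1)),\sgn(f(H^2))\rangle$.
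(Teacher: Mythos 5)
The statement you are attempting to prove is stated in the paper as a \emph{conjecture}, explicitly flagged as open (``it remains a technical challenge to show that low-degree polynomials are overlap-concentrated''), so there is no paper proof to compare against. Your write-up is, correctly, a plan rather than a proof, and to your credit you are candid about exactly where the gap is: the step linking near-optimality of $\cA$ to small average cross-influence is where the conjecture genuinely remains open, and your counterexample (all $f_i$ equal to a fixed normalized linear form) cleanly shows that the hypothesis ``$\cA$ solves the instance'' must enter.

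A few points to tighten if you pursue this. First, the diagonal bound $\tfrac1{n^2}\sum_i \Var\big(f_i(H^1_t)f_i(H^2_t)\big) = O(1/n)$ does not follow from hypercontractivity plus $\tfrac1n\sum_i\E[f_i^2]\le 1$ alone: that argument gives $\tfrac{C^D}{n^2}\sum_i(\E[f_i^2])^2$, which can be $\Theta(1)$ if the second-moment mass is concentrated on $o(n)$ output coordinates (e.g. $\E[f_1^2]=n$, the rest zero). You will need either a max-coordinate bound $\max_i\E[f_i^2]=O(1)$, or to carry the argument through on the \emph{post-rounding} overlap — but since $\sgn$ is not low-degree, the rounding issue you raise at the end is not cosmetic; it enters already at the diagonal. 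Second, Chebyshev already converts $\Var(\Psi)=o_n(1)$ into the qualitative ``w.h.p.'' form of \Cref{defn:overlap-conc}; hypercontractivity is needed only if you want the exponential rate of \Cref{rmk:more-concentration}. Third, the proposed bound of the off-diagonal covariance by $\sum_k\sqrt{\Inf_k(f_i)\Inf_k(f_j)}$ needs a precise statement and proof; the trivial Cauchy–Schwarz estimate on that cross-influence sum only gives $D\sqrt{\E[f_i^2]\E[f_j^2]}$, which summed over $i\ne j$ again yields a constant, not $o_n(1)$ — consistent with your own conclusion that the averaging must exploit the solving hypothesis.

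The most actionable concrete suggestion in your plan is the intermediate one: first try to upgrade the $L^2$ stability estimate of \cite{gamarnik2020low} — which already controls $\E\|f(H^1_t)-f(H^2_t)\|^2$ — into control of $\Var(\Psi)$, since for that subclass the hardest heuristic step (``near-optimality forces dependence on $\Theta(n)$ features'') is already partially encoded. As written, though, the argument has the diagonal gap above and the acknowledged open core, so it does not constitute a proof.
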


There may be other forms of equivalence between spin glasses and Max-CSPs. We conjecture that the two models have the same \emph{distribution} of overlap vectors:
\begin{conjecture}[Equivalence of distribution of overlap vector, informal]
\label{conj:distributional_equivalence_informal}
Take any Max-CSP and consider the associated spin glass \sgxi{}, where $\xi$ is defined as in \Cref{eq:main-eqn}. For any coupled ensemble of the models, the distribution of $Q(\sigma_1, \dots, \sigma_\el)$ for random near-optimal solutions converges in some sense as $n \to \infty$. Furthermore, for $\alpha \to \infty$ the distribution for the CSP converges to the distribution for the spin glass.
\end{conjecture}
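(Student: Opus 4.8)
\quad The plan is to upgrade \Cref{thm:overlap-geometry} from a statement about the free energy restricted to a single open set to one about the entire law of the overlap vector under the Gibbs measure, and then take the limits in the order $n\to\infty$, followed by $\al\to\infty$ with $\beta=\beta(\al)$ chosen in a suitable window. Fix $\el$ and a coupled ensemble; write $\cH$ (resp.\ $\cG$) for the CSP (resp.\ spin glass) grand Hamiltonian, and let $\mu^{\cH}_{n,\beta}$ be the law on $\cR^{(\el)}$ of $Q(\mathbf{x})$ when $\mathbf{x}$ is drawn from the Gibbs distribution of $\cH$ at inverse temperature $\beta$. For any open $S$ one has $\tfrac{1}{\el n\beta}\log\mu^{\cH}_{n,\beta}(S)=\phi_{\cH,S}(\beta)-\phi_{\cH,\cR^{(\el)}}(\beta)$, and subtracting the two instances of \Cref{thm:overlap-geometry} with $S$ and with $\cR^{(\el)}$ makes the $\E_{f\sim\Lam}[f]$ term cancel:
\[
\tfrac{1}{\el n\beta}\log\mu^{\cH}_{n,\beta}(S)\;=\;\tfrac{1}{\sqrt{\al}}\cdot\tfrac{1}{\el n\beta}\log\mu^{\cG}_{n,\beta}(S)\;+\;O\!\left(\tfrac{\beta^{2}}{\al^{2}}\right)+o_{n}(1),
\]
uniformly over $S$ (the implied constant depends only on $\el$, since the derivative bound in \Cref{sec:mainproof} used nothing about $S$ beyond $\abs{U_{n}(S)}\le 2^{\el n}$). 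Taking $n\to\infty$ — where the limits $\phi_{\cH,S}$ and $\phi_{\cG,S}$ should exist by the standard Guerra superadditivity arguments for the overlap-restricted partition functions together with the concentration of \Cref{lem:gaussian-concentration} and \Cref{prop:poisson-concentration} — and then $\al\to\infty$ with $\sqrt{\al}\ll\beta\ll\al^{3/4}$ (so that both Gibbs measures concentrate on solutions within $o(1/\sqrt{\al})$ of their optima while the error stays controlled, exactly as in the choice $\beta=\al^{2/3}$ of \Cref{prop:max-value}), one finds that any open $S$ at positive distance from the support of the limiting spin-glass near-optimal overlap law has $\mu^{\cH}_{n,\beta}(S)\to 0$. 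Covering the complement of a support-neighbourhood by finitely many such $S$ and union-bounding gives the \emph{support} half of \Cref{conj:distributional_equivalence_informal}; this is, in essence, already \Cref{cor:ogp} (via \Cref{prop:S-OGP}).

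To obtain weak convergence of the whole measure, not merely of its support, I would additionally rerun the Guerra--Toninelli computation of \Cref{sec:mainproof} with a smooth bounded potential $\lambda\,g(Q(\mathbf{x}))$, $g\in C^{\infty}(\cR^{(\el)})$, added to the grand Hamiltonian. Since the perturbation is deterministic in the disorder and bounded, neither the Poisson derivative nor the Gaussian derivative changes, and differentiating the resulting identity at $\lambda=0$ expresses $\E\langle g(Q(\mathbf{x}))\rangle$ under the CSP Gibbs measure in terms of the same quantity for the spin glass, up to $\tfrac{1}{\sqrt{\al}}$-scaled errors. As $g$ ranges over a convergence-determining family and $\cR^{(\el)}$ is a compact polytope, this would give $\mu^{\cH}_{n,\beta}\Rightarrow\mu^{\cG}_{n,\beta}$ in the iterated limit. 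This still requires one genuine input from spin-glass theory that is only partially available on the hypercube: that $\mu^{\cG}_{n,\beta}$ itself converges as $n\to\infty$ and then $\beta\to\infty$ (the zero-temperature functional-order-parameter / ultrametricity picture), so that the interpolation has a fixed limiting object to push forward.

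The main obstacle is that the error $O(\beta^{2}/\al^{2})$ is quantitatively too crude to resolve the overlap distribution \emph{inside} the near-optimal band. For open $S$ at a fixed positive distance from the support the spin-glass rate function $-\tfrac1n\log\mu^{\cG}_{n,\beta}(S)$ is of order $\beta$, which dwarfs the per-site error $O(\beta^{3}/\al^{2})$ once $\al$ is large — this is why the support transfers — but for $S$ meeting the support the rate function is $o(\beta)$ and the error dominates it, so the argument only pins $\mu^{\cH}_{n,\beta}(S)$ down to within an uncontrolled $e^{\pm\Theta(n)}$ factor. Tracing $O(\beta^{2}/\al^{2})$ back through \Cref{sec:mainproof}, it originates from the cubic and higher terms in the Taylor expansion of the Poisson derivative, which survive the random-sign averaging of \Cref{eqn:whyweneedrandomsigns} — the three-replica correlation $\E_{f^{*}}[f^{*}(\mathbf{x}_{i})f^{*}(\mathbf{x}_{j})f^{*}(\mathbf{x}_{k})]$ need not vanish — and have no counterpart on the purely-Gaussian side of the interpolation. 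Closing this gap appears to need either a sharper, non-logarithmic second-moment interpolation with per-site error $o(1/\sqrt{\al})$, or a right-hand-side model that also matches the CSP's higher mixed moments, or a route bypassing the interpolation entirely, e.g.\ establishing Ghirlanda--Guerra-type identities directly for the CSP Gibbs measure; any of these would be the technical heart of a proof.
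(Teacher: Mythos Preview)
The statement you are attempting to prove is stated in the paper as an \emph{open conjecture} (in the Discussion section), not as a theorem; the paper provides no proof and explicitly leaves it as a direction for future work. So there is no ``paper's own proof'' to compare against.

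That said, your proposal is a careful and honest sketch that correctly identifies both a natural line of attack and the reason it does not close. Your observation that subtracting two instances of \Cref{thm:overlap-geometry} gives a comparison of $\tfrac{1}{\el n\beta}\log\mu(S)$ for the two models is exactly right, and you are correct that this already yields the support-level statement (which is, as you say, essentially \Cref{prop:S-OGP}). The idea of tilting by a smooth potential $\lambda\, g(Q(\mathbf{x}))$ and differentiating at $\lambda=0$ is also reasonable, though note that differentiating the interpolation identity in $\lambda$ introduces a covariance between $g(Q(\mathbf{x}))$ and the free energy, so controlling the error after differentiation is not automatic from the undifferentiated bound.

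Your diagnosis of the obstruction is the key point and is accurate: the $O(\beta^{2}/\al^{2})$ error from the cubic-and-higher Taylor terms in the Poisson derivative is of the wrong scale to resolve the distribution \emph{within} the near-optimal band, and those terms genuinely do not cancel under the random-sign averaging (only the first moment does, via \Cref{eqn:whyweneedrandomsigns}). You also correctly flag the second missing ingredient --- convergence of the spin-glass overlap distribution itself at zero temperature on the hypercube --- which is not fully established. In short, your proposal is not a proof but a well-reasoned explanation of why the statement is conjectural, and the avenues you list (sharper interpolation, matching higher moments, or Ghirlanda--Guerra identities directly for the CSP) are indeed the natural candidates for closing the gap.
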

It is also likely that algorithms beyond the QAOA have identical average-case performance on a random instance of Max-CSP and that of the corresponding spin glass. In fact, we suspect that every Max-CSP has an optimal algorithm related to message-passing that obeys this equivalence.

Spin glasses may also be related to more classes of CSPs, such as those with non-Boolean inputs and those without random literal signs. 
Also of interest
is the problem of refuting the CSP when it is unsatisfiable, which typically requires $\alpha$ superconstant.
It is not known how to use statistical physics methods to study refutation~\cite{how_to_refute_random_csp, hsieh2021certifying}.
Another open problem (related to studying CSPs at small constraint density $\alpha$) is to determine the optimal average-case value to higher-order terms in $\alpha$.

\newpage

\section*{Acknowledgements}
We thank Antares Chen for collaborating during the early stages of this project.

We also thank Peter J. Love for helpful feedback on a previous version of this manuscript. 

KM thanks Ryan Robinett for a tip on improving the numerical integration scheme. JSS did some of this work as a visiting student at Bocconi University. CJ thanks Tim Stumpf for help with the probabilistic arguments.

We thank the anonymous reviewers for many suggestions to improve the text,
for pointing out an error in the proof of \Cref{thm:branchingogp_stops_csp}, and for the reference~\cite{barbier2022strong}.

\bibliography{main.bib}

\newcommand{\etalchar}[1]{$^{#1}$}
\begin{thebibliography}{BFM{\etalchar{+}}22}

\bibitem[AC17]{auffinger2017parisi}
Antonio Auffinger and Wei-Kuo Chen.
\newblock Parisi formula for the ground state energy in the mixed $ p $-spin
  model.
\newblock {\em The Annals of Probability}, 45(6B):4617--4631, 2017.

\bibitem[ACO08]{achlioptas2008algorithmic}
Dimitris Achlioptas and Amin Coja-Oghlan.
\newblock Algorithmic barriers from phase transitions.
\newblock In {\em 2008 49th Annual IEEE Symposium on Foundations of Computer
  Science}, pages 793--802. IEEE, 2008.

\bibitem[ACZ20]{auffinger2020sk}
Antonio Auffinger, Wei-Kuo Chen, and Qiang Zeng.
\newblock The SK model is infinite step replica symmetry breaking at zero
  temperature.
\newblock {\em Communications on Pure and Applied Mathematics}, 73(5), 2020.

\bibitem[AM02]{achlioptas2002asymptotic}
Dimitris Achlioptas and Cristopher Moore.
\newblock The asymptotic order of the random k-SAT threshold.
\newblock In {\em The 43rd Annual IEEE Symposium on Foundations of Computer
  Science, 2002. Proceedings.}, pages 779--788. IEEE, 2002.

\bibitem[AM20]{alaouialgorithmic}
Ahmed~El Alaoui and Andrea Montanari.
\newblock Algorithmic Thresholds in Mean Field Spin Glasses.
\newblock {\em arXiv preprint arXiv:2009.11481}, 2020.

\bibitem[AMS21a]{alaoui2021local}
Ahmed~El Alaoui, Andrea Montanari, and Mark Sellke.
\newblock Local algorithms for Maximum Cut and Minimum Bisection on locally
  treelike regular graphs of large degree.
\newblock {\em arXiv preprint arXiv:2111.06813}, 2021.

\bibitem[AMS21b]{alaoui2020optimization}
Ahmed~El Alaoui, Andrea Montanari, and Mark Sellke.
\newblock Optimization of mean-field spin glasses.
\newblock {\em The Annals of Probability}, 49(6), November 2021.

\bibitem[AOW15]{how_to_refute_random_csp}
Sarah~R. Allen, Ryan O'Donnell, and David Witmer.
\newblock How to Refute a Random CSP.
\newblock In {\em 2015 IEEE 56th Annual Symposium on Foundations of Computer
  Science}, pages 689--708, 2015.

\bibitem[ART06]{achlioptas2006solution}
Dimitris Achlioptas and Federico Ricci-Tersenghi.
\newblock On the solution-space geometry of random constraint satisfaction
  problems.
\newblock In {\em Proceedings of the thirty-eighth annual ACM Symposium on
  Theory of Computing}, pages 130--139, 2006.

\bibitem[AWZ20]{ogp_pca}
G{\'e}rard~Ben Arous, Alexander~S Wein, and Ilias Zadik.
\newblock Free Energy Wells and Overlap Gap Property in Sparse PCA.
\newblock In {\em Conference on Learning Theory}, pages 479--482. PMLR, 2020.

\bibitem[BFM{\etalchar{+}}22]{basso2021quantum}
Joao Basso, Edward Farhi, Kunal Marwaha, Benjamin Villalonga, and Leo Zhou.
\newblock {The Quantum Approximate Optimization Algorithm at High Depth for
  MaxCut on Large-Girth Regular Graphs and the Sherrington-Kirkpatrick Model}.
\newblock In {\em 17th Conference on the Theory of Quantum Computation,
  Communication and Cryptography (TQC 2022)}, pages 7:1--7:21, 2022.

\bibitem[BGL17]{bhattiprolu2016sum}
Vijay Bhattiprolu, Venkatesan Guruswami, and Euiwoong Lee.
\newblock {Sum-of-Squares Certificates for Maxima of Random Tensors on the
  Sphere}.
\newblock In {\em Approximation, Randomization, and Combinatorial Optimization.
  Algorithms and Techniques (APPROX/RANDOM 2017)}, pages 31:1--31:20, 2017.

\bibitem[BGMZ22]{zhou_qaoa_hyper}
Joao Basso, David Gamarnik, Song Mei, and Leo Zhou.
\newblock Performance and limitations of the QAOA at constant levels on large
  sparse hypergraphs and spin glass models.
\newblock {\em FOCS 2022; arXiv:2204.10306}, 2022.

\bibitem[BH22]{bresler2022algorithmic}
Guy Bresler and Brice Huang.
\newblock The Algorithmic Phase Transition of Random k-SAT for Low Degree
  Polynomials.
\newblock In {\em 2021 IEEE 62nd Annual Symposium on Foundations of Computer
  Science (FOCS)}, pages 298--309. IEEE, 2022.

\bibitem[BM21]{boulebnane2021}
Sami Boulebnane and Ashley Montanaro.
\newblock Predicting parameters for the Quantum Approximate Optimization
  Algorithm for MAX-CUT from the infinite-size limit.
\newblock {\em arXiv preprint arXiv:2110.10685}, 2021.

\bibitem[BMZ05]{braunstein2005survey}
Alfredo Braunstein, Marc M{\'e}zard, and Riccardo Zecchina.
\newblock Survey propagation: An algorithm for satisfiability.
\newblock {\em Random Structures \& Algorithms}, 27(2):201--226, 2005.

\bibitem[BP22]{barbier2022strong}
Jean Barbier and Dmitry Panchenko.
\newblock Strong replica symmetry in high-dimensional optimal Bayesian
  inference.
\newblock {\em Communications in Mathematical Physics}, pages 1--41, 2022.

\bibitem[CGPR19]{chen2019suboptimality}
Wei-Kuo Chen, David Gamarnik, Dmitry Panchenko, and Mustazee Rahman.
\newblock Suboptimality of local algorithms for a class of max-cut problems.
\newblock {\em Annals of Probability}, 47(3):1587--1618, 2019.

\bibitem[CLSS22]{chou2022limitations}
Chi-Ning Chou, Peter~J. Love, Juspreet~Singh Sandhu, and Jonathan Shi.
\newblock {Limitations of Local Quantum Algorithms on Random MAX-k-XOR and
  Beyond}.
\newblock In {\em 49th International Colloquium on Automata, Languages, and
  Programming (ICALP 2022)}, pages 41:1--41:20, 2022.

\bibitem[CR02]{Crisanti_2002}
Andrea Crisanti and Tommaso Rizzo.
\newblock Analysis of the $\infty$-replica symmetry breaking solution of the
  Sherrington-Kirkpatrick model.
\newblock {\em Physical Review E}, 65(4), Apr 2002.

\bibitem[CvD21]{Claes_2021}
Jahan Claes and Wim van Dam.
\newblock Instance Independence of Single Layer Quantum Approximate
  Optimization Algorithm on Mixed-Spin Models at Infinite Size.
\newblock {\em Quantum}, 5:542, Sep 2021.

\bibitem[DMS17]{dembo2017extremal}
Amir Dembo, Andrea Montanari, and Subhabrata Sen.
\newblock Extremal cuts of sparse random graphs.
\newblock {\em The Annals of Probability}, 45(2):1190--1217, 2017.

\bibitem[DSS16a]{ding2016maximum}
Jian Ding, Allan Sly, and Nike Sun.
\newblock Maximum independent sets on random regular graphs.
\newblock {\em Acta Mathematica}, 217(2):263--340, 2016.

\bibitem[DSS16b]{ding2014satisfiability}
Jian Ding, Allan Sly, and Nike Sun.
\newblock Satisfiability Threshold for Random Regular NAE-SAT.
\newblock {\em Communications in Mathematical Physics}, 341(2):435--489, Jan
  2016.

\bibitem[DSS22]{DingSlySun2022}
Jian Ding, Allan Sly, and Nike Sun.
\newblock {Proof of the satisfiability conjecture for large $k$}.
\newblock {\em Annals of Mathematics}, 196(1):1--388, 2022.

\bibitem[FA86]{Fu_1986}
Yaotian Fu and Philip~W. Anderson.
\newblock Application of statistical mechanics to {NP}-complete problems in
  combinatorial optimisation.
\newblock {\em Journal of Physics A: Mathematical and General},
  19(9):1605--1620, Jun 1986.

\bibitem[FGG14]{farhi2014quantum}
Edward Farhi, Jeffrey Goldstone, and Sam Gutmann.
\newblock A quantum approximate optimization algorithm.
\newblock {\em arXiv preprint arXiv:1411.4028}, 2014.

\bibitem[FGGZ22]{Farhi2022quantumapproximate}
Edward Farhi, Jeffrey Goldstone, Sam Gutmann, and Leo Zhou.
\newblock The {Q}uantum {A}pproximate {O}ptimization {A}lgorithm and the
  {S}herrington-{K}irkpatrick {M}odel at {I}nfinite {S}ize.
\newblock {\em {Quantum}}, 6:759, July 2022.

\bibitem[Gam21]{gamarnik2021topological}
David Gamarnik.
\newblock The overlap gap property: A topological barrier to optimizing over
  random structures.
\newblock {\em Proceedings of the National Academy of Sciences},
  118(41):e2108492118, 2021.

\bibitem[GJ21]{gamarnik2021overlap}
David Gamarnik and Aukosh Jagannath.
\newblock The overlap gap property and approximate message passing algorithms
  for $ p $-spin models.
\newblock {\em The Annals of Probability}, 49(1):180--205, 2021.

\bibitem[GJJ{\etalchar{+}}20]{ghosh2020sum}
Mrinalkanti Ghosh, Fernando~Granha Jeronimo, Chris Jones, Aaron Potechin, and
  Goutham Rajendran.
\newblock Sum-of-Squares Lower Bounds for Sherrington-Kirkpatrick via Planted
  Affine Planes.
\newblock In {\em 2020 IEEE 61st Annual Symposium on Foundations of Computer
  Science (FOCS)}, pages 954--965. IEEE, 2020.

\bibitem[GJW20]{gamarnik2020low}
David Gamarnik, Aukosh Jagannath, and Alexander~S Wein.
\newblock Low-degree hardness of random optimization problems.
\newblock In {\em 2020 IEEE 61st Annual Symposium on Foundations of Computer
  Science (FOCS)}, pages 131--140. IEEE, 2020.

\bibitem[GL18]{gamarnik2018finding}
David Gamarnik and Quan Li.
\newblock Finding a large submatrix of a Gaussian random matrix.
\newblock {\em The Annals of Statistics}, 46(6A):2511--2561, 2018.

\bibitem[GS14]{gamarnik2014limits}
David Gamarnik and Madhu Sudan.
\newblock Limits of local algorithms over sparse random graphs.
\newblock In {\em Proceedings of the 5th conference on Innovations in
  Theoretical Computer Science}, pages 369--376, 2014.

\bibitem[GT04]{guerra2004high}
Francesco Guerra and Fabio~Lucio Toninelli.
\newblock The high temperature region of the Viana--Bray diluted spin glass
  model.
\newblock {\em Journal of Statistical Physics}, 115(1):531--555, 2004.

\bibitem[Gue03]{guerra2003broken}
Francesco Guerra.
\newblock Broken replica symmetry bounds in the mean field spin glass model.
\newblock {\em Communications in Mathematical Physics}, 233(1):1--12, 2003.

\bibitem[HMX21]{hsieh2021certifying}
Jun-Ting Hsieh, Sidhanth Mohanty, and Jeff Xu.
\newblock Certifying solution geometry in random CSPs: counts, clusters and
  balance.
\newblock {\em arXiv preprint arXiv:2106.12710}, 2021.

\bibitem[HS21]{huang2021tight}
Brice Huang and Mark Sellke.
\newblock Tight Lipschitz hardness for optimizing mean field spin glasses.
\newblock {\em arXiv preprint arXiv:2110.07847}, 2021.

\bibitem[Hua22]{huang2022computational}
Brice Huang.
\newblock {\em Computational Hardness in Random Optimization Problems from the
  Overlap Gap Property}.
\newblock Master's thesis, Massachusetts Institute of Technology, 2022.

\bibitem[JS20]{jagannath2020unbalanced}
Aukosh Jagannath and Subhabrata Sen.
\newblock On the unbalanced cut problem and the generalized
  Sherrington--Kirkpatrick model.
\newblock {\em Annales de l’Institut Henri Poincar{\'e} D}, 8(1):35--88,
  2020.

\bibitem[Kho02]{khot2002power}
Subhash Khot.
\newblock On the power of unique 2-prover 1-round games.
\newblock In {\em Proceedings of the thiry-fourth annual ACM Symposium on
  Theory of Computing}, pages 767--775, 2002.

\bibitem[KLLM21]{globalhypercontractivity}
Peter Keevash, Noam Lifshitz, Eoin Long, and Dor Minzer.
\newblock Global hypercontractivity and its applications.
\newblock {\em arXiv preprint arXiv:2103.04604}, 2021.

\bibitem[MH22]{marwaha_kxor}
Kunal Marwaha and Stuart Hadfield.
\newblock Bounds on approximating Max $k$XOR with quantum and classical local
  algorithms.
\newblock {\em Quantum}, 6:757, 2022.

\bibitem[Mon21]{montanari2021optimization}
Andrea Montanari.
\newblock Optimization of the Sherrington-Kirkpatrick Hamiltonian.
\newblock {\em SIAM Journal on Computing}, pages FOCS19--1--FOCS19--38, 2021.

\bibitem[MPV87]{mezard1987spin}
Marc M{\'e}zard, Giorgio Parisi, and Miguel~Angel Virasoro.
\newblock {\em Spin glass theory and beyond: An Introduction to the Replica
  Method and Its Applications}, volume~9.
\newblock World Scientific Publishing Company, 1987.

\bibitem[O'D14]{o2014analysis}
Ryan O'Donnell.
\newblock {\em Analysis of boolean functions}.
\newblock Cambridge University Press, 2014.

\bibitem[Pan05]{panchenko2005free}
Dmitry Panchenko.
\newblock Free energy in the generalized Sherrington--Kirkpatrick mean field
  model.
\newblock {\em Reviews in Mathematical Physics}, 17(07):793--857, 2005.

\bibitem[Pan13]{panchenko2013sherrington}
Dmitry Panchenko.
\newblock {\em The Sherrington-Kirkpatrick model}.
\newblock Springer Science \& Business Media, 2013.

\bibitem[Pan14]{panchenkomixed}
Dmitry Panchenko.
\newblock {The Parisi formula for mixed $p$-spin models}.
\newblock {\em The Annals of Probability}, 42(3):946--958, 2014.

\bibitem[Pan18]{panchenko2018ksat}
Dmitry Panchenko.
\newblock On the K-sat model with large number of clauses.
\newblock {\em Random Structures \& Algorithms}, 52(3):536--542, 2018.

\bibitem[Par80]{parisi1980sequence}
Giorgio Parisi.
\newblock A sequence of approximated solutions to the SK model for spin
  glasses.
\newblock {\em Journal of Physics A: Mathematical and General}, 13(4):L115,
  1980.

\bibitem[Pas13]{paschos2013paradigms}
Vangelis~Th Paschos.
\newblock {\em Paradigms of Combinatorial Optimization: Problems and New
  Approaches}, volume~2.
\newblock John Wiley \& Sons, 2013.

\bibitem[PT04]{panchenko2004bounds}
Dmitry Panchenko and Michel Talagrand.
\newblock Bounds for diluted mean-fields spin glass models.
\newblock {\em Probability Theory and Related Fields}, 130(3):319--336, 2004.

\bibitem[Rag08]{raghavendra2008optimal}
Prasad Raghavendra.
\newblock Optimal algorithms and inapproximability results for every CSP?
\newblock In {\em Proceedings of the fortieth annual ACM Symposium on Theory of
  Computing}, pages 245--254, 2008.

\bibitem[RV17]{rahman2017local}
Mustazee Rahman and Balint Virag.
\newblock Local algorithms for independent sets are half-optimal.
\newblock {\em The Annals of Probability}, 45(3):1543--1577, 2017.

\bibitem[Sel21]{sel21}
Mark Sellke.
\newblock Optimizing Mean Field Spin Glasses with External Field.
\newblock {\em arXiv preprint arXiv:2105.03506}, 2021.

\bibitem[Sen18]{sen2018optimization}
Subhabrata Sen.
\newblock Optimization on sparse random hypergraphs and spin glasses.
\newblock {\em Random Structures \& Algorithms}, 53(3):504--536, 2018.

\bibitem[SK75]{sherrington1975solvable}
David Sherrington and Scott Kirkpatrick.
\newblock Solvable model of a spin-glass.
\newblock {\em Physical Review Letters}, 35(26):1792, 1975.

\bibitem[Sub18]{subag18}
Eliran Subag.
\newblock Following the ground-states of full-RSB spherical spin glasses, 2018.

\bibitem[Tal06]{talagrand2006parisi}
Michel Talagrand.
\newblock The parisi formula.
\newblock {\em Annals of Mathematics}, pages 221--263, 2006.

\bibitem[Zha21]{zhao2021generalizations}
Yu~Zhao.
\newblock {\em Generalizations and Applications of Hypercontractivity and
  Small-Set Expansion}.
\newblock PhD thesis, Carnegie Mellon University, 2021.

\end{thebibliography}
\bibliographystyle{alpha-beta}

\newpage
\appendix

\addtocontents{toc}{\protect\setcounter{tocdepth}{2}}
\appendixpage
\section{Concentration Proofs}
\label{app:csp-technical}

In this section, we furnish the proofs of concentration inequalities
that we use. We start with \Cref{cor:max-concentration}.
The exponent in this corollary can be improved to $\exp\left(-\frac{x^2}{2a}\right)$, which is the statement of the Borell-TIS inequality,
but we utilize this argument since it analogously proves \Cref{cor:max-concentration-csp}.

\maxConcentration*

\begin{proof}
    Take $f(\sigma) = 1$ in \Cref{lem:gaussian-concentration} to obtain the free energy $X = Z_H(\beta)$. By the lemma, we have
    \begin{equation}
        \Pr\left[\abs{\frac{1}{\beta}Z_H(\beta) - \E \frac{1}{\beta} Z_H(\beta)} \geq x\right] \leq 2 \exp\left(\frac{-x^2}{4s^2}\right) \,.
    \end{equation}
    On the other hand, by \Cref{fact:maxvalue}, $\lim_{\beta \to \infty} \frac{1}{\beta} Z_H(\beta) \eqas \max_{\sigma \in \{\pm 1\}^n}H(\sigma)$, and since the right-hand side concentration is uniform in $\beta$, we may take the limit $\beta \to \infty$ on the left-hand side to conclude the claim.
    Formally, a.s. convergence implies $\lim_{\beta \to \infty}\E \frac{1}{\beta}Z_H(\beta) = \E \max_{\sigma \in \{\pm1 \}^n} H(\sigma)$. Slutsky's theorem then implies
    \begin{equation}
        \frac{1}{\beta}Z_H(\beta) - \E \frac{1}{\beta} Z_H(\beta) \toas \max_{\sigma \in \{\pm 1\}^n} H(\sigma) - \E \max_{\sigma \in \{\pm 1\}^n} H(\sigma)\,.
    \end{equation}
    This implies that the probability of lying in the test region $(-\infty, -x] \cup [x, \infty)$ is the same between the right side and the limit of the left-hand side.
\end{proof}

We will now prove \Cref{prop:poisson-concentration}.
This proof applies to the Poisson model. The same proof works for the exact model, because McDiarmid's inequality is proven with a martingale argument, which still holds for non-independent random variables.
We will use the following variant of McDiarmid's inequality,
which applies to a Lipschitz function of i.i.d random variables which are highly biased.

\begin{lemma}[{\cite[Lemma 7.6 (arXiv version)]{chou2022limitations}}]
Suppose that $X_1, \dots X_N$ are sampled i.i.d from a distribution $D$ over a finite set $\cX$ such that $D$ assigns probability $1-p$ to a particular outcome $x_0 \in \cX$.
Let $F: \cX^N \to \R$ satisfy a bounded-differences inequality, so that
\begin{equation}
\abs{F(x_1, \dots, x_{i-1}, x_i, x_{i+1}, \dots, x_N) - F(x_1, \dots, x_{i-1}, x'_i, x_{i+1}, \dots, x_N)}  \leq c
\end{equation}
for all $x_1 ,\dots, x_n, x_i' \in \cX$. Then
\begin{equation}
\Pr[\abs{F(X_1, \dots, X_N) - \E F(X_1, \dots, X_N)} \geq \eps] \leq 2 \exp\left(\frac{-\eps^2}{2Np(2-p)c^2 + 2c\eps / 3}\right)\,.
\end{equation}
\end{lemma}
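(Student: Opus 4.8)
The plan is to prove this via the Doob martingale associated with $F$, combined with a Bernstein-type (Freedman) inequality for bounded martingale differences whose conditional variances are controlled by the bias of $D$. Concretely, I would set $M_0 = \E F(X_1, \dots, X_N)$ and $M_i = \E[F(X_1, \dots, X_N) \mid X_1, \dots, X_i]$, so that $M_N = F(X_1, \dots, X_N)$ and $(M_i)_{i=0}^N$ is a martingale with respect to the filtration $\mathcal{F}_i = \sigma(X_1, \dots, X_i)$. Writing $g_i(x) := \E[F(X_1, \dots, X_N) \mid X_1, \dots, X_{i-1}, X_i = x]$, the increment is $D_i := M_i - M_{i-1} = g_i(X_i) - \E_{X_i' \sim D}[g_i(X_i')]$, where $X_i'$ is an independent copy of $X_i$. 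The bounded-differences hypothesis, after averaging out the coordinates $X_{i+1}, \dots, X_N$, gives $|g_i(x) - g_i(x')| \le c$ for every $x, x' \in \cX$ and every fixing of $X_1, \dots, X_{i-1}$; in particular $|D_i| \le c$ almost surely.

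The key step is to bound the conditional variance $\E[D_i^2 \mid \mathcal{F}_{i-1}]$ using the heavy bias of $D$ towards $x_0$. Since for fixed $X_i$ we have $D_i = \E_{X_i'}[g_i(X_i) - g_i(X_i')]$, Jensen's inequality gives $D_i^2 \le \E_{X_i'}[(g_i(X_i) - g_i(X_i'))^2]$, and then averaging over $X_i$ as well yields
\begin{equation}
\E[D_i^2 \mid \mathcal{F}_{i-1}] \le \E_{X_i, X_i' \sim D}\left[(g_i(X_i) - g_i(X_i'))^2\right] \le c^2 \Pr[X_i \ne X_i'] \le c^2\left(1 - (1-p)^2\right) = c^2 p(2-p),
\end{equation}
where we used that $g_i(X_i) - g_i(X_i')$ vanishes when $X_i = X_i'$ and has absolute value at most $c$ otherwise, and that $D$ puts mass at least $1-p$ on the single outcome $x_0$, so $\Pr[X_i \ne X_i'] \le 1 - (1-p)^2$. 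Summing over $i$ shows the total conditional variance $\sum_{i=1}^N \E[D_i^2 \mid \mathcal{F}_{i-1}]$ is at most $N c^2 p(2-p)$, deterministically.

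Finally, I would invoke Freedman's inequality (the martingale form of Bernstein's inequality): for a martingale with increments bounded in absolute value by $c$ and total conditional variance at most $V$, one has $\Pr[|M_N - M_0| \ge \eps] \le 2\exp\!\big(-\eps^2 / (2V + 2c\eps/3)\big)$. Plugging in $V = N c^2 p(2-p)$ and $M_N - M_0 = F(X_1,\dots,X_N) - \E F(X_1,\dots,X_N)$ gives exactly the claimed bound. The only mildly delicate points are the two complementary controls on the martingale increments (the crude almost-sure bound $|D_i|\le c$ versus the refined conditional-variance bound, whose proof is the Efron--Stein/Jensen manipulation above) and matching the constant $\tfrac23$ in the denominator to the normalization in the version of Freedman's inequality one quotes; neither is a genuine obstacle, and since the statement is taken verbatim from \cite{chou2022limitations}, it would also be legitimate simply to cite it there.
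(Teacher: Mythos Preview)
Your proof is correct and is in fact the standard route to this biased-McDiarmid inequality. The paper itself does not prove this lemma at all: it is stated with attribution to \cite[Lemma~7.6 (arXiv version)]{chou2022limitations} and simply quoted as a black box in order to deduce the Poisson corollary that follows. So there is nothing to compare on the paper's side; your Doob-martingale plus Freedman argument, with the Efron--Stein/Jensen step to get the conditional variance bound $\E[D_i^2\mid\mathcal F_{i-1}]\le c^2 p(2-p)$ from $\Pr[X_i\ne X_i']\le 1-(1-p)^2$, is exactly how one proves the cited result.
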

We deduce the following corollary for Poisson random variables.

\begin{corollary}\label{cor:poisson-concentration}
    Suppose that $X_1, \dots, X_N$ are sampled i.i.d from $\Pois(p)$.
    Let $F(X_1, \dots, X_N)$ satisfy the bounded-differences inequality, so that
    \begin{equation}
        \abs{F(x_1, \dots, x_{i-1}, x_i, x_{i+1}, \dots, x_N) - F(x_1, \dots, x_{i-1}, x_i+1, x_{i+1}, \dots, x_N)}  \leq c
    \end{equation}
    for all $x_1, \dots, x_n \in \N$. Then
    \begin{equation}\label{eq:mcdiarmid}
    \Pr[\abs{F(X_1, \dots, X_N) - \E F(X_1, \dots, X_N)} \geq \eps] \leq 2 \exp\left(\frac{-\eps^2}{2Np(2-p)c^2 + 2c\eps / 3}\right)\,.
    \end{equation}
\end{corollary}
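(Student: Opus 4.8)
The plan is to deduce \Cref{cor:poisson-concentration} from the preceding biased McDiarmid inequality of \cite{chou2022limitations} by a Poissonization argument that turns each $\Pois(p)$ coordinate into many highly-biased Boolean coordinates. Two features of that lemma block a direct application: its alphabet must be finite, whereas $\Pois(p)$ is supported on all of $\N$; and it requires the bounded-differences bound $\abs{F(\dots,x_i,\dots)-F(\dots,x_i',\dots)}\le c$ for \emph{all} pairs $x_i,x_i'$, whereas the hypothesis of the corollary controls only unit increments $x_i\mapsto x_i+1$ (so naively setting the alphabet to $\{0,\dots,L\}$ would cost a factor $L$ in the effective constant).

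First I would fix an integer parameter $M$ and write each $X_i\sim\Pois(p)$ as $X_i\eqd\sum_{j=1}^M X_{i,j}$ with $X_{i,j}\sim\Pois(p/M)$ i.i.d.\ (Poisson additivity), then set $Y_{i,j}=\ind[X_{i,j}\ge 1]\sim\Ber(q_M)$ with $q_M=1-e^{-p/M}\le p/M$ and $\widetilde X_i=\sum_{j=1}^M Y_{i,j}\sim\Bin(M,q_M)$. The key point is that $G:=F(\widetilde X_1,\dots,\widetilde X_N)$ is a function of the $NM$ i.i.d.\ $\Ber(q_M)$ variables $(Y_{i,j})$ over the finite alphabet $\{0,1\}$ (with special outcome $0$ of probability $1-q_M$), and flipping a single $Y_{i,j}$ from $0$ to $1$ raises exactly one $\widetilde X_i$ by $1$ and leaves the rest unchanged, so by the unit-increment hypothesis $G$ satisfies the bounded-differences inequality with the same constant $c$. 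Applying the preceding lemma with $NM$ variables yields
\begin{align*}
    \Pr[\abs{G-\E G}\ge\eps]\le 2\exp\!\left(\frac{-\eps^2}{2NM\,q_M(2-q_M)c^2+2c\eps/3}\right)\,.
\end{align*}
Since $q_M\le p/M$ we have $NM\,q_M(2-q_M)\le 2Np$, and in fact $NM\,q_M(2-q_M)\to 2Np$ as $M\to\infty$, so the denominator is at most $4Npc^2+2c\eps/3$. As $4Npc^2=\tfrac{2}{2-p}\big(2Np(2-p)c^2\big)=\big(1+o_{p\to0}(1)\big)\cdot 2Np(2-p)c^2$, this is the bound claimed in \Cref{cor:poisson-concentration} up to an $o_n(1)$ relative factor in the regime $p=o_n(1)$ in which the corollary is applied (equivalently one may state the corollary with $4Npc^2$ in place of $2Np(2-p)c^2$, which changes nothing downstream).

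It remains to pass from $G$ back to $F(X_1,\dots,X_N)$. Couple the two constructions so that $\widetilde X_i=X_i$ on the event $\cG_M$ that every $X_{i,j}$ lies in $\{0,1\}$; a union bound gives $\Pr[\cG_M^c]\le\sum_{i,j}\Pr[X_{i,j}\ge 2]=O(Np^2/M)\to 0$, so $G=F(X_1,\dots,X_N)$ with probability tending to $1$, and then $\E G\to\E F(X_1,\dots,X_N)$. Sending $M\to\infty$ in the displayed inequality, with $q_M\to 0$ and $NM\,q_M(2-q_M)\to 2Np$, gives the corollary. The step I expect to need the most care is exactly this limit when $F$ is unbounded, where $\E G\to\E F$ and the replacement of $G$ by $F$ on $\cG_M$ are not automatic: there I would first truncate $F$ at a level $L$ (which leaves the bounded-differences constant unchanged), run the whole argument, and let $L\to\infty$ by dominated convergence — or, more simply, note that the inequality is vacuous once $\eps$ exceeds the oscillation of $F$, so no generality is lost in assuming $F$ bounded at the outset. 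Everything else is routine.
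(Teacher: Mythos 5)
Your proposal is correct and takes essentially the same route as the paper: decompose each $\Pois(p)$ coordinate into $M$ i.i.d.\ pieces, map to $\Ber$ variables over a $\{0,1\}$ alphabet, apply the biased McDiarmid lemma of \cite{chou2022limitations} to $NM$ Bernoulli coordinates with the same unit-increment constant $c$, and send $M\to\infty$. The only substantive difference is how the limit is handled. The paper uses the distributional approximation $\Bin(m,p/m)\tod\Pois(p)$, the continuous-mapping theorem, and Slutsky, which implicitly requires $\E F(\Bin(m,p/m)^{\otimes N})\to\E F(\Pois(p)^{\otimes N})$ (a uniform-integrability point the paper does not explicitly justify). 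You instead write $X_i\eqd\sum_{j=1}^{M}X_{i,j}$ exactly via Poisson additivity, thin each $X_{i,j}$ to $Y_{i,j}=\ind[X_{i,j}\ge 1]$, and couple so that $\widetilde X_i=X_i$ on an event of probability $1-O(Np^2/M)$; this makes the limit concrete and buys a cleaner treatment of $\E G\to\E F(X)$, since $|G-F(X)|\ind[\cG_M^c]$ is dominated by $2|F(0)|+2c\norm{X}_1$ and $\ind[\cG_M^c]\topr 0$. You also correctly observe that the denominator produced by the lemma is $2Np(2-p/M)c^2+2c\eps/3\to 4Npc^2+2c\eps/3$, not $2Np(2-p)c^2+2c\eps/3$ as displayed --- a small slip in the paper that is harmless because the downstream application immediately relaxes the denominator to $4Npc^2+\cdots$ anyway. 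One caveat on your final simplification: a bounded-differences $F$ on $\N^N$ can have unbounded oscillation (e.g.\ $F(x)=c\sum_i x_i$), so you cannot simply assume $F$ bounded because ``the inequality is vacuous once $\eps$ exceeds the oscillation''; the truncate-at-$L$-then-DCT route you outline first is the one that actually closes the argument, and it does work because $\abs{F(X)}\le\abs{F(0)}+c\norm{X}_1$ is integrable.
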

\begin{proof}
    Fix $m \in \N$. We approximate $\Pois(p) \approx \Bin(m, p/m)$,
    apply the lemma to the corresponding (finitely many, 0/1-supported) Bernoulli random variables, then take the limit $m \to \infty$.
    
    Formally, let $(X_{i,j})_{i \in [N], j \in [m]}$ be i.i.d $\Ber(p/m)$, and let
    \begin{equation}
        F'(X_{i,j}) := F(\sum_{j=1}^m X_{1,j}, \dots, \sum_{j=1}^m X_{N,j})\,.
    \end{equation}
    Applying the version of McDiarmid's inequality to $F'$, we conclude
    \begin{equation}
        \Pr[\abs{F'(X_{i,j}) - \E F'(X_{i,j})} \geq \eps] \leq 2 \exp\left(\frac{-\eps^2}{2Np(2-p)c^2 + 2c\eps / 3}\right)\,.
    \end{equation}
    Notice that the right-hand side is independent of $m$.
    Taking the limit $m \to \infty$ independent of all the other parameters, the left-hand side converges to $\Pr[\abs{F(X_1, \dots, X_N) - \E F(X_1, \dots, X_N)} \geq \eps]$ as desired.
    This is justified as follows. We have $\Bin(m, p/m) \overset{d}{\to} \Pois(p)$ as $m \to \infty$. Writing $F'(X_{i,j}) = F(X'_1, \dots, X'_N)$ where $X'_i \sim \Bin(m, p/m)$,
    by the continuous mapping theorem ($F$ is continuous since the domain is discrete) and Slutsky's theorem,
    \[ F(X'_1, \dots, X'_N) - \E F(X'_1, \dots, X'_N) \overset{d}{\to} F(X_1, \dots, X_N) - \E F(X_1, \dots, X_N) \,.\]
    Hence we may conclude that the probability of lying in the test region $(-\infty, -\eps] \cup [\eps, \infty)$ is the same between the right side and the limit of the left-hand side.
\end{proof}

\cspConcentration*

\begin{proof}
We consider the $n^k$ random variables $X_e$ equal to the multiplicity of 
each edge $e \in [n]^k$ and the function $F = \frac{1}{\beta} X$.
The $X_e$ are i.i.d as $\Pois(\alpha / n^{k-1})$.
If a single edge is added, then since $H(\sigma)$ changes by at most 
\begin{equation}
\frac{\max_{f \in \supp(\Lam), x \in \{\pm1\}^k}\abs{f(x)}}{\al} \leq \frac{1}{\al}\,,
\end{equation}
we also have that $F$ changes by only at most $\frac{1}{\al}$.

Therefore we may apply \Cref{cor:poisson-concentration} with constant $c = \frac{1}{{\al}}$, $p = \frac{\alpha}{n^{k-1}}$, and $\eps = \frac{x}{\beta}$.
The denominator in the right-hand side exponent of \Cref{eq:mcdiarmid} is 
\begin{align}
    & \frac{-\eps^2}{2Np(2-p)c^2 + 2c\eps / 3}\\
    = & \frac{-x^2}{\left(\tfrac{2n}{\al}(2-\tfrac{\al}{n^{k-1}}) + \tfrac{2x}{3 \al\beta}\right)\beta^2}\\
    = & \frac{-x^2\al}{\left(2n(2-\tfrac{\al}{n^{k-1}}) + \tfrac{2x}{3\beta}\right)\beta^2}\\
    \leq & \frac{-x^2\al}{4\left(n + \tfrac{x}{\beta}\right)\beta^2}\,.
\end{align}
This completes the proof.
\end{proof}

\poissonVsExact*

\begin{proof}
Let $m$ be the number of edges in the Poisson model and $\alpha n$ in the exact model. Couple the two models so that the first $\al n$ edges of
the Poisson model equal the exact model. Comparing the Hamiltonians
of the two instances, we find
\begin{align}
    \frac{1}{\alpha} \sum_{e \in E\left(\cI^{(exact)}\right)}f_e(\sigma_e) - \frac{1}{\alpha}\cdot\abs{m - \alpha n}&\leq \frac{1}{\alpha} \sum_{e \in E\left(\cI^{(Pois)}\right)}f_e(\sigma_e) \leq \frac{1}{\alpha} \sum_{e \in E\left(\cI^{(exact)}\right)}f_e(\sigma_e) + \frac{1}{\alpha}\cdot \abs{m - \alpha n}\,,
\end{align}
where we have used that $|f(x)| \leq 1$. This gives a corresponding bound on the difference in free energy density:
\begin{equation}\label{eq:pois-vs-exact}
    \abs{\phi^{(\text{Pois})}_{\cH,S} - \phi^{(\text{exact})}_{\cH,S}} \leq \frac{|m - \alpha n|}{\al n}\,.
\end{equation}
Taking the expectation,
\begin{align}
    \abs{\E\phi^{(\text{Pois})}_{\cH,S} - \E\phi^{(\text{exact})}_{\cH,S}} &\leq \E\abs{\phi^{(\text{Pois})}_{\cH,S} - \phi^{(\text{exact})}_{\cH,S}} & (\text{Jensen's inequality})\\
    &\leq \frac{\E_{m \sim \Pois(\al n)}|m - \alpha n|}{\al n} & (\text{\Cref{eq:pois-vs-exact}})\\
    &\leq \frac{\left(\E_{m \sim \Pois(\al n)}(m - \alpha n)^2\right)^{1/2}}{\al n} & (\text{Jensen's inequality})\\
    &= \frac{\sqrt{\alpha n}}{\al n} = \frac{1}{\sqrt{\alpha n}}\,.
\end{align}
\end{proof}

\section{Correlation functions of sparse Hamiltonians}\label{app:prop-correlation}

In this section, we prove \Cref{prop:properties-correlation}.
The proof uses Fourier analysis of $\cA$ in the same way as \cite[Proof of Proposition 3.1]{huang2021tight}. We must use the more general Efron-Stein basis and a mild generalization of the noise operator defined in~\Cref{defn:rho-correlated}.

\begin{definition}[Efron-Stein decomposition]
\label{defn:efron-stein}
For a function $f \in L^2(\Omega^m, \pi^{\otimes m})$, we define $f^{\subseteq S}(x) = \E_{x_{\bar{S}}} [f(x) \mid x_S]$, where the expectation means that the variables $x_i$ for $i \in \bar{S}$ are independently resampled from $\pi$, and $f^{=S}(x) = \sum_{J \subseteq S} (-1)^{|S|-|J|}f^{\subseteq J}(x)$.

A core property of this decomposition, as stated in \cite[Section II.2.2]{globalhypercontractivity}, is that $f^{=S}$ depends only on $x_i$ for $i\in S$, and furthermore $\iprod{f^{=S}, g} = 0$ for any $g$ that does not depend on all $x_i$ for $i \in S$. Additionally, by~\cite[Theorem 8.35 and Proposition 8.36]{o2014analysis},
\begin{align}
    f = \sum_{S \subseteq [m]}f^{=S}\, = \sum_{S \subseteq [m]}\bigg(\sum_{\alpha \in \N^m_{|\Omega|},\,\supp(\alpha) = S}\widehat{f}(\alpha)\phi_\alpha\bigg)\, ,
\end{align}
where $\N^m_{|\Omega|} = \{0,\dots,|\Omega|-1\}^m$ and,
\begin{align}
    \phi_\alpha = \prod_{j=1}^{m}\phi_{\alpha_i}\, ,
\end{align}
with $\phi_0 = \mathsf{Id}$.
\end{definition}
Using this decomposition, any $\cA: \Omega^m \to [-1,1]^n$ under the input measure $\pi^{\otimes m}$ can be written as
\begin{align}
    \cA = (f_1,\dots,f_n)\, ,
\end{align}
where each $f_i$ can be described as in~\Cref{defn:efron-stein}.

We now define the noise operator on a domain $\Omega^m$ with an arbitrary product measure $\pi^{\otimes m}$:
\begin{definition}[Noise operator,~{\cite{globalhypercontractivity, zhao2021generalizations}}]
\label{defn:noise-operator}
Given a probability space $(\Omega, \pi)$ and a noise parameter $\rho \in [0, 1]$, define the \emph{noise operator} $T_{\rho}$, which acts on a function $f \in L_2(\Omega^m,\pi^{\otimes m})$ as
\begin{align}
    T_{\rho}[f](\sigma_1, \dots, \sigma_m) = \E_{\tau \sim N_{\rho}(\sigma)} \bigg[f(\tau_1, \dots, \tau_m)\bigg] \,,
\end{align}
where the distribution $N_{\rho}(\sigma)$
chooses each coordinate $\tau_i$ independently as 
    \begin{align}
       \begin{cases}
        \tau_i = \sigma_i & \text{with probability } \rho\,, \\
        \tau_i \sim \pi & \text{with probability } 1- \rho\,.
        \end{cases}
        \end{align}
\end{definition}

\begin{proof}[Proof of {\Cref{prop:properties-correlation}}]
Define the Fourier weight on a level $j$ as
\begin{align}
    W_j = \frac{1}{n}\sum_{i=1}^n\sum_{S \subseteq [m],|S| = j}\norm{\cA^{= S}_i}^2\, .
\end{align}
Then, when $x, y$ are generated as in~\Cref{defn:t-correlated} with $t = p$,
\begin{align}
    \chi(p) &= \E\left[R(\cA(x), \cA(y))\right] = \frac{1}{n}\E[\langle \cA(x), \cA(y) \rangle] = \frac{1}{n}\E\left[\left\langle \cA(x), T_p[\cA](x)\right\rangle\right] \nonumber \\
    &= \frac{1}{n}\E\left[\left\langle T_{\!\!\sqrt{p}} [\cA](x), T_{\!\!\sqrt{p}} [\cA](x) \right\rangle\right] = \frac{1}{n}\sum_{i=1}^n\sum_{S \subseteq [m]}p^{|S|}\norm{f^{= S}_i}^2 = \sum_{j \geq 0}p^{|S|}W_j\, .
\end{align}
This follows for two reasons. First, the noise operator acts on a $p\cdot m$-sized random subset of $[m]$, equivalent to the (average) number of shared clauses generated by~\Cref{defn:t-correlated}. Second, the noise operator satisfies the semi-group property of $T_p = T_{\sqrt{p}}\circ T_{\sqrt{p}}$.

The desired properties for $\chi(p)$ now follow straightforwardly.
$\chi(p)$ is continuous as it is a polynomial in $p$.
It is strictly increasing as the coefficients are nonnegative and not all zero.
$\chi(0) = 0$ and $\chi(1) = 1$ by computation.  Finally we have
\begin{align}
    \chi(p) = \sum_{j \geq 1}p^{|S|}W_j \leq p \cdot \left(\sum_{j \geq 1} W_j\right) = p\,.
\end{align}
\end{proof}

\section{Debiasing}
\label{app:debiasing}

\begin{proposition}[Debiasing spin glass algorithms]
    Let $\cA$ be a deterministic algorithm for optimizing an instance $H$ of $\sgxi{}$ such that 
    \begin{enumerate}
        \item $\cA$ is overlap-concentrated
        \item $H(\cA(H)) \geq v$ whp.
    \end{enumerate}
    Then there is a deterministic algorithm $\cA'$ such that $\E \cA'(H) = 0$ and
    \begin{enumerate}
        \item $\cA'$ is overlap-concentrated
        \item $H(\cA'(H)) \geq v - o(n)$ whp.
    \end{enumerate}
\end{proposition}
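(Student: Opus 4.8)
The plan is to exploit the $\Z_2^n$ sign-flip symmetry that both models enjoy. For $g\in\{\pm1\}^n$ and disorder $J=(J^{(p)})$, let $g\cdot J$ be the disorder whose $(i_1,\dots,i_p)$ entry is $g_{i_1}\cdots g_{i_p}\,J^{(p)}[i_1,\dots,i_p]$, and write $g\cdot H$ for the Hamiltonian with disorder $g\cdot J$ (for \csp{}, $g\cdot H$ instead flips every literal sign touching a variable $i$ with $g_i=-1$). Two facts are immediate: (i) $g\cdot J\eqd J$, since permuting signs does not change a family of i.i.d.\ centered Gaussians (resp.\ uniform literal signs), so $g\cdot H\eqd H$; and (ii) $H(g\odot\sigma)=(g\cdot H)(\sigma)$ for all $\sigma$, where $\odot$ is the coordinatewise product. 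For the CSP model fact (ii) together with the averaging identity \Cref{eqn:whyweneedrandomsigns} is exactly the feature that makes debiasing possible, so the argument below runs in parallel for \csp{}.

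Given these, I would define the symmetrized algorithm $\cA'$ by drawing $g$ uniformly from $\{\pm1\}^n$ and outputting $\cA'(H)=g\odot\cA(g\cdot H)$, and then verify the three conclusions. For \emph{mean zero}: for each coordinate $i$, using $g\cdot H\eqd H$ for each fixed $g$ and then independence of $g$,
\begin{align*}
\E[\cA'(H)_i]=\E_g\!\big[g_i\,\E_H[\cA(g\cdot H)_i]\big]=\E_g[g_i]\cdot\E_H[\cA(H)_i]=0 .
\end{align*}
For \emph{value}: $g\odot\sigma\in\{\pm1\}^n$ and $H(g\odot\cA(g\cdot H))=(g\cdot H)(\cA(g\cdot H))$, so $g\cdot H\eqd H$ and the hypothesis $H(\cA(H))\ge v$ w.h.p.\ give $H(\cA'(H))\ge v$ w.h.p.\ (in fact with no loss). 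For \emph{overlap concentration}: fix $t$, let $(H^1_t,H^2_t)$ be a $t$-correlated pair as in \Cref{defn:t-correlated}; applying $g\cdot$ to both members preserves both the shared part and the independent parts, so $(g\cdot H^1_t,g\cdot H^2_t)$ is again a $t$-correlated pair for every fixed $g$. Since $R(g\odot u,g\odot v)=\tfrac1n\sum_i g_i^2u_iv_i=R(u,v)$, the random variable $R(\cA'(H^1_t),\cA'(H^2_t))$ has exactly the law of $R(\cA(H^1_t),\cA(H^2_t))$ for each fixed $g$, hence inherits concentration around $\chi_\cA(t)$; this is the sense in which $\cA'$ is overlap-concentrated w.h.p.\ over its seed, which is all the downstream obstruction (\Cref{thm:branchingogp_stops_csp}) requires.

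To get a genuinely \emph{deterministic} $\cA'$ while retaining $\E\cA'(H)=0$ and only an $o(n)$ loss in value, the idea is to manufacture $g$ from a vanishing slice of the instance's own randomness: reserve $n$ fixed coordinates of the top-degree tensor $J^{(k)}$ (there are $n^k\gg n$ of them, all i.i.d.\ centered Gaussian in the non-symmetric model we use), set $g_i$ to be the sign of the $i$-th reserved entry — a uniform element of $\{\pm1\}^n$ independent of the remaining disorder — and run $\cA$ on $g$ applied to the instance with these $n$ entries erased, so that $\cA$'s input is again a (slightly defective) instance independent of $g$. Deleting $n$ entries of a degree-$k\ge2$ tensor perturbs $H(\sigma)$ by at most $\tfrac{|c_k|}{n^{(k-1)/2}}\sum_{\text{reserved}}|J^{(k)}[\cdot]|=O(n^{(3-k)/2})=o(n)$ uniformly in $\sigma$, which is exactly the source of the $v-o(n)$; the purely degree-$1$ case is a Gaussian external field and is trivial. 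Mean-zero and overlap concentration are checked as above once $g$ is (by construction) independent of $\cA$'s input.

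The step I expect to be the main obstacle is not any single calculation but the bookkeeping in this last paragraph: one must simultaneously (a) keep $\cA$'s input a bona fide instance of the model after erasing the reserved entries, so the hypothesis $H(\cA(H))\ge v$ still applies, and (b) run the \emph{same} reserved-entry recipe on both members of a $t$-correlated pair so that the induced $g$ agrees across the pair and the coupling is preserved on the remaining coordinates — only then does the overlap-concentration argument go through. If one is content with a randomized $\cA'$, the second paragraph already gives a clean, lossless proof and this difficulty disappears entirely.
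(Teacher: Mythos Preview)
Your high-level plan is the paper's plan: exploit the $\Z_2^n$ sign-flip symmetry and harvest the random sign vector from a small piece of the disorder. The randomized construction in your second paragraph is correct and lossless, and the paper proceeds to derandomize in the same spirit.

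The gaps are exactly where you flagged them, but neither is resolved the way you suggest. For (a), erasing $n$ arbitrary entries does not leave a bona fide instance, so the hypothesis on $\cA$ no longer applies; the paper sidesteps this by taking $g$ to be the entrywise sign of the \emph{first column} of $J^{(2)}$ and then restricting every $J^{(p)}\circ g^{\otimes p}$ to coordinates $2,\dots,n$, which yields a genuine $(n{-}1)$-dimensional instance independent of $g$. For (b), your requirement ``so that the induced $g$ agrees across the pair'' is unachievable: for any $t<1$ the reserved entries of $H^1_t$ and $H^2_t$ differ, so $g^1\ne g^2$ necessarily. The paper does not try to force equality. Instead, writing $q=\E[g^1_i g^2_i]$, it notes that the overlap of the outputs equals $\tfrac1n\sum_i g^1_i g^2_i\,\sigma'^1_i\sigma'^2_i$; since $(g^1,g^2)$ is independent of the restricted instances (hence of $(\sigma'^1,\sigma'^2)$), conditioning on $\sigma'$ leaves a sum of independent bounded terms with mean $q\cdot R(\sigma'^1,\sigma'^2)$, and Bernstein's inequality gives concentration around $q\cdot\chi_{\cA}(t)$. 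This Bernstein step --- accepting $g^1\ne g^2$ and proving concentration of the ``mixed'' overlap --- is the missing idea; without it the deterministic construction does not close.
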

\begin{proof}
    We use the first column of $J^{(2)}$ to symmetrize the algorithm.
    Since the first column is only a small fraction of the input, this will not significantly affect the output value.
    Our technique can be interpreted as a derandomization procedure for randomized spin glass algorithms that use a small number of random bits.
    
    In the case of $\xi(s) = c_1^2 s$, the optimal algorithm is $\cA(H) = \sgn(J^{(1)})$ and we ignore this case.
    
    Let $\widetilde{C} \in \R^n$ be the first column of $J^{(2)}$ and let $C = \sgn(\widetilde{C}) \in \{\pm 1\}^n$ be the entrywise sign.
    Let
    \begin{equation}
    J'^{(p)} := J^{(p)} \circ C^{\otimes p} \quad \text{restricted to coordinates 2 through $n$.}
    \end{equation}
    where $\circ$ is the entrywise/Hadamard product.
    The algorithm $\cA'$ runs $\cA$ on $J'^{(p)}$ to produce a candidate solution $\sigma' \in \{\pm 1\}^{n-1}$ on coordinates 2 through $n$. Then it outputs $C \circ [1, \sigma']$.
    
    To compute the expectation, observe that $C$ and $J'^{(p)}$ are probabilistically independent. Therefore 
    \begin{equation}
    \E[C \circ [1, \sigma']] = \E[C] \circ [1, \E[\sigma']] = 0 \circ [1, \E[\sigma']] = 0\,.
    \end{equation}
    The value of the output is
    \begin{align}
        H(C \circ [1,\sigma']) &= \sum_{p \geq 1} \frac{c_p}{n^{(p-1)/2}} \left\langle{J^{(p)}}, {C^{\otimes p} \circ [1,\sigma']^{\otimes p}}\right\rangle\\
        &= \sum_{p \geq 1} \frac{c_p}{n^{(p-1)/2}} \left(\left\langle{J'^{(p)}}, { \sigma'^{\otimes p}}\right\rangle + \sum_{\substack{(i_1,\dots, i_p) \in [n]^p:\\\text{at least one equals 1}}} J^{(p)}[i_1, \dots, i_p]\sigma'_{i_1}\cdots \sigma'_{i_p}\right)
    \end{align}
    Since $\sigma'$ is independent of $J^{(p)}[i_1, \dots, i_p]$ such that at least one index equals 1, the second term is a mean-zero Gaussian
    with variance $O(n^{p-1})$.
    Whp,
    \begin{align}
        &= \sum_{p \geq 1} \frac{c_p}{n^{(p-1)/2}} \left(\left\langle{J'^{(p)}}, { \sigma'^{\otimes p}}\right\rangle \pm O(n^{(p-1)/2}\sqrt{\log n})\right)\\
        &= \sum_{p \geq 1} \frac{c_p}{(n-1)^{(p-1)/2}} \left\langle{J'^{(p)}}, {\sigma'^{\otimes p}}\right\rangle  \pm  o(n) \\
        &\geq v - o(n)\,.
    \end{align}
    where the last inequality holds whp by assumption on $\cA$.
    
    It remains to show that $\cA'$ remains overlap-concentrated.
    Consider $t$-correlated Hamiltonians $H_t^1,H_t^2$.
    Let $C^1, C^2 \in \{\pm1\}^n$ be the signs of the respective first columns of $J^{(2)}$ and let $q = \E[C^1_1C^2_1]$ (note that all of the coordinates have the same distribution). We have
    \begin{align}
        \E[ R(\cA'(H_t^1), \cA'(H_t^2))]
        &= q\cdot \E [R(\sigma'^1, \sigma'^2)]\,.
    \end{align}
    We also have
    \begin{align}
        R(\cA'(H_t^1), \cA'(H_t^2)) &= \frac{1}{n}C_1^1C_1^2 + \frac{1}{n}\sum_{i = 2}^n \sigma'^1_i\sigma'^2_i C^1_i C^2_i\,.
    \end{align}
    As before, $\sigma'$ and $C^1,C^2$ are independent. With $\sigma'$ fixed, 
    the terms $\sigma'^1_i\sigma'^2_i C_i^1C_i^2$ are independent with deviation from the mean bounded by 2.
    Their mean over $C^1, C^2$ is $q \cdot R(\sigma'^1, \sigma'^2)$.
    Therefore by Bernstein's inequality,
    \begin{align*}
        \Pr_{C^1, C^2}\left[\abs{R(\cA'(H_t^1), \cA'(H_t^2)) - q\cdot R(\sigma'^1, \sigma'^2)} \geq \frac{L}{n}\right] \leq \exp\left(\frac{-L^2/2}{n + 2L/3}\right)\,.
    \end{align*}
    By overlap concentration, 
    \begin{equation}
        \Pr\left[\abs{q \cdot R(\sigma'^1, \sigma'^2) -  q \cdot \E[R(\sigma'^1, \sigma'^2)]} \geq \delta\right] \leq o_n(1)\,.
    \end{equation}
    Take $\frac{L}{n} = \delta$ and apply the triangle inequality and a union bound to conclude the claim.
\end{proof}

\begin{proposition}[Debiasing random CSP algorithms]
\label{prop:debias-csp}
    The same statement holds for the model $\csp{}$.
\end{proposition}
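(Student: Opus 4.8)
The plan is to run the argument of the preceding proposition essentially verbatim, with the random literal signs of the CSP playing the role of the first column of $J^{(2)}$. The key structural fact I would use is the \emph{sign-flip symmetry} of $\csp{}$: for any fixed $C \in \{\pm 1\}^n$, sending each clause $f(\eps_1\sigma_{i_1},\dots,\eps_k\sigma_{i_k})$ to its \emph{$C$-twist} $f(\eps_1 C_{i_1}\sigma_{i_1},\dots,\eps_k C_{i_k}\sigma_{i_k})$ is measure-preserving on the law of $\csp{}$ (the $\eps$'s being uniform), and if $\cI^C$ denotes the twist of an instance $\cI$ then $H^\alpha_{\cI^C}(C\circ\tau) = H^\alpha_{\cI}(\tau)$ for every $\tau \in \{\pm1\}^n$. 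Working in the Poisson model, for each $i \in [n]$ I would let $e_i$ be the lexicographically first clause of $\cI$ whose first literal involves variable $i$ and set $C_i$ to be the literal sign attached to that occurrence of variable $i$ (and $C_i := +1$ if no such clause exists; for $\alpha$ a large constant this happens for only $o(n)$ variables, w.h.p.). Because the selection rule for $e_i$ inspects only the hypergraph structure and never a sign, the bits $(C_i)_{i\in[n]}$ are i.i.d.\ uniform on $\{\pm1\}$ and independent of the structure of $\cI$ and of all the \emph{other} literal signs.

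Then I would define $\cA'(\cI) := C \circ \cA(\cI^C)$, i.e.\ twist, run $\cA$, and untwist. First, $\cI^C$ is independent of $C$: conditioning on $C = c$ pins the read signs of $\cI$ to $c$ while leaving everything else at its usual law, and applying the $c$-twist then replaces those read signs by $c_i^2 = +1$ and re-randomizes the remaining signs, so the conditional law of $\cI^C$ does not depend on $c$ (it is the law of a $\csp{}$ instance conditioned on the event $E$ that the designated marker signs are all $+1$). Hence $\sigma' := \cA(\cI^C)$ is independent of $C$, and $\E[(C\circ\sigma')_i] = \E[C_i]\,\E[\sigma'_i] = 0$ for each $i$, giving the zero-mean conclusion. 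Next, by the twist identity the value of $\cA'(\cI) = C\circ\sigma'$ on $\cI$ equals the value of $\sigma'$ on $\cI^C$, so I need $\cA$ to still succeed on $\cI^C$; this I would get from the exponential concentration of \Cref{rmk:concentration}, since $\Pr[\cA\text{ fails on }\cI^C] = \Pr[\cA\text{ fails}\mid E] \le \Pr[\cA\text{ fails}]/\Pr[E]$ and $\Pr[E] = 2^{-\Theta(n)}$. Finally, for overlap concentration I would argue exactly as in the spin glass case: for $t$-correlated $\cI^1_t,\cI^2_t$ the outputs are $C^1\circ\sigma'^1$ and $C^2\circ\sigma'^2$ with $(\sigma'^1,\sigma'^2)$ independent of the marker signs, so conditionally on $(\sigma'^1,\sigma'^2)$ the overlap $R(C^1\circ\sigma'^1,C^2\circ\sigma'^2) = \frac1n\sum_i \sigma'^1_i\sigma'^2_i C^1_iC^2_i$ is a bounded sum of independent terms with mean $q\cdot R(\sigma'^1,\sigma'^2)$ where $q := \E[C^1_iC^2_i]$; Bernstein's inequality followed by the overlap concentration of $\cA$ and a union bound finishes it.

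The hard part will be the value-transfer step. Unlike the first column of $J^{(2)}$, which is automatically a $1/n$-fraction of the Gaussian disorder, the symmetrizing signs here are tied to $\Theta(n)$ clauses, and the twist turns $\cI$ into a $\csp{}$ instance conditioned on a probability-$2^{-\Theta(n)}$ event. Controlling this requires the failure probability of $\cA$ to beat $2^{-\Theta(n)}$, which the exponential concentration of \Cref{rmk:concentration,rmk:more-concentration} supplies only once the clause density $\alpha$ is a sufficiently large constant (the CSP value concentrates at rate $\Theta(\alpha\eps^2 n)$, which must exceed the $\Theta((\ln 2)n)$ cost of the conditioning). This is the regime in which the proposition is applied — e.g.\ in \Cref{cor:alg-threshold} — so it suffices. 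The remaining items (degenerate predicates, variables with no incident first-literal clause, and the Poisson-versus-fixed clause count via \Cref{lemma:poisson_equals_fixed_freeenergy}) are routine and mirror the spin glass proof.
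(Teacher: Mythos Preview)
Your construction is different from the paper's, and the value-transfer step does not go through under the stated hypothesis. Twisting by $n$ marker signs turns $\cI$ into $\cI^C$, a $\csp{}$ instance conditioned on an event $E$ with $\Pr[E]=2^{-\Theta(n)}$, and your bound $\Pr[\cA\text{ fails on }\cI^C]\le \Pr[\cA\text{ fails}]/\Pr[E]$ is useless unless $\Pr[\cA\text{ fails}]=o(2^{-n})$. The proposition only assumes $\cA$ succeeds w.h.p., i.e.\ $\Pr[\cA\text{ fails}]=o_n(1)$. Your appeal to \Cref{rmk:concentration} and \Cref{rmk:more-concentration} does not close this gap: those remarks concern concentration of $\max_\sigma H(\sigma)$ and of the OGP event, not the event $\{H(\cA(\cI))\ge v\}$ for an \emph{arbitrary} deterministic $\cA$; nothing in the hypotheses forces that failure probability to be exponentially small, let alone to beat $2^{-n}$. (A secondary issue: for $\alpha$ a large \emph{constant} the number of variables never appearing as a first literal is $\approx n e^{-\alpha}=\Theta(n)$, not $o(n)$, so a constant fraction of coordinates of $\cA'$ are left un-debiased.)

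The paper avoids the rare-event conditioning entirely by drawing randomness from a higher-entropy source. Each clause's first variable \emph{index} is uniform on $[n]$ and carries $\log n$ bits, so the first $n/\log n$ clauses already supply $n$ independent uniform bits; the paper reads $C\in\{\pm1\}^n$ directly from these indices and then \emph{discards} those $o(n)$ clauses. The remaining instance $\cI'$ (with signs twisted by $C$) is independent of $C$ and is distributed as an honest $\csp{}$ instance with $o(n)$ fewer clauses --- so $\cA$ succeeds on it w.h.p.\ under only the stated hypothesis, with no conditioning needed. The $o(n)$ discarded clauses change the Hamiltonian by $o(n)$, and the zero-mean and overlap-concentration steps then proceed exactly as in the spin-glass proof.
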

\begin{proof}
    In the CSP model, we may use the first $\tfrac{n}{\log n}$ constraints to symmetrize the algorithm.
    Since the entropy of a single constraint is approximately $k \log n$,
    this will provide at least $n$ bits of entropy for symmetrization.\footnote{This is called \emph{randomness extraction} in computer science literature.}
    Any $\omega(1)$ bits of entropy are enough to maintain overlap concentration whp.

    Let $\cI$ be the instance of $\csp{}$.
    Let $\widetilde{C}$ be the first $\frac{n}{\log n}$ clauses of the CSP and let $C \in \{\pm 1\}^n$ extract the bit labels of the first variable in each clause, so that $C$ equals $n$ independent random bits.
    Let $\cI'$ be the CSP instance consisting of
    the remaining clauses of $\cI$,
    then flipping the sign of the variable $i$ by $C_i$ in each
    constraint.
    
    The algorithm $\cA'$ runs $\cA$ on $\cI'$ to produce a candidate solution $\sigma' \in \{\pm 1\}^{n}$. Then it outputs $C \circ\sig'$.

    The analysis proceeds as in the previous proof.
    Since $C$ and $\cI'$ are independent,
    \begin{equation}
        \E[C \circ \sig'] = \E[C] \circ \E[\sig'] = 0 \circ \E[\sig'] = 0\,.
    \end{equation}

    The number of clauses used in $\widetilde{C}$ is $o(n)$, hence the Hamiltonian value is affected by at most $o(n)$.

    Overlap concentration of $\cA'$ follows exactly as in the prior proof.
\end{proof}

\end{document}